\documentclass[11pt]{article}

\usepackage{algorithmic}
\usepackage{amssymb,amsmath}
\usepackage{graphicx}
\usepackage{epsfig}
\usepackage{multirow}
\usepackage{color}
\usepackage{algorithm}
\usepackage[utf8]{inputenc}

\setlength{\textwidth}{6.3in}
\setlength{\textheight}{8.6in} \setlength{\oddsidemargin}{0in}
\setlength{\evensidemargin}{0in} \setlength{\topmargin}{0.2in}
\setlength{\headheight}{0in} \setlength{\headsep}{0.2in}
\setlength{\footskip}{0.4in}

\newtheorem{theorem}{Theorem}[section]

\newtheorem{lemma}[theorem]{Lemma}
\newtheorem{claim}[theorem]{Claim}

\newtheorem{definition}{Definition}[section]

 \newtheorem{observation}[theorem]{Observation}
                      {}

                      {}

\def\squarebox#1{\hbox to #1{\hfill\vbox to #1{\vfill}}}
\newcommand{\qed}{\hspace*{\fill}
\vbox{\hrule\hbox{\vrule\squarebox{.667em}\vrule}\hrule}\smallskip}

\def\eod{\vrule height 6pt width 5pt depth 0pt}
\newenvironment{proof}{\noindent {\bf Proof:} \hspace{.677em}}
                      {\hspace*{\fill}{\eod}}
\DeclareMathOperator*{\argmin}{arg\min}
\DeclareMathOperator*{\argmax}{arg\max}

\newcommand{\ceil}[1]{\left\lceil #1 \right\rceil}
\newcommand{\floor}[1]{\left\lfloor #1 \right\rfloor}


\newcommand{\R}{\mbox{I$\!$R}}
\newcommand{\N}{\mbox{I$\!$N}}

\newcommand{\remove}[1]{}

\begin{document}
\title{The Efficiency of Best-Response Dynamics
\footnote{A preliminary version appears in the proc. of the 10th International Symposium on Algorithmic Game Theory (SAGT), September 2017.}}

\author{Michal Feldman
\thanks{Tel-aviv University, Israel, Email: michal.feldman@cs.tau.ac.il. The work of the first two authors was partially supported by the European Research Council under the European Union's Seventh Framework Programme (FP7/2007-2013) / ERC grant agreement number 337122.}
\and Yuval Snappir \thanks{Tel-aviv University, Israel, {Email: yuvalsnappir@mail.tau.ac.il}}
\and Tami Tamir \thanks{ The Interdisciplinary Center, Herzliya, Israel. {Email: tami@idc.ac.il}}}
\date{}
\maketitle

\begin{abstract}
Best response (BR) dynamics is a natural method by which players proceed toward a pure Nash equilibrium via a local search method.
The quality of the equilibrium reached may depend heavily on the order by which players are chosen to perform their best response moves.
A {\em deviator rule} $S$ is a method for selecting the next deviating player.
We provide a measure for quantifying the performance of different deviator rules.
The {\em inefficiency} of a deviator rule $S$ is the maximum ratio, over all initial profiles $p$, between the social cost of the worst equilibrium reachable by $S$ from $p$ and the social cost of the best equilibrium reachable from $p$.
This inefficiency always lies between $1$ and the {\em price of anarchy}.

We study the inefficiency of various deviator rules in network formation games and job scheduling games (both are congestion games, where BR dynamics always converges to a pure NE).
For some classes of games, we compute optimal deviator rules.
Furthermore, we define and study a new class of deviator rules, called {\em local} deviator rules. Such rules choose the next deviator as a function of a restricted set of parameters, and satisfy a natural condition called {\em independence of irrelevant players}.
We present upper bounds on the inefficiency of some local deviator rules, and also show that for some classes of games, no local deviator rule can guarantee inefficiency lower than the price of anarchy.
\end{abstract}

%
%



\section{Introduction}

Nash equilibrium (NE) is perhaps the most popular solution concept in games.
It is a strategy profile from which no individual player can benefit by a unilateral deviation.
However, a Nash equilibrium is a declarative notion, not an algorithmic one.
To justify equilibrium analysis, we have to come up with a natural behavior model that leads the players of a game to a Nash equilibrium.
Otherwise, the prediction that players play an equilibrium is highly questionable.
Best response dynamics is a simple and natural method by which players proceed toward a NE via the following local search method: as long as the strategy profile is not a NE, an arbitrary player is chosen to improve her utility by deviating to her best strategy given the profile of others.

Work on BR dynamics advanced in two main avenues:
The first studies whether BR dynamics converges to a NE, if one exists, e.g., \cite{Mil96,HK12} and references therein.
The second explores how fast it takes until BR dynamics converges to a NE, e.g., \cite{EM05,FPT04,Syr10,IM+05}.

It is well known that BR dynamics does not always converge to a NE, even if one exists.
However, for the class of finite {\em potential games} \cite{Ros73,MS96}, a pure NE always exists, and BR dynamics is guaranteed to converge to one of the equilibria of the game.
A potential game is one that admits a {\em potential function} --- a function that assigns a real value to every strategy profile, and has the miraculous property that for any unilateral deviation, the change in the utility of the deviating player is mirrored accurately in the potential function.
The proof follows in a straight forward way from the definition of a potential game.
Due to the mirroring effect, and since the game is finite, the process of BR updates must terminate, and this happens at some local minimum of the potential function, which is a NE by definition.
While BR dynamics is guaranteed to converge, convergence may take an exponential number of iterations, even in a potential game. \cite{AD+08} showed that a network design game with costs obtained by a sum of construction cost and latency cost, which is a potential game, convergence via best response dynamics can be exponentially long.

Our focus in this work is different than the directions mentioned above.
The description of BR dynamics leaves the choice of the deviating player unspecified. Thus, BR dynamics is essentially a large family of dynamics, differing from one another in the choice of who would be the next player to perform her best response move.
In this paper, we study how the choice of the deviating player (henceforth a {\em deviator rule}) affects the efficiency of the equilibrium reached via BR dynamics.

Our contribution is three fold.

First, we introduce a new measure for quantifying the performance of different deviator rules.
This measure  can be used to quantify the performance of different deviator rules in various settings, beyond the ones considered in this paper.

Second, we introduce a natural class of simple deviator rules, we refer to as {\em local}.
Local deviator rules are simpler to apply, since they are based on limited information. In practice, simple deviator rules should be preferred over more complicated ones.
Our results help in quantifying the efficiency loss incurred due to simplicity.

Finally, we quantify the inefficiency of various deviator rules in two paradigmatic congestion games, namely network formation games and job scheduling games. Our results distinguish between games where local deviator rules can lead to good outcomes and games for which any local deviator rule performs poorly.

%

\subsection{Model and Problem Statement}
\label{sec:problem}

A game $G$ has a set $N$ of $n$ players. Each player $i$ has a strategy space $P_i$, from which she chooses a strategy $p_i \in P_i$.
A strategy profile is a vector of strategies for each player, $p = (p_1, \ldots, p_n)$.
The strategy profile of all players except player $i$ is denoted by $p_{-i}$, and it is convenient to denote a strategy profile $p$ as $p=(p_i, p_{-i})$.
Each player has a cost function $c_i: P \rightarrow \mathbb{R}^{\geq 0}$, where $c_i(p)$ denotes player $i$'s cost in the strategy profile $p$.
Every player wishes to minimize her cost.
There is also a social objective function, mapping each strategy profile to a social cost.


Given a strategy profile $p$, the best response of player $i$ is $BR_i(p) = \argmin_{p'_i \in P_i} c_i(p'_i,p_{-i})$; i.e., the set of strategies that minimize player $i$'s cost, fixing the strategies of all other players.
Player $i$ is said to be {\em suboptimal} in $p$ if the player can reduce her cost by a unilateral deviation, i.e., if $p_i \not\in BR_i(p)$.
If no player is suboptimal in $p$, then $p$ is a {\em Nash equilibrium} (NE) (in this paper we restrict attention to pure NE; i.e., an equilibrium in pure strategies).

Given an initial strategy profile $p^0$, a best response sequence from $p^0$ is a sequence $\langle p^0, p^1, \ldots \rangle$ in which for every $T=0,1,\ldots$ there exists a player $i \in N$ such that
$p^{T+1} = (p'_{i}, p^{T}_{-i})$, where
$p'_{i}  \in BR_i(p^{T}_{-i})$. In this paper we restrict attention to games in which every BR sequence is guaranteed to converge to a NE.

{\bf Deviator rules and their inefficiency.}
A {\em deviator rule} is a function $S : P \rightarrow N$ that given a profile $p$, chooses a deviator among all suboptimal players in $p$.
The chosen player then performs a best response move (breaking ties arbitrarily).
Given an initial strategy profile $p^0$ and a deviator rule $S$ we denote by $NE_{ S }(p^0)$ the set of NE that can be obtained as the final profile of a BR sequence $\langle p^0,p^1, \ldots\rangle$, where for every $T \geq 0$, $p^{T+1}$ is a profile resulting from a deviation of $S(p^T)$ (recall that players break ties arbitrarily, thus this is a set of possible Nash equilibria).

Given an initial profile $p^0$, let $NE(p^0)$ be the set of Nash equilibria reachable from $p^0$ via a BR sequence,
and let $p^{\star}(p^0)$ be the best NE reachable from $p^0$ via a BR sequence, that is, $p^{\star}( p^0 ) =
\argmin_{p \in NE(p^0) } SC( p )$, where $SC:P\rightarrow \mathbb{R}$ is some social cost function.

The {\em inefficiency} of a deviator rule $S$ in a game $G$, denoted $\alpha_S^G$, is defined as the worst ratio, among all initial profiles $p^0$, and all NE in $NE_S(p^0)$, between the social cost of the worst NE reachable by $S$ (from $p^0$) and the social cost of the best NE reachable from $p^0$.
I.e.,
$$\alpha_S^G = \sup_{p^0} \max_{p\in NE_S(p^0)}\frac{SC(p)}{SC(p^\star(p^0))}.$$
For a class of games $\mathcal{G}$, the inefficiency of a deviator rule $S$ with respect to $\mathcal{G}$ is defined as the worst case inefficiency over all games in $\mathcal{G}$:
$\alpha_S^{\mathcal{G}}= \sup_{G \in \mathcal{G} } \{ \alpha_S^G \}.$
A deviator rule with inefficiency $1$ is said to be {\em optimal},
 i.e., an optimal deviator rule is one that for every initial profile reaches a best equilibrium reachable from that initial profile.

The following observation, shows that the inefficiency of every deviator rule is bounded from above by the \emph{price of anarchy} (PoA) \cite{KP99,Pap01}.
Recall that the PoA is the ratio between the cost of the worst NE and the cost of the social optimum, and is used to quantify the loss incurred due to selfish behavior.

\begin{observation}
\label{ob:alpha_range}
For every game $G$ and for every deviator rule $S$ it holds that the inefficiency of $S$ is at least $1$ and bounded from above by the PoA. 
\end{observation}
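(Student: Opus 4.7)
The plan is to verify both inequalities directly from the definitions. For the lower bound $\alpha_S^G \geq 1$, I would first observe that any sequence $\langle p^0, p^1, \ldots \rangle$ generated by applying $S$ is itself a valid best-response sequence, so every equilibrium reachable under $S$ is a fortiori reachable without that restriction. Formally, $NE_S(p^0) \subseteq NE(p^0)$ for every $p^0$. Since $p^\star(p^0)$ is defined as the minimizer of $SC$ over $NE(p^0)$, every $p \in NE_S(p^0)$ satisfies $SC(p) \geq SC(p^\star(p^0))$, so the ratio inside the supremum is at least $1$ (assuming $SC(p^\star(p^0)) > 0$, which is implicit).

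For the upper bound $\alpha_S^G \leq \mathrm{PoA}$, let $\mathrm{OPT} = \min_{p \in P} SC(p)$ denote the minimum social cost over all profiles (not necessarily equilibria). By definition of the price of anarchy, every Nash equilibrium $p$ of $G$ satisfies $SC(p) \leq \mathrm{PoA} \cdot \mathrm{OPT}$. On the other hand, $p^\star(p^0)$ is a particular strategy profile, hence $SC(p^\star(p^0)) \geq \mathrm{OPT}$. Combining the two inequalities, for every $p^0$ and every $p \in NE_S(p^0) \subseteq NE(p^0)$,
\[
\frac{SC(p)}{SC(p^\star(p^0))} \;\leq\; \frac{\mathrm{PoA} \cdot \mathrm{OPT}}{\mathrm{OPT}} \;=\; \mathrm{PoA}.
\]
Taking the supremum over initial profiles $p^0$ and the maximum over $p \in NE_S(p^0)$ yields $\alpha_S^G \leq \mathrm{PoA}$.

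There is essentially no obstacle here; the statement is a direct bookkeeping consequence of the definitions, and the only subtlety worth flagging is the containment $NE_S(p^0) \subseteq NE(p^0)$, which underlies the lower bound and is immediate because the deviator rule only restricts which suboptimal player is selected at each step but does not create new transitions. I would present these two paragraphs as the full proof, with no further case analysis required.
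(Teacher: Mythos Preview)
Your proof is correct and follows essentially the same approach as the paper's: both establish the lower bound via the containment $NE_S(p^0) \subseteq NE(p^0)$, and both obtain the upper bound by replacing the denominator $SC(p^\star(p^0))$ with the global optimum and bounding the numerator by the worst Nash equilibrium cost. If anything, your upper-bound argument is slightly cleaner, since you explicitly use that $p \in NE_S(p^0)$ is a Nash equilibrium (so $SC(p) \leq \mathrm{PoA}\cdot\mathrm{OPT}$), whereas the paper's chain of inequalities passes through $\max_{p\in P} SC(p)$, which is looser than necessary.
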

\begin{proof}
\label{app:proof-poa}
For every game $G$, initial profile $p^0$, and deviator rule $S$, by definition, $p^\star(p^0)=\argmin_{p \in NE(p^0) } SC( p )$. Since $NE_S(p^0) \subseteq NE(p^0)$ we know that $SC(p^\star(p^0)) \leq SC(NE_s(p^0))$ and therefore $\frac{SC(NE_S(p^0))}{SC(p^\star(p^0))} \geq 1$. Also,
$\frac{SC(NE_S(p^0))}{SC(p^\star(p^0))} \leq \frac{SC(NE_S(p^0))}{ \min_{p \in P} SC( p ) } \leq \frac{{ \max_{p \in P} SC( p ) }}{{ \min_{p \in P} SC( p ) }} = PoA(G)$.
Since $1 \leq \frac{SC(NE_S(p^0))}{SC(p^\star(p^0))} \leq PoA(G)$, we conclude $1 \leq {\sup_{p^0}}\frac{SC(NE_S(p^0))}{SC(p^\star(p^0))} \leq PoA(G)$ and therefore $1 \leq \alpha_S^G \leq PoA(G)$.
\end{proof}

{\bf Local deviator rules.}
We define and study a class of simple deviator rules, called {\em local} deviator rules.
Local deviator rules are defined with respect to state vectors, that represent the state of the players in a particular strategy profile.
Given a profile $p$, every player $i$ is associated with a state vector $v_i$, consisting of several parameters that describe her state in $p$ and in the strategy profile obtained by her best response.
The specific parameters may vary from one application to another.
A vector profile is a vector $v=(v_1, \ldots, v_n)$, consisting of the state vectors of all players.
A deviator rule is said to be {\em local} if it satisfies the independence of irrelevant players condition, defined below.

\begin{definition}
A deviator rule $S$ satisfies {\em independence of irrelevant players} (IIP) if for every two state vectors $v_{ i_1 } ,v_{ i_2 }$, and every two vector profiles $v,v'$ such that $v=(v_{i_1}, v_{i_2}, v_{-i_1,i_2})$, and $v'=(v_{i_1}, v_{i_2}, v'_{-i_1,i_2})$ \footnote{Note that the profiles $v$ and $v'$ may correspond to different sets of players.}, if $S(v)=i_1$, then $S(v') \neq i_2$.
\end{definition}

The IIP condition means that if the deviator rule chooses a state vector $v_i$ over a state vector $v_j$ in one profile, then, whenever these two state vectors exist, the deviator rule would not choose vector $v_j$ over $v_i$.
Note that this condition should hold even across different game instances and even when the number of players is different.
Many natural deviator rules satisfy the IIP condition. For example, suppose that the state vector of a player contains her cost in the current profile and her cost in the profile obtained by her best response; then, both $(i)$ max-cost, which chooses the player with the maximum current cost, and $(ii)$ max-improvement, which chooses the player with the maximum improvement, are local deviator rules.

{\bf Congestion games.}
A congestion game has a set $E$ of $m$ resources, and the strategy space of every player $i$ is a collection of sets of resources; i.e., $P_i \subseteq 2^E$.
Every resource $e \in E$ has a cost function $f_e:\N\rightarrow \R$, where $f_e(\ell)$ is the cost of resource $e$ if $\ell$ players use resource $e$.
The cost of player $i$ in a strategy profile $p$ is
$c_i(p) = \sum_{e \in p_i} f_e(\ell_e(p))$, where $\ell_e(p)$ is the number of players that use resource $e$ in the profile $p$.
Every congestion game is a potential game \cite{MS96}, thus admits a pure NE, and moreover, every BR sequence converges to a pure NE.
In this paper we study the efficiency of deviator rules in the following congestion games:

{\em Network formation games} \cite{AD+08}: There is an underlying graph, and every player is associated with a pair of source and target nodes $s_i, t_i$. The strategy space of every player $i$ is the set of paths from $s_i$ to $t_i$. The resources are the edges of the graph, every edge $e$ is associated with some fixed cost $c_e$, which is evenly distributed by the players using it. That is, the cost of an edge $e$ in a profile $p$ is $f_e(p) = c_e / \ell(p)$.
In network formation games the cost of a resource decreases in the number of players using it.
We also consider a weighted version of network formation games on parallel edge networks, where players have weights and the cost of an edge is shared proportionally by its users.
The social cost function here is the sum of the players' costs; that is $SC( p ) = \sum_{ i \in N } c_i( p )$.

The state vector of a player in a network formation game, in a profile $p$, consists of:
\begin{enumerate}
\item Player $i$'s cost in $p$: $c_i( p )$
\item The cost of player $i$'s path: $\sum_{ e \in p_i } c_e$
\item Player $i$'s cost in the profile obtained from a best response of $i$: $c_i( p^{ \prime }( i ) )$ (where $p^{ \prime }( i ) = ( p_{ -i } , BR_i( p_{ -i } ) )$ is the profile obtained from a best response of $i$).
\item The cost of player $i$'s path in the profile $p^{ \prime }( i )$: $\sum_{ e \in BR_i( p_{ -i } ) } c_e$.
\end{enumerate}
In weighted instances, the state vector includes player $i$'s weight as well.

{\em Job scheduling games} \cite{Voc07}: The resources are machines, and players are jobs that need to be processed on one of the machines. Each job has some length, and the strategy space of every player is the set of the machines. The load on a machine in a strategy profile $p$ is the total length of the jobs assigned to it. The cost of a job is the load on its chosen machine. We also consider games with conflicting congestion effect (\cite{FT12,CG11}), where jobs have unit length and in addition to the cost associated with the load, every machine has an activation cost $B$, shared by the jobs assigned to it.
The social cost function here is the \textit{makespan}, that is $SC( p ) = \max_{ i \in N } c_i( p )$.
The state vector of a job (player) in a job scheduling game, in a profile $p$, consists of the job's length, its current
machine, and the loads on the machines.

\subsection{Our Results}
In Section \ref{sec:NFG} we present our results for network formation games.
Some of our results refer to restricted graph topologies. These topologies are defined in Appendix \ref{sec:nettopo}.
We first show that in general network formation games, finding a BRD-sequence that approximates the optimum better than the PoA is NP-hard. Therefore, we restrict attention to subclasses: We study symmetric games, where all the players share the same source and target nodes.
We show that the local {\em Min-Path} deviator rule, which chooses a player with the cheapest best response path, is optimal.
In contrast, the local {\em Max-Cost} deviator rule has the worst possible inefficiency, $n$ (which matches the PoA for this game).
We then consider asymmetric network formation games.
Unfortunately, the optimality of Min-Path does not carry over to asymmetric network formation games, even when played on series of parallel paths (SPP) networks.
In particular, the inefficiency of Min-Path in single-source multi-targets instances is $\theta(|V|)$, and for multi-sources multi-targets instances, it further grows to $\theta(2^{|V|})$.
On the positive side, we show a $poly( n , |V| )$ dynamic-programming algorithm for finding an optimal BR sequence for network formation games played on SPP networks in single-source multi-targets instances.
We also show a $poly( n , |V| )$ such algorithm for multi-sources multi-targets instances that admit ``proper intervals" (see Section \ref{app:DPproper} for formal definitions).
For network formation games played on extension-parallel networks we show that every local deviator rule has an inefficiency of $\Omega(n)$.

In Section \ref{sec:weightedSymm} we study network formation games with weighted players.
It turns out that weighted players lead to quite negative results.
We show that even in the simplest case of parallel-edge networks, it is NP-hard to find an optimal BR-sequence, and no local deviator rule can ensure a constant inefficiency.
Moreover, the Min-Path deviator rule has inefficiency $\Omega(n)$, even in symmetric games on SPP networks, and even if the ratio between the maximal and minimal weights approaches $1$.
This result is quite surprising in light of the fact that Min-Path is optimal (i.e., has inefficiency $1$) for any symmetric network formation game with unweighted players.

Job scheduling games with $m$ identical machines are studied in Section \ref{sec:sched}.
A job's ($=$ player's) state vector in job scheduling games includes the job's length, its machine, and the machines' loads.
Thus, local deviator rules capture many natural rules, such as Longest-Job, Max-Cost, Max-Improvement, and more.
We show that no local deviator rule can guarantee inefficiency better than the $PoA$ in this class of games (which is $\frac{ 2m }{ m + 1 }$).
%
In contrast, for job scheduling games with conflicting congestion effects \cite{FT12} we present an optimal local deviator rule.

Positive results on local deviator rules imply that a centralized authority that can control the order of deviations can lead the population to a good outcome, by considering merely local information captured in the close neighborhood of the current state.
In contrast, negative results for local deviator rules imply that even if a centralized authority can control the order of deviations, in order to converge to a good outcome, it cannot rely only on local information; rather, it must be able to perform complex calculations and to consider a large search space.

\subsection{Related work}
A lot of research has been conducted on the analysis of congestion games using a game-theoretic approach.
The questions that are commonly analyzed are Nash equilibrium existence,
the convergence of BR dynamics to a NE, and the loss incurred due to selfish behavior -- commonly quantified according to the \emph{price of anarchy} \cite{KP99,Pap01} and price of stability (PoS) \cite{AD+08} measures.


Our work addresses mainly congestion games. Specifically, we consider network formation games and job scheduling games and variants thereof. It is well known that every congestion game is a potential game \cite{Ros73,MS96} and therefore admits a PNE. In potential games every BR dynamics converges to a PNE. However, the convergence time may, in general, be exponentially long. It has been shown in \cite{SB16} that in random potential games with $n$ players in which every player has at most $a$ strategies, the worst case convergence time is $n \cdot a^{ n - 1 }$. The average case convergence time in random potential games is sublinear but along with the computation complexity of finding the best response has running time of $\mathcal{O}( a \cdot n )$. Results regarding the convergence time in network formation games are shown in \cite{AD+08}, \cite{FPT04}. They showed that in general, finding a PNE is PLS-complete. In the restricted case of symmetric players, in a game with both positive and negative congestion effects, the convergence is polynomial. Analysis of job scheduling games is presented in \cite{EA+03}, where it is shown that it may take exponential number of steps to converge in general cases. 

The observation that convergence via best response dynamics can be exponentially long has led to a large amount of work aiming to identify special classes of congestion games, where BRD converges to a Nash equilibrium in polynomial time or even linear time, as shown by \cite{AD+08} for games with positive congestion effects and by \\ \cite{EA+03} for games with negative congestion effects. This agenda has been the focus of \cite{F10} that considered symmetric network formation games with negative congestion effect played on an extension-parallel graph, and showed that the convergence is bounded by $n$ steps. For resource selection games (i.e., where feasible strategies are composed of singletons), it has been shown in \cite{IM+05} that better-response dynamics converges within at most $mn^2$ steps for general cost functions (where $m$ is the number of resources).

Another direction of research considers an {\em approximate Nash equilibrium}, also known as {\em $\epsilon$-Nash equilibrium}. An approximate-NE is a strategy profile in which no player can significantly improve her utility by a unilateral deviation. \cite{CF+11} presented an algorithm that identifies a polynomially long sequence of BR moves that leads to an approximate NE in congestion games with linear latency functions. \cite{B+15} showed for the more general case of polynomially decreasing resources' cost functions, given an algorithm to find $\epsilon$-best response, convergence time is polynomial. In weighted network formation games which are known to not always have a PNE, \cite{CR09} study the relations between the $\epsilon$ needed to ensure existence of an $\epsilon$-PNE and the inefficiency of the solution. \cite{CS11} showed that it takes $poly( n , \frac{ 1 }{ \epsilon } )$ steps to converge to a PNE in symmetric network formation games, and that even when using different thresholds $\epsilon_{ i }$ the amount of deviations conducted by player $i$ will be $poly( n , \frac{ 1 }{ \epsilon_{ i } } )$.




Another technique for controlling the convergence time of BRD is using a specific deviator rule. \cite{EA+03} compared the convergence time of different deviator rules for different variants of job scheduling games. For example, for instances of identical machines they show that the Max-Weight-Job deviator rule ensures convergence in at most $n$ steps while convergence by the Min-Weight-Job can be exponential. They also show that the FIFO deviator rule converges within time $\mathcal{O}( n^2 )$ and the expected convergence time of a random order is also $\mathcal{O}( n^2 )$.
\cite{FT12} and \cite{CG11} considered a variant of job scheduling with both negative and positive congestion effects and analyzed their PoA and PoS. We will refer to that model in this work. \cite{FT15} showed that using the Max-Cost deviator rule, tightly bounds the convergence rate to $ 1.5 n $ (worst-case) in these games. The Max-Cost deviator rule was also considered in \cite{KL13} for {\em Swap-Games} \cite{A+10}. In Swap-Games with initial strategies corresponding to a tree it has been shown that the convergence time is $\mathcal{O}( n^{ 3 } )$, but using the Max-Cost deviator rule improves the bound to $\mathcal{O}( n )$.

Although every congestion game have a PNE and BRD always converges to one, examples for variants of congestion games that do not have a PNE
or that BRD does not necessarily converge to one are shown in the literature. In particular, \cite{CR09} presents a network-formation game with weighted players that does not have a PNE; \\ \cite{HK12} presents a characterization of the resources' cost functions that ensures existence of PNE, and proves various conditions for the {\em finite-improvement-property} (FIP) that implies that every best response dynamic converges to a PNE. \cite{Mil96} shows that in a variant of congestion games in which every player has a specific decreasing cost function for every resource, the FIP condition does not necessarily hold. Though there is always a PNE and best response dynamics can always converges to it, it is shown that there can be infinitely long best response sequences and conditions for them to occur are presented. \cite{AKT14} presents network formation games with a variation that players do not aim to connect a source and a sink, but to select a path corresponding to a word in a regular language, assuming the edges are labeled by an alphabet. They showed cases with no PNE and moreover, that computing a player's best response is NP-hard, as was shown for another variant of network formation games presented in \cite{FLMPS03}. A variant of job scheduling game on unrelated parallel machines, studied in \cite{AT16} is shown to have a PNE only for unit-cost machines. All the above works imply that the existence of a PNE, as well as the convergence of BRD to one are not trivial. Moreover, even when a PNE exists and BRD is guaranteed to converge, the implementation of best response dynamics can be computationally hard.

BRD has been studied also in games that do not converge to a PNE. The notion of {\em dynamic inefficiency} was defined in \cite{BF+11} as the average of social costs in a best response infinite sequence (for games that do not possess the finite improvement property). They considered the effect of the chosen player on the obtained efficiency in games where BR dynamics can cycle indefinitely. They consider job scheduling games, hotelling model and facility location games and study the dynamic inefficiency of {\em Random Walk}, {\em Round Robin} and {\em Best Improvement} deviator rules.



\section{Network Formation Games}
\label{sec:NFG}
%

In this section we study network formation games.
We consider two natural local deviator rules, namely {\em Max-Cost} and {\em Min-Path}.
The Max-Cost deviator rule chooses a suboptimal player that incurs the highest cost in the current strategy profile $p$, i.e.,
$Max-Cost( p ) \in \argmax_{ \{ i \in N | p_i \not\in BR_i(p) \} } c_i( p )$.
The Min-Path deviator rule chooses a suboptimal player whose path in the profile obtained from a best response move is cheapest, i.e.,
$Min-Path( p ) \in \argmin_{ \{ i \in N | p_i \not\in BR_i(p) \} } \sum_{ e \in BR_i( p ) } c_e$. Both rules are local deviator rules.

Recall that some of our results refer to restricted graph topologies, defined in Appendix \ref{sec:nettopo}.
This section is organized as follows.
Section \ref{sec:symNFG} includes our analysis of symmetric games. In Sections \ref{sec:spp} and \ref{sec:ep}
we study games played on series of parallel paths networks and extension parallel networks, respectively.
Finally, in Section \ref{sec:weightedSymm} we study weighted network formation games.

\subsection{ The BRD-Inefficiency in General Network Formation Games }
\label{sec:NFGhardness}

We start with the most general form of network formation games, with no restriction on the network topology or the players' objectives. We prove that for the general case it is NP-hard to find a deviator rule whose inefficiency is lower than the PoA.

\begin{theorem}
In network formation games, it is NP-hard to find a deviator rule with inefficiency lower than $\Omega(n)$.
\end{theorem}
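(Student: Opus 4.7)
The plan is to exhibit a polynomial-time reduction from a canonical NP-hard problem, 3-SAT being my first attempt, to the decision version ``given $(G,p^0)$, is the cheapest NE reachable from $p^0$ via BRD of cost at most some threshold $K$''. Once this is in place, any polynomially computable deviator rule $S$ with $\alpha_S^G < \beta n$ for a suitably small constant $\beta$ would provide a polynomial-time decision procedure: simulate $S$ starting from $p^0$, look at the cost of the NE it lands in, and compare to $K$, using the fact that the landed NE is within a factor $\alpha_S^G$ of $SC(p^\star(p^0))$. Hence no such rule can exist unless $\mathrm{P}=\mathrm{NP}$.

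Concretely, from a 3-SAT formula $\phi$ with $k$ variables and $m$ clauses I would build a network formation game $G_\phi$ with $n=\Theta(k+m)$ players and an initial profile $p^0$ satisfying two things. First, if $\phi$ is satisfiable, then there is a BR sequence from $p^0$ terminating in a NE where essentially all players share one common low-cost ``trunk'' path, giving $SC(p^\star(p^0)) = O(1)$. Second, if $\phi$ is unsatisfiable, then every NE reachable from $p^0$ by any BR sequence has $\Omega(n)$ players remaining on pairwise disjoint, moderately priced private paths, giving $SC \geq \Omega(n)$. The natural three-layer design has: per-variable gadgets consisting of two exclusive edges representing a true/false choice by a dedicated ``variable player''; per-clause gadgets whose edge cost collapses only after a literal-edge corresponding to a satisfying assignment of that clause has been selected; and a global trunk path whose per-capita cost for each player becomes attractive only once the variable and clause layers on that player's route have all collapsed. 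The initial profile $p^0$ places every player on a private default path just cheap enough that in isolation no player benefits from migrating to the trunk.

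The central technical obstacle is the soundness direction: proving that whenever $\phi$ is unsatisfiable, \emph{no} ordering of BR moves, however cleverly chosen, can funnel more than a vanishing fraction of players onto the shared trunk. The trouble is that BRD is local and opportunistic, so a single improving deviation may incidentally collapse a clause gadget in a way that does not correspond to any globally consistent assignment, and cascades of such partial collapses might accumulate into a cheap NE. The standard remedy is to separate the cost scales across the three layers by polynomial or exponential gaps and to arrange the gadgets so that a player's best response onto the trunk is only strictly improving when every clause-edge on her path has already been populated by a literal player; combined with a consistency gadget forcing ``true'' and ``false'' players of the same variable to never simultaneously occupy clause edges, this should rule out inconsistent collapse patterns.

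Should 3-SAT prove too unwieldy because of the flexibility of BR orderings, I would fall back on a reduction from a more rigid witness-structured problem, e.g., Monotone-1-in-3-SAT or directed Hamiltonian path on a DAG, where the allowed ``progress'' orders are more tightly constrained and the gadget invariants are easier to preserve along every BR sequence.
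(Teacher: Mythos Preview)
Your high-level plan is sound --- reduce from an NP-hard problem so that a ``yes'' instance admits a BR order reaching an $O(1)$-cost NE while every BR order on a ``no'' instance lands at cost $\Omega(n)$ --- but the choice of 3-SAT creates exactly the difficulty you yourself flag and leave unresolved. In a fair-cost-sharing game the only lever is an additive share of edge costs; there is no natural OR-gate, so a clause gadget that ``collapses only after a satisfying literal-edge has been selected'' is awkward to realise, and the consistency gadget you gesture at (preventing both literals of a variable from simultaneously populating clause edges) is left entirely unspecified. Your own worry --- that some BR order might trigger partial, globally inconsistent collapses that nevertheless accumulate into a cheap NE --- is legitimate, and nothing in the proposal rules it out. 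That is the gap.

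The paper sidesteps all of this by reducing from \textsc{Partition} in its promise form (a balanced split is guaranteed to exist). Partition's additive constraint matches cost-sharing perfectly. The construction has $n$ ``element'' players, player $i$ able to activate an edge of cost $a_i$; two gatekeeper players $a$ and $b$ whose best-response conditions encode, respectively, a lower and an upper bound on the total cost of element-edges activated so far; and a further block of $n$ players trapped on an edge of cost $n$ who can escape only after $b$ has moved to a specific cheap edge, which in turn requires $a$ to have already moved. Soundness is then a short threshold calculation rather than a gadget-invariant argument: if the set $I$ of element players that deviate before $a$ does not have $\sum_{i\in I} a_i = 1$ (up to the slack $\epsilon$), one of the two gatekeeper thresholds fails, $b$ never opens the cheap edge, and the $\Omega(n)$-cost edge stays occupied in every reachable NE. If you insist on pushing a SAT-style reduction through, expect the soundness proof to be substantially harder than this; the numeric source problem is the right tool here.
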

\begin{proof}
Given a game, an initial strategy profile, and a value $k$, the associated decision problem is whether there exists an order according to which the players are activated and results in a NE whose social cost is at most $k$. We show a reduction from the {\em Partition} problem: Given a set of numbers $ \{ a_1 , a_2 , ... , a_n \} $ such that $\sum_{ i \in [n] } a_i = 2$, the goal is to find a subset $ I \subseteq [n] $ such that $\sum_{ i \in I } a_i = \sum_{ i \in [n] \backslash I } a_i = 1$. This problem is NP-hard even if it is known that such a subset $I$ exists. Given an instance of {\em Partition}, such that for some $ I \subseteq [n] $, it holds that $\sum_{ i \in I } a_i =1$, consider the network depicted in Figure \ref{fig:NFGhard}, with the following initial strategy profile of $2n+2=O(n)$ players:
\begin{itemize}
\item Players $ 1 ,\ldots, n $: Each Player $i \in [n]$, has a source $v_{ i - 1 }$ and a target $ v_{ i } $ and she initially uses the upper edge of cost $3$ by herself. The alternative edge for Player $i$ costs $a_i$.
\item Players $ n+1 ,\ldots, 2n$: $n$ players whose objective is an $\langle s', t' \rangle$-path. Initially, they all use the lower $(s', t')$-edge of cost $n$, thus, each of them pays $1$. 
\item Player $a$ whose objective is an $\langle s_a, t_{a,b}\rangle$-path. She initially uses the upper path and share its second edge with Player $b$, so her initial cost is $1.5 + \frac{ \epsilon }{ 2 }$.
\item Player $b$ whose objective is an $\langle s_b, t_{a,b}\rangle$-path. She initially uses the upper path and share its second edge with Player $a$, so her initial cost is $1.5 + \frac{ \epsilon }{ 2 }$.
\end{itemize}

\begin{figure}[h]
\begin{center}
\includegraphics[scale=.4]{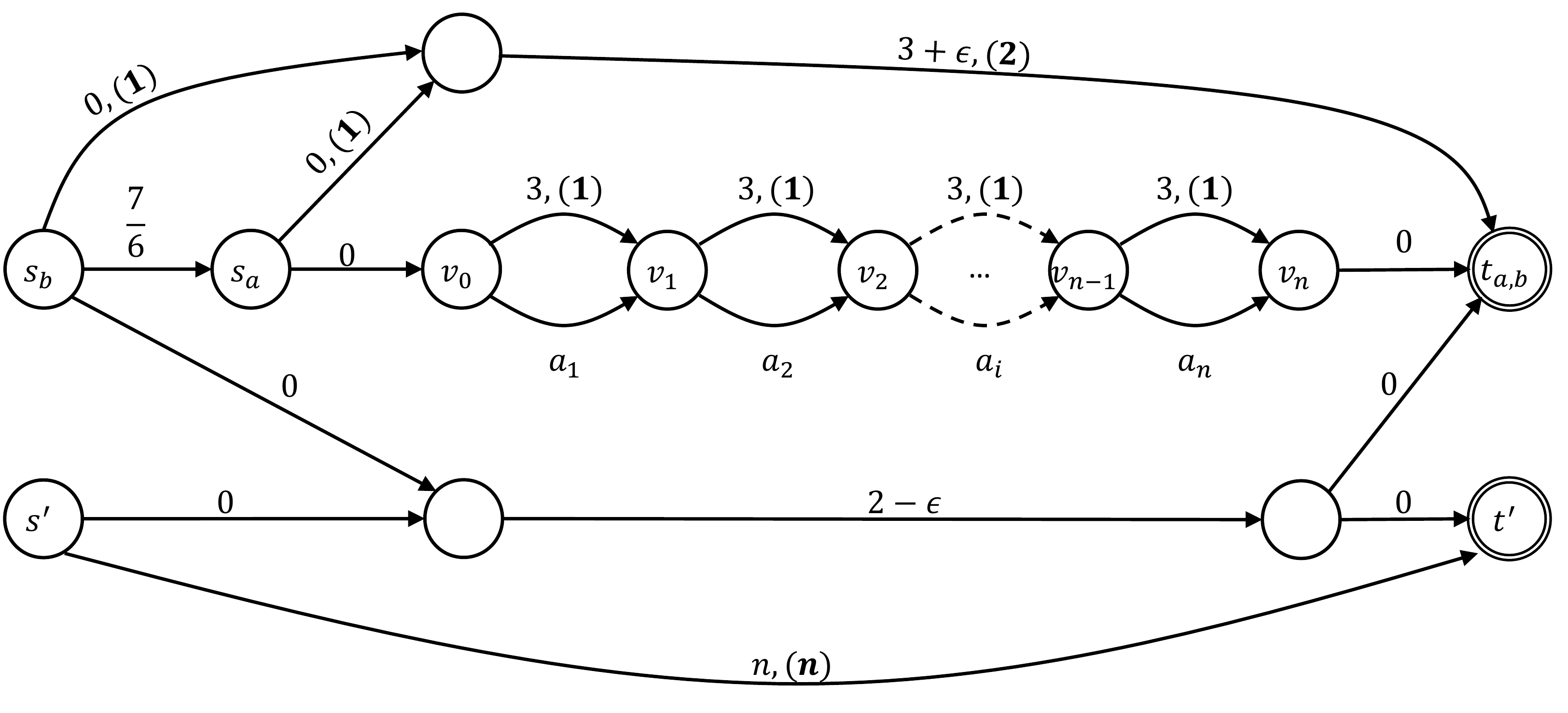}
\caption{The network constructed for a given {\em Partition} instance. Every edge is labeled by its cost and (in brackets) the number of players using it in the initial strategy profile.}
\label{fig:NFGhard}
\end{center}
\end{figure}

The following simple observations limit the possible BRD sequences of the above game instance:
\begin{enumerate}
\item Since $a_i <3$, for all $i \in [n]$, each of the first $n$ players uses the $a_i$-edge in any NE.
\item The $n$ $\langle s', t' \rangle$-players would benefit from a deviation only after the edge of cost $2 - \epsilon$ is utilized by some other player.
\item Players $a$ and $b$ will never use edges of cost $3$. If they select a path that passes through $v_0$, it will consist of all the $a_i$-edges. The initial cost of this alternative path for Player $a$ is $\sum a_i =2$. She will deviate to this path only after her total share in the $a_i$-edges is at most $1.5$.
\item As long as Player $a$ is using the upper path, deviating to a path through $v_0$ is not beneficial for Player $b$, since such a path will incur her a cost of at least $\frac{ 7 }{ 6 } + 1 $. Also, deviating to the lower path will incur her a cost of $2 - \epsilon$, which is more than her current cost. Therefore, Player $b$ will not deviate before Player $a$.
\end{enumerate}

Using the above observations we prove the statement of the reduction.
\begin{claim}
Let $I$ be a subset of $[n]$ such that $\sum_{ i \in I } a_i=1$. A deviator rule that knows $I$ can determine a BR-sequence that ends up with a NE whose social cost is $\mathcal{O}( 1 )$. If $I$ is not detected then the final NE will have social cost $\Omega(n)$.
\end{claim}

\begin{proof}
Assume that a subset $ I \subseteq [n] $ such that $\sum_{ i \in I } a_i=1$ is known. Consider a BR-sequence in which the players in $I$ deviate first (in an arbitrary order), then Player $a$ deviates, then Player $b$, and then all the suboptimal players (in an arbitrary order). Since $\sum_{ i \in I } a_i =1$, the players in $I$ utilize $a_i$-edges of total cost $1$.
When Player $a$ gets to deviate, she selects a path that passes through $v_0$. Her share in the sub-path consisting of the $a_i$-edges is $0.5$ for the utilized edges, and $1$ for the non-utilized edges. Since her current cost is $1.5 + \frac{\epsilon}2$, this deviation is indeed beneficial.

When Player $b$ gets to deviate, her share in the sub-path consisting of the $a_i$-edges would be $\frac{ 1 }{ 3 } + \frac{ 1 }{ 2 }= \frac 5 6$. Her best response is therefore the lower path since $2 - \epsilon < \frac{ 7 }{ 6 } + \frac{ 5 }{ 6 } = 2$.

After Player $b$ deviates all the players using the expensive $(s',t')$-edge will join Player $b$, and the rest of the first $n$ players will deviate to their corresponding $a_i$-edge. The social cost of the resulting NE would be $2-\epsilon + \frac{7}{6}+2 = 5 \frac{ 5 }{ 6 } - \epsilon = \mathcal{O}( 1 )$.

For the other direction of the reduction assume that the social cost of some NE reachable from the given initial profile is a constant. We show that the deviator rule must activate a subset of the players $1,\ldots,n$, corresponding to partition elements of total size $1$ before Player $a$ deviates. In order to end up with a constant social cost,  the expensive lower $(s',t')$-edge was abandoned. By the above observations, the players using this edge deviated to the $(2-\epsilon)$-edge, and in order for that to happen, Player $b$ must deviate to this $(2 - \epsilon)$-edge first. As stated, Player $b$ will not deviate before Player $a$. Also, players $1,\ldots,n$ must deviate sometime along the dynamics. Let $x_a$ ($x_b$) be the total cost of $a_i$-edges utilized by Players $1, 2,\ldots, n$ before Player $a$ ($b$) deviated. Since Player $a$ must deviate before Player $b$ it holds that $x_a \leq x_b$.

In order for the deviation to be profitable for Player $a$, it must be that
$\frac{ 1 }{ 2 } \cdot x_a + ( 2 - x_a )< 1.5 + \frac{\epsilon}2$, since her newly incurred cost consists of her share in the cost of edges she share with players that already deviated and the cost of edges she uses alone. We conclude that $x_a > 1 - \epsilon$.

In order for Player $b$ to deviate to the $(2 - \epsilon)$-edge it must be more beneficial than deviating to a path through $v_0$. The latter would consists of the edge of cost $\frac{ 7 }{ 6 }$, edges she share with Player $a$ and players in $[n]$ that already deviated, and edges that she share only with Player $a$. Formally,
$2 - \epsilon \leq \frac{ 7 }{ 6 } + \frac{ 1 }{ 3 } \cdot x_{ b } + \frac{ 1 }{ 2 } \cdot ( 2 - x_{ b } )$.
That is, $x_b \leq 1 + 6\epsilon$.

Combining the above observations, we get that $1 - \epsilon < x_a \leq x_b \leq 1 + 6 \epsilon$.
This implies that for $\epsilon < \frac 1 6 \min_i a_i$, we have $x_a=x_b=1$. The set of players that utilize the edges of total cost $1$ correspond to a set $I \subset [n]$ of sum $1$, which induces a partition.

Finally, note that any other BR-sequence leaves the lower expensive edge activated, and therefore, has total cost $\Omega(n)$.
\end{proof}
\end{proof}


\subsection{ Symmetric Network Formation Games }
\label{sec:symNFG}

A network-formation game is {\em symmetric} if all the players have the same source and target nodes.
Recall that the inefficiency of any deviator rule is upper bounded by the price of anarchy of the game.
It is well known that the PoA of network formation games is $n$ (i.e., the number of players).

We first show that the Max-Cost rule may perform as poorly as the PoA, even in symmetric games on parallel-edge networks.

\begin{observation}
\label{ob:MC}
The inefficiency of Max-Cost in symmetric network formation games on parallel-edge networks is $n$.
\end{observation}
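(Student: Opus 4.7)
The plan is to prove both bounds. The upper bound ``inefficiency at most $n$'' is immediate from Observation~\ref{ob:alpha_range} together with the classical fact that the price of anarchy of symmetric network formation games on parallel-edge networks equals $n$ (witnessed by the standard two-edge example with costs $n$ and $1+\epsilon$). So the real work lies in a matching lower bound.

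For the lower bound I will exhibit one instance and one initial profile $p^{0}$ such that every Max-Cost best-response sequence from $p^{0}$ ends at a NE of social cost $n$, while some other BR sequence from $p^{0}$ reaches a NE of social cost $1$. The construction is a symmetric parallel-edge network with three edges of costs $c(e_{1}) = n$, $c(e_{2}) = a$ for any $a > n/(n-1)$ (concretely $a = 2$), and $c(e_{3}) = 1$, together with the initial profile placing $n-1$ players on $e_{1}$ and one player on $e_{2}$. Two edges alone do not suffice: letting Max-Cost build the all-on-$e_{1}$ NE forces $c(e_{2}) \geq n/(n-1)$, which caps the ratio at $n-1$; the auxiliary edge $e_{3}$ is what decouples the cost of the good NE from this threshold.

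For the Max-Cost trajectory: in $p^{0}$ the $e_{2}$-player pays $a > n/(n-1)$, strictly more than every $e_{1}$-player, so Max-Cost selects her. Her best-response set is $\{e_{1}, e_{3}\}$, both tied at cost $1$; the adversarial tie-breaking permitted by $\max_{p \in NE_{S}(p^{0})}$ in the definition of $\alpha_{S}^{G}$ sends her to $e_{1}$, yielding the all-on-$e_{1}$ profile. This is a NE (moving to $e_{2}$ costs $a > 1$; moving to $e_{3}$ alone costs $1$ with no strict improvement), and its social cost is $n$. For the cheap NE reachable by a different BR sequence: starting from $p^{0}$, pick any $e_{1}$-player first, whose best response is $e_{3}$ alone (a strict improvement from $n/(n-1)$ to $1$); continuing to pick remaining $e_{1}$-players and finally the $e_{2}$-player yields a chain of strict improvements into $e_{3}$, ending at the all-on-$e_{3}$ NE of social cost $1$.

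The one delicate step is the tie in the $e_{2}$-player's best response: the claim that Max-Cost is forced into $e_{1}$ depends on the adversarial reading of $NE_{S}(p^{0})$, but this is exactly the convention used in the definition of $\alpha_{S}^{G}$, so the argument is legitimate. Modulo this, the remainder is routine verification of the fair-sharing arithmetic, and combining the two bounds yields inefficiency exactly $n$.
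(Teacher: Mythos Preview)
Your proof is correct and uses essentially the same idea as the paper's: a three-parallel-edge network with an initial profile in which one isolated player carries the maximum cost, is selected by Max-Cost, and joins the other $n-1$ players on the expensive edge, while an alternative order routes everyone to a cheap edge.

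The only notable difference is how the two constructions handle the selected player's move. The paper uses edges of cost $1$, $1/n$, and $1-\epsilon$, with one player on the first edge and $n-1$ on the last; the Max-Cost deviator then has a \emph{strict} best response (the $(1-\epsilon)$-edge), so no appeal to tie-breaking is needed, at the price that the ratio is $n(1-\epsilon)$ and only tends to $n$ as $\epsilon\to 0$. Your construction instead engineers an exact tie between $e_1$ and $e_3$ for the selected player and invokes the adversarial tie-breaking built into the definition of $NE_S(p^0)$, which buys you the ratio exactly $n$ with no limiting argument. Both are legitimate under the paper's definitions. Two minor glitches worth fixing: the concrete choice $a=2$ fails at $n=2$ (where $n/(n-1)=2$), and for $a=2$ the first deviator in your alternative sequence has $e_2$ in her best-response set as well as $e_3$; neither affects the conclusion.
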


\begin{proof}
Given $n \in \mathbb{N}$, we present an instance $G$ over $n$ players for which $\alpha_{Max-Cost}^G= n$. Consider the network depicted in Figure \ref{fig:MCBRD}.
\begin{figure}[ht]
\begin{center}
\includegraphics[height=2.25cm]{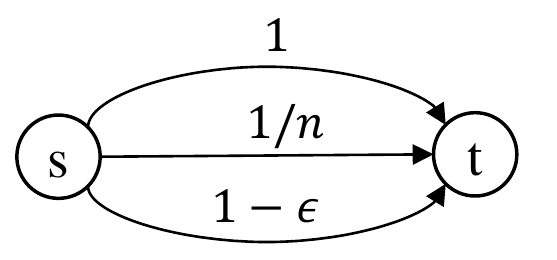}
\caption{A network on which Max-Cost has inefficiency $n$.}
\label{fig:MCBRD}
\end{center}
\end{figure}
The initial profile consists of a single player using the top edge and $n-1$ players using the bottom edge. For the first step, the cost of the player on the top edge is $1$ and the costs of the players on the bottom one is $\frac{1 - \epsilon}{n-1}$. Therefore, Max-Cost chooses the player on the upper edge. That player migrates to the bottom edge, resulting in social cost $1-\epsilon$.
However, if the players on the bottom edge deviate first (to the middle edge), then the player on the top edge will also deviate to the middle edge, resulting in a NE profile with a social cost of $\frac{1}{n}$.
The ratio converges to $n$ as $\epsilon \rightarrow 0$.
\end{proof}

On the other hand, we show that Min-Path is an optimal deviator rule, i.e., it always reaches the best NE reachable from any initial profile.
Our analysis of Min-Path is based on the following Lemma:
\begin{lemma}
\label{lem:follow}
In symmetric network formation games, the path chosen by the first deviator is the unique path that will be chosen by all subsequent players, regardless of the order in which they deviate.
\end{lemma}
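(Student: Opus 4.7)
The plan is to reduce the lemma to a single-step claim and then iterate. The single-step claim is: if at some profile $p$ player $i$'s best response is $P^*$, and $i$ moves to $P^*$ to produce $p'=(P^*,p_{-i})$, then $P^*$ is also a best response of every other not-yet-moved player $j$ at $p'$. Starting from the profile just before the first deviation and iterating this claim along the best-response sequence handles arbitrary orders of subsequent deviators.

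For the single-step claim, I would introduce $N_e:=\ell_e(p)+1-\mathbf{1}[e\in p_i]$, the denominator $i$ would face on an edge she uses at $p$. Then $i$'s best-response condition becomes $\sum_{e\in P^*}c_e/N_e\le\sum_{e\in Q}c_e/N_e$ for every alternative $s$-$t$ path $Q$. Next, I compute the corresponding denominator $M_e:=\ell_e(p')+1-\mathbf{1}[e\in p_j]$ that $j$ would face on an edge at $p'$. A direct load-bookkeeping computation yields $M_e=N_e+\mathbf{1}[e\in P^*]-\mathbf{1}[e\in p_j]$, since $i$'s move added one to the load of every edge in $P^*$ and removed one from every edge in $p_i$, while $j$'s own contribution must still be removed. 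In particular $M_e\ge N_e$ on edges of $P^*$ and $M_e\le N_e$ on edges outside $P^*$, and one checks $M_e\ge 1$ throughout: if $e\in p_j$, then $j\ne i$ was already using $e$ in $p$, which forces $N_e\ge 2$.

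After cancelling the common terms over $P^*\cap Q$ (where the two sides share the same $M_e$), the inequality $c_j(P^*,p'_{-j})\le c_j(Q,p'_{-j})$ reduces to $\sum_{e\in P^*\setminus Q}c_e/M_e\le\sum_{e\in Q\setminus P^*}c_e/M_e$. The monotonicity of $x\mapsto 1/x$ gives $c_e/M_e\le c_e/N_e$ on the left (where denominators grew) and $c_e/N_e\le c_e/M_e$ on the right (where denominators shrank), so the desired inequality follows by chaining these with $i$'s best-response inequality restricted to $P^*\triangle Q$.

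The main obstacle I anticipate is the denominator identity $M_e=N_e+\mathbf{1}[e\in P^*]-\mathbf{1}[e\in p_j]$ together with the verification that $M_e$ stays positive on edges of $p_j$ lying outside $P^*$; once the identity and positivity are established, everything else is an essentially mechanical monotonicity-plus-cancellation argument driven by $i$'s best-response condition, which then propagates along the dynamics through the induction.
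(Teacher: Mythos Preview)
Your argument is correct: the identity $M_e = N_e + \mathbf{1}[e\in P^*] - \mathbf{1}[e\in p_j]$ holds, the positivity check goes through, and the monotonicity-plus-cancellation chaining establishes that $P^*$ is a best response for $j$ at $p'$. The paper proves the same single-step fact but by a short contradiction rather than an explicit edge-by-edge computation: after $i_1$ moves to $p_1$, any player already on $p_1$ is optimal by symmetry with $i_1$; and if a suboptimal $i_2$ not on $p_1$ had a best response $p_2\ne p_1$, then since $i_2$'s cost of joining $p_1$ is strictly below $i_1$'s current cost yet $i_2$ weakly prefers $p_2$, a deviation of $i_1$ to $p_2$ would also be beneficial, contradicting $i_1$'s optimality. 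Your $N_e$/$M_e$ bookkeeping makes precise exactly the load comparison that the paper leaves implicit in that last step, so your route is more self-contained, while the paper's is terser and foregrounds the player symmetry.

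One small caveat: your single-step claim only shows that $P^*$ is \emph{a} best response for $j$, whereas the lemma asserts it is the \emph{unique} path chosen. Your iteration tacitly assumes each subsequent deviator actually moves to $P^*$, which is not automatic from ``$P^*$ is a best response'' if ties are possible. The paper's contradiction phrasing targets uniqueness directly (it assumes some $p_2\ne p_1$ is a best response and derives a contradiction); to match the lemma as stated you should either adopt that phrasing or add a brief tie-handling remark.
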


\begin{proof}
Assume that player $i_1$ is the first to migrate and it chooses $p_1$ as a best response. Since $p_1$ is $i_1$'s best-response, after her deviation she cannot be suboptimal, and so are all the other players who use $p_1$ by symmetry. Assume, by a way of contradiction, that $p_2 \neq p_1$ is a best response of a suboptimal player $i_2$ not using $p_1$. Since $i_2$ can deviate to $p_1$ and incur a cost lower than the cost $i_1$ pays, but prefers to deviate to $p_2$ - a deviation of $i_1$ to $p_2$ will also be beneficial, a contradiction.
\end{proof}

Lemma~\ref{lem:follow} directly implies the optimality of Min-Path:
\begin{theorem}
\label{thm:minopt}
Min-Path is an optimal deviator rule for symmetric network formation games.
\end{theorem}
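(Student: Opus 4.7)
The plan is to combine Lemma~\ref{lem:follow} with a structural property of pure NE in symmetric network formation games---namely, that every NE is a single-path profile (all $n$ players use the same $s$-$t$ path)---so that the social cost of every BRD-reachable NE equals the total edge cost $\sum_{e\in P}c_e$ of a single path $P$; Min-Path's optimality is then immediate from its definition.

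I would first establish the single-path claim. Assume for contradiction that a NE $\pi$ has players $i$ on $P$ and $j$ on $R\neq P$. Expanding the two NE inequalities $c_i(\pi)\leq c_i(\pi_{-i},R)$ and $c_j(\pi)\leq c_j(\pi_{-j},P)$ via the cost-sharing formula, and noting that the load on each edge in $P\cap R$ is unchanged by a hypothetical switch of $i$ (resp.\ $j$) between $P$ and $R$, those terms cancel from both sides, leaving
\[
\sum_{e\in P\setminus R}\frac{c_e}{\ell_e}\leq \sum_{e\in R\setminus P}\frac{c_e}{\ell_e+1},\qquad
\sum_{e\in R\setminus P}\frac{c_e}{\ell_e}\leq \sum_{e\in P\setminus R}\frac{c_e}{\ell_e+1}.
\]
Since $c_e/(\ell_e+1)<c_e/\ell_e$ whenever $c_e>0$, chaining these two inequalities forces each sum to be strictly smaller than itself unless $P\setminus R=R\setminus P=\emptyset$, i.e., $P=R$, contradicting $P\neq R$.

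Next I would invoke Lemma~\ref{lem:follow}: in any BRD sequence $\sigma$ from $p^0$, every deviator chooses the same path $P_\sigma$ (the first deviator's best response), so $P_\sigma$ is in use in the terminal NE; combined with the single-path claim, the terminal NE places all players on $P_\sigma$, and its social cost is exactly $\sum_{e\in P_\sigma}c_e$. Since Min-Path, by construction, selects the suboptimal player whose best response path minimises this sum among all candidate first-deviator paths at $p^0$, its terminal NE has the minimum social cost across all BRD-reachable NEs from $p^0$, i.e., its inefficiency equals $1$. The main obstacle is the single-path structural claim when $P$ and $R$ share edges; the key manipulation is to cancel the $P\cap R$ contributions from each NE inequality, after which the strict monotonicity $c_e/(\ell_e+1)<c_e/\ell_e$ forces the contradiction.
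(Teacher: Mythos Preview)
Your proof is correct and follows the paper's approach: both rely on Lemma~\ref{lem:follow} to reduce the question to the first deviator's best-response path, after which Min-Path's choice of the cheapest such path is immediately optimal. The single-path structural claim you prove explicitly is exactly what the paper's two-sentence argument leaves implicit in the phrase ``the first deviation dictates the NE to be reached''; your version is more rigorous but not a different route.
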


\begin{proof}
By lemma \ref{lem:follow} after the first deviation, the first deviation dictates the NE to be reached. Thus, the set of BR paths in $p^0$ are the set of reachable NE, and choosing the cheapest one among them is optimal.
\end{proof}


\subsection{Series of Parallel Paths (SPP) Networks}
\label{sec:spp}
In this section we study network formation games played on SPP networks.
An SPP network consists of $m$ {\em segments}, where each segment is a parallel-edge network.
Let $\{u_0,\ldots,u_m\}$ denote the vertex set, and for every $j \leq m$, let $E_{j}$ denote the set of edges in segment $j$ (i.e., the parallel edges connecting $u_{j-1}$ and $u_{j}$). For a player $i$, let $E(i)=\cup_{s_i < k \le t_i} E_{k}$, denote the set of edges player $i$ may choose.

Note that in an SPP network, a player's choice of an edge in $E_{j}$ is independent of any other segment in her path.
This implies that a network formation game on an SPP network consists of a sequence of symmetric games, where the set of players participating in each game varies. Combining this observation with Lemma \ref{lem:follow} implies the following.

\begin{lemma}
\label{lem:1st}
In every network formation game played on an SPP network with $m$ segments, for every $1 \le j \le m$, and every BR sequence, let $i$ be the first player in the sequence such that $E_j \in E(i)$, and let $e$ be the edge in $E_j$ chosen by player $i$.
Then $e$ is the unique edge in $E_j$ players deviate to.
\end{lemma}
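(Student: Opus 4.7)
The plan is to reduce the lemma to Lemma \ref{lem:follow} (the symmetric-network statement) by projecting the BR dynamics onto segment $j$.

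The first ingredient is the segment-wise decoupling of best responses in an SPP network. Every $s_i$-$t_i$ path is a concatenation of exactly one edge from each $E_k$ with $s_i<k\le t_i$, and player $i$'s cost is additive across segments. Consequently, her best response chooses, on each segment independently, the edge minimizing $c_e/(\ell_e(p_{-i})+1)$. In particular, only players in $N_j:=\{i:E_j\subseteq E(i)\}$ can alter the loads on $E_j$, and whenever such a player deviates and changes her segment-$j$ edge, the new edge is the current segment-$j$ argmin.

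Next I would restrict the BR sequence to segment $j$: extract, in order, those moves that actually change the segment-$j$ edge of the mover (necessarily by a player in $N_j$). By the decoupling, this sub-sequence is a legitimate BR sequence in the symmetric parallel-edge sub-game on $E_j$ played by $N_j$, starting from the restriction of $p^0$ to $E_j$. Since no player in $N_j$ has moved before $i^*$, the loads on $E_j$ just before $i^*$'s move still coincide with the initial loads; hence the segment-$j$ argmin confronting $i^*$, namely $e$, is also the segment-$j$ argmin confronting the first player to appear in the projected sequence (whether that is $i^*$ herself or a later $N_j$-player, in case $i^*$'s best response leaves her $E_j$-edge unchanged). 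Applying Lemma \ref{lem:follow} to the projected symmetric sub-game then gives that $e$ is the unique edge every deviator in the projected sequence commits to, which pulls back to the lemma's claim.

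The main obstacle I anticipate is making the projection airtight. Two things must be verified. First, every kept move is a genuine BR move in the sub-game: by the decoupling, the mover's new $E_j$-edge is a best response within $E_j$ given the current $E_j$-loads, and the fact that her $E_j$-edge actually changed witnesses her suboptimality on $E_j$. Second, the edge $e$ named in the lemma coincides with the first projected mover's choice; this is immediate from the absence of $E_j$-load changes between $i^*$ and that first projected mover. Once these two points are in place, Lemma \ref{lem:follow} can be invoked as a black box and the result follows.
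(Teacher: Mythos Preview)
Your proposal is correct and follows essentially the same route as the paper. The paper's own proof is in fact just the one-sentence observation preceding the lemma: segment-wise decoupling reduces the question on $E_j$ to a symmetric parallel-edge game among the players in $N_j$, whereupon Lemma~\ref{lem:follow} applies. Your write-up simply makes this reduction explicit, including the care you take with the projection (that each kept step is a bona fide BR move in the sub-game, and that the $E_j$-loads are unchanged up to the first projected mover so her choice coincides with $e$); these are exactly the details the paper leaves implicit.
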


\subsubsection{Optimal BR sequence for single-source, multi-targets games }
We first consider single-source, multi-targets instances.
For this case, we devise a dynamic programming algorithm that computes an optimal BR sequence.
\begin{theorem}
For any single-source multi-targets network formation game played on an SPP network, an optimal BR sequence can be computed in time $O(nm)$, where $m$ is the number of segments and $n$ is the number of players.
\end{theorem}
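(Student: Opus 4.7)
The plan is to reformulate the problem as a shortest-path computation on a small DAG and solve it by dynamic programming. My first step is to invoke Lemma~\ref{lem:1st} to conclude that in any BR sequence, each segment $j$ that is traversed by at least one activated player acquires a ``consensus edge'' $e_j^\star$, namely the edge in $E_j$ chosen by the first activated player whose path contains $j$; every later deviator traversing $j$ also uses $e_j^\star$. Hence, if we ensure every (eventually-suboptimal) player is activated, then in the resulting NE the only edge used in segment $j$ is $e_j^\star$, and the social cost becomes $\sum_{j \le \max_i t_i} c_{e_j^\star}$.

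Next I would exploit the single-source assumption: each player's path covers a prefix $\{1,\dots,t_i\}$ of segments. Processing the BR sequence chronologically, the running maximum of $t_i$ is non-decreasing, and each strict increase identifies a new ``first deviator'' $i$ which owns exactly the segments $(t_{\text{prev}}, t_i]$; in each of those segments the loads at the moment of $i$'s activation coincide with the initial loads, so her best response in segment $j$ is a deterministic function $\mathrm{BR}_j(e_i^j)$ of her initial edge $e_i^j$ there. Thus the segments $\{1,\dots,\max_i t_i\}$ are partitioned into contiguous intervals, each owned by a player whose target equals the right endpoint, and all $e_j^\star$'s are determined by the chain of owners.

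Now set up the DP. Let $f(t)$ be the minimum total consensus-edge cost across segments $1,\dots,t$, taken over all chains of first deviators whose last element has target $t$. With $f(0)=0$ and
\[
f(t) \;=\; \min_{\substack{i:\, t_i=t}} \; \min_{\substack{t' < t \\ t' \in \{0\}\cup\{t_{i'}\}}} \Bigl[\, f(t') + \sum_{j=t'+1}^{t} c_{\mathrm{BR}_j(e_i^j)} \,\Bigr],
\]
the optimum social cost is $f(\max_i t_i)$. By precomputing, for each player $i$, prefix sums $C_i(j) = \sum_{k=1}^{j} c_{\mathrm{BR}_k(e_i^k)}$ for $j \le t_i$ (total work $\sum_i t_i = O(nm)$), each transition is an $O(1)$ subtraction $C_i(t)-C_i(t')$. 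The number of transitions is $\sum_t (\text{players with } t_i=t)\cdot |\{t' < t\}| = O(nm)$, so the whole DP runs in $O(nm)$. Backtracking recovers the chain $i_1,\dots,i_K$; the optimal BR sequence activates them in this order and then activates every remaining suboptimal player in an arbitrary order.

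The main obstacle I expect is verifying feasibility of the reconstructed sequence, i.e.\ that each chosen $i_k$ is indeed suboptimal at her turn. At the moment $i_k$ is activated, the consensus edges $e_j^\star$ for $j\le t_{i_{k-1}}$ are already in place while segments $j \in (t_{i_{k-1}}, t_{i_k}]$ still carry their initial loads; one must argue that in this state $i_k$ either has a strictly improving deviation (typically in $(t_{i_{k-1}},t_{i_k}]$, where her initial edge and the candidate $\mathrm{BR}_j(e_i^j)$ differ) or she may be omitted from the chain and replaced by an alternative owner with the same contribution to $f(\cdot)$. A short case analysis, using the freedom to break BR ties in favour of the current strategy and the fact that an already-optimal first-deviator candidate can be safely merged into either the preceding or the following interval without changing the achieved social cost, shows that the DP-optimal chain is always realizable as a valid BR sequence, completing the proof.
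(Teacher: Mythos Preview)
Your proposal is correct and follows essentially the same approach as the paper: both reduce the problem, via Lemma~\ref{lem:1st}, to a dynamic program over the chain of ``first deviators'' (the players whose targets strictly extend the running maximum), with an $O(nm)$ preprocessing of per-segment best-response edge costs and $O(nm)$ transitions overall. The only structural difference is that the paper organises its DP over \emph{suffixes} of the segment chain---peeling off the first player of the chain and recursing on the remaining segments---whereas you organise yours over \emph{prefixes}, peeling off the last player; the two formulations are dual and have identical cost. Your explicit discussion of feasibility (whether each designated first deviator is actually suboptimal at her turn) is a point the paper does not raise at all; the fix you sketch, merging a non-suboptimal candidate's interval into the following one, is essentially right: if $i_k$'s initial edge in a still-fresh segment is already her best response there, a short calculation shows that every other player's best response in that segment is forced to coincide with it, so the consensus edge and hence the social cost are unchanged when $i_{k+1}$ takes over.
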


\begin{proof}
\label{thm:dpsinglesource}
Lemma \ref{lem:1st} applied to a game with a single source implies that a migration of a player $i$ with target $u_j$ determines the path from the source ($s=u_0$) to $u_j$ that will be used in the NE reached.

Let $N_{j}$ denote the subnetwork consisting of the {\em last} $j$ segments in the input SPP. Formally, $N_{j}=\cup_{k=m-j+1}^m E_{k}$.
Our dynamic programming solution is based on calculating an optimal solution for every suffix $N_j$ of the network. In particular, a solution for $N_m$ is a solution for the input SPP.

Let $OPT_j$ be the minimum cost of a path that can be reached in the game induced by $N_j$, and let $P_j$ denote the first player that has to migrate in order to reach it.
The base case of the dynamic programming is $j=1$. Clearly, $OPT_1$ and $P_1$ can be easily found since the game induced by $N_1$ is symmetric:
\begin{displaymath}
OPT_1 = \min_{\{i| t_i = u_{m} \}} c( BR_i( p^0 ) \cap N_1 )
\end{displaymath}
\begin{displaymath}
P_1 = \argmin_{\{i| t_i = u_{m} \}} c( BR_i( p^0 ) \cap N_1 ).
\end{displaymath}

For $j = 2,\ldots, m$ we calculate $OPT_j$ and $P_j$ as follows: For every player $i$ such that $t_i \in \{u_{m-j +1},\ldots u_m\}$, denote by $OPT_j^i$ the minimal cost of a path that can be reached in the game induced by $N_j$, if player $i$ is the first to perform her BR.
Assume $t_i=u_k$. By Lemma \ref{lem:1st}, the NE corresponding to $OPT_j^i$ consists of a prefix $e_{m-j},\ldots,e_k$ in $E_{m-j+1} \cup ... \cup E_k$, which is player $i$'s best response path. That prefix is followed by a suffix of cost $OPT_{m-k}$. Denote the prefix's cost by
$
C_{m-j,t_i}^i = c( BR_{ i }( p^0 ) \cap N_j )
$

We can compute $OPT_{ j }, P_{ j }$ using the following formulas:
\begin{displaymath}
OPT_j^i = C_{m-j,t_i}^i + OPT_{m-k} ~~ , t_i = u_k
\end{displaymath}
\begin{displaymath}
OPT_j = \min_{\{i| t_i \in \{u_{m-j},\ldots u_m\}\}}OPT_j^i
\end{displaymath}
\begin{displaymath}
P_j = \argmin_{\{i| t_i \in \{u_{m-j},\ldots u_m\}\}}OPT_j^i
\end{displaymath}

The algorithm consists of an $O(nm)$ preprocessing in which the values $C_{m-j,t_i}^i$ are calculated for all $j$ and $i$. Given $C$, it is possible to calculate each of $OPT_2,\ldots,OPT_m$ in time $O(n)$, giving a total of $O(nm)$ for the whole algorithm.
Note that the player $P_j$ is a one determining $OPT_j$, that is, $OPT_j = OPT_j^{P_j}$.
Given $OPT_j, P_j$ for all $j$, the BR sequence begins with $P_m$ as the first player, if its target is $u_k$ then the next player to perform BR will be $P_k$, etc.
\end{proof}

\subsubsection{Optimal BR sequence for multi-sources, multi-targets games with proper intervals}
\label{app:DPproper}
In this section we consider multi-sources, multi-targets instances with {\em proper intervals}.
A network formation game played on an SPP network is said to have proper intervals if for every two players $i_1,i_2$ it holds that if $s_{i_1} < s_{i_2}$ then $t_{i_1} \leq t_{i_2}$.
We denote by $[s_i, t_i]$ the set of segments from $i$'s source to $i$'s target, and call it the {\em interval} of player $i$.
I.e., $[s_i, t_i] = E_{ s_i + 1 } \cup E_{ s_i + 2 } \cup ... \cup E_{ t_i }$.

%
%
\begin{theorem}
For any multi-sources multi-targets network formation game played on an SPP network with proper intervals, an optimal BR sequence can be computed in time $O(nm^2)$, where $m$ is the number of segments and $n$ is the number of players.
\end{theorem}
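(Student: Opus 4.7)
The plan is a dynamic program on pairs of segment endpoints $(a,b)$ with $0 \le a \le b \le m$, paralleling the single-source construction but indexed by segment intervals instead of suffixes. Two structural observations drive it. First, by Lemma~\ref{lem:1st}, each segment $k$ is locked by its first crosser to a unique edge $e^*_k$, so the social cost of any reachable NE equals $\sum_{k=1}^{m} c_{e^*_k}$. Second, a player's best response in an SPP network is segment-separable: her BR edge in segment $k$ depends only on the loads in $E_k$, and when a first crosser arrives those loads still equal the initial ones. Hence $e^*_k$ is fully determined by the identity of the first crosser and the initial profile, regardless of what happens in other segments.

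The proper interval hypothesis provides the key split. Sort the players so that $s_1 \le \cdots \le s_n$ and $t_1 \le \cdots \le t_n$ simultaneously. If player $i$ is the first deviator in a subgame on segments $[a+1, b]$, her move locks every segment in $[s_i+1, t_i]$. Any remaining player $j$ with $s_j < s_i$ must have $t_j \le t_i$ by the proper interval condition, so $j$'s remaining free segments lie entirely in $[a+1, s_i]$; symmetrically any $j$ with $t_j > t_i$ has $s_j \ge s_i$ and free segments entirely in $[t_i+1, b]$. Thus the BR dynamics after $i$ splits into two independent sub-dynamics. Accordingly I define $F(a,b)$ to be the optimal social cost contributed by segments $[a+1, b]$ in the subgame consisting of all players whose original interval meets $[a+1, b]$, each truncated to her effective interval $[\max(s_j, a), \min(t_j, b)]$; segment-separability makes this subgame self-contained. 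With $F(a,a)=0$ and a trivial case for subgames already at equilibrium, the recurrence is
\[
F(a,b) \;=\; \min_{i}\;\Bigl\{\, F(a,\, s_i^{e}) \;+\; M_i^{(a,b)} \;+\; F(t_i^{e},\, b) \,\Bigr\},
\]
where $i$ ranges over players of the subgame, $s_i^{e}=\max(s_i,a)$, $t_i^{e}=\min(t_i,b)$, and $M_i^{(a,b)} = \sum_{k=s_i^{e}+1}^{t_i^{e}} c_{e^*_k(i)}$ is the total cost of the edges locked by $i$'s best response on her fixed segments.

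Correctness follows by induction on $b - a$: in any optimal BR sequence for the subgame, the first deviator $i$ induces the split above, and the continuations on the left and right subintervals are themselves optimal BR sequences on smaller subgames, captured inductively by $F(a, s_i^{e})$ and $F(t_i^{e}, b)$. For the running time, a one-time $O(nm)$ preprocessing computes every $e^*_k(i)$ along each player's path and the corresponding prefix sums of the $c_{e^*_k(i)}$, so each $M_i^{(a,b)}$ is available in $O(1)$; with $O(m^2)$ states and $O(n)$ candidate first deviators per state, the total cost is the claimed $O(nm^2)$. The explicit BR sequence is reconstructed by tracing which $i$ attains each minimum.

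The main obstacle I expect is correctly accounting for ``straddling'' players, namely those with $s_j < s_i < t_j$, who belong to the subgame but whose truncated copy in the left subgame loses the portion $[s_i+1, t_j]$. Their cost on those locked segments is already inside $M_i^{(a,b)}$, since the social cost tallies each used edge exactly once, and their free behavior on $[a+1, s_i^{e}]$ is faithfully modelled by their truncated copy in $F(a, s_i^{e})$. The proper interval condition is precisely what prevents a straddler from having free segments on both sides of $[s_i+1, t_i]$, so this truncation is consistent and the two subgames are genuinely independent --- without the proper interval hypothesis this decomposition collapses, which is why the same approach does not extend to the general multi-sources multi-targets case.
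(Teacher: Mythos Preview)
Your proposal is correct and follows essentially the same approach as the paper: a dynamic program over segment intervals $[a,b]$ (the paper's $OPT_{s,t}$), where the first deviator $i$ locks her segments and, thanks to the proper-interval assumption, splits the remaining problem into independent left and right subinstances. Your unified recurrence with effective endpoints $s_i^e,t_i^e$ is just a compact way of writing the paper's four cases, and your prefix-sum preprocessing is a slightly sharper $O(nm)$ variant of the paper's $O(nm^2)$ tabulation of $C_{s,t}^i$, though the overall bound is the same.
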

As in the single-source case in the previous subsection, every deviation dictates the specific path that will be used in the NE reached. Therefore, the best NE path reached by a BR sequence that starts by a deviation of a specific player, consists of the player's BR path and the best solution in the subnetworks that the player doesn't use. Our dynamic programming solution uses this observation to find the player that the NE reached by a BR sequence that starts with her deviation is the optimal. To find that NE, we consider the player's BR path and the optimal paths on the residual, strictly shorter subnetworks.
Denote by $N_{s,t}$ the subnetwork consisting of the segments connecting $u_s$ and $u_t$. That is, $N_{s,t}=\cup_{k=s+1}^{t} E_{k}$.
The dynamic programming solution is based on calculating, for every $0 \le s \le t \le m$, an optimal solution for the subnetwork $N_{s,t}$ by iterating over the subnetworks in ascending lengths. In particular, a solution for $N_{0,m}$ is a solution for the input SPP.
Let $OPT_{s,t}$ be the minimum social cost in a NE that can be reached in the game induced by $N_{s,t}$. Let $P_{s,t}$ denote the first player that has to migrate in order to reach that optimal NE. The base cases of the dynamic programming are all the subnetworks induced by a single segment, i.e., $N_{j-1,j}$ for all $1 \leq j \leq m$. Clearly, $OPT_{j-1,j}$ and $P_{j-1,j}$ can be easily found since the game induced by $N_{j-1,j}$ is symmetric. So we initialize
\begin{displaymath}
OPT_{ j - 1, j } = \min_{ \{ i | E_j \in [s_i , t_i] \} } c( BR_i( p^0 ) \cap N_{ j - 1 , j } )
\end{displaymath}
\begin{displaymath}
P_{ j - 1, j } = \argmin_{ \{ i | E_j \in [s_i , t_i] \} } c( BR_i( p^0 ) \cap N_{ j - 1 , j } )
\end{displaymath}
In addition, for all $0 \leq j \leq m$, we initialize $OPT_{j,j}=0$.

For $t > s+1$, we compute $OPT_{s,t}$ and $P_{s,t}$ as follows: For every player $i$ for which $[s,t] \cap [s_i,t_i] \neq \emptyset$, denote by $OPT_{s,t}^i$ the minimal cost of a path that can be reached in the game induced by $N_{s,t}$, if player $i$ is the first to perform her BR. Denote by $C_{s,t}^i$ the cost of $i$'s BR on the subnetwork $N_{s,t} \cap [ s_i , t_i ]$.
\begin{displaymath}
C_{s,t}^i = c( BR_{ i }( p^0 ) \cap N_{ s , t } )
\end{displaymath}
When calculating $OPT_{s,t}^i$ We distinguish between four cases:
\begin{itemize}
\item If $N_{s,t} \subset [ s_i , t_i ]$ then $OPT_{s,t}^i = C^i_{s,t}$, i.e., if the subnetwork is contained in the $i$'s interval then its deviation will set the cost of the whole subnetwork.
\item If $E_{s+1} \in [s_i, t_i]$ then $OPT_{s,t}^i = C^i_{s,t_i} + OPT_{t_i,t}$, i.e., if the subnetwork has a suffix not contained in $i$'s interval then its deviation will set the cost of their intersection and we need to add the optimal cost of the suffix to the total cost.
\item If $E_t \in [s_i, t_i]$ then $OPT_{s,t}^i = OPT_{s,s_i} + C^i_{s_i,t}$, residue prefix equivalent to the suffix in the previous condition
\item Otherwise, $OPT_{s,t}^i = OPT_{s,s_i} + C^i_{s_i, t_i} + OPT_{t_i,t}$, combines residue prefix and suffix of the subnetwork over the player's interval. Note that since we assume a proper interval instance, no player that plays in $N_{s,s_i}$ can play in $N_{t_i,t}$ and vice versa, so the process of computing each of them is completely independent.
\end{itemize}
In each of the cases, the calculation of $OPT_{s,t}^i$ requires as sub-problems only values of $OPT_{s^{\prime},t^{\prime}}$ for which $N_{s^{\prime},t^{\prime}}$ is strictly shorter than $N_{s,t}$. Also, our base cases include $N_{j,j}$ and $N_{j-1,j}$ for every $j$, thus calculating $OPT_{s,t}^i$ can be done by iterating through $N_{s,t}$ in increasing length.
We get that
\begin{displaymath}
OPT_{s,t} = \min_{\{i| [s,t] \cap [s_i,t_i] \neq \emptyset \}}OPT_{s,t}^i,
\end{displaymath}
\begin{displaymath}
P_{s,t} = \argmin_{\{i| [s,t] \cap [s_i,t_i] \neq \emptyset \}}OPT_{s,t}^i
\end{displaymath}

The algorithm consists of an $O(n \cdot m ^ 2)$-time preprocessing in which the values $C_{s,t}^i$ are calculated for all $i,s$ and $t$. Given $C$, it is possible to calculate each of $OPT_{s,t}$ in time $O(n \cdot m ^ 2)$, giving a total of $O(n \cdot m ^ 2)$ for the whole algorithm. The set of players $P_{ s , t }$ that are calculated in the process form the optimal BR-sequence.


\subsubsection{The performance of Min-Path in SPP networks}
\label{app:SPPminPath}
After presenting non-local optimal deviator rules, we turn to analyze the inefficiency of Min-Path that was shown to be optimal for symmetric network formation games.
Given an SPP network and a BR-sequence, we say that a segment is {\em unresolved} if there are at least two players whose interval include the segment, and each of them will select a different edge in the segment if chosen to perform a BR next. The other segments are denoted {\em resolved}.
By Lemma \ref{lem:1st}, after a player performs her best response, all the segments in her interval are resolved. Thus, no player migrates more than once.
By definition, in a resolved segment, the edge chosen in every reachable Nash equilibria is determined already, therefore, migrations of players who use only resolved segments do not influence the reachable Nash equilibria and in the following analyses we ignore them. In other words,
all the deviations we consider resolve at least one segment. We denote by $R_i$ the
resolved segments after $i$ such deviations.
Let $OPT$ denote the minimal cost of a NE reachable from $p^0$ by some BR sequence. Formally, $OPT= SC(p^{\star}(p_0))$.

\begin{lemma}
\label{lem:SPP}
For any BR sequence of an SPP network instance, as long as there are unresolved segments, there exists a suboptimal player whose interval includes unresolved segments, and if this player is chosen next, then the cost of the unresolved segments she would set ("resolve") is at most OPT.
\end{lemma}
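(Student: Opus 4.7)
The plan is to use the optimum $p^\star$ to produce a canonical candidate player. Write $OPT = \sum_j c^*_j$, where $c^*_j$ is the cost of the unique edge $e^*_j$ used by $p^\star$ in segment $j$ (unique by Lemma~\ref{lem:1st} applied to any BR sequence realizing $p^\star$), so that $\sum_{j \in S'} c^*_j \le OPT$ for any subset $S'$ of segments.

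I fix a BR sequence $\sigma^*$ from $p^0$ that reaches $p^\star$, and take the candidate $i^\circ$ to be the first player along $\sigma^*$ whose move commits an edge in some segment of $U$, occurring at time $T^*$. By the definition of $U$ and of $T^*$, no segment of $U$ has been committed in $\sigma$ before time $T$ nor in $\sigma^*$ before time $T^*$; hence in both states the loads on edges of $U$-segments are the initial loads, and $i^\circ$ is still sitting on her initial strategy (had she deviated in either sequence, a $U$-segment in her interval would have been committed, contradicting $U$-unresolved or the choice of $T^*$). A best response decomposes per segment and depends only on that segment's loads and the player's own edge in it, so $i^\circ$'s best response on each $j \in U \cap [s_{i^\circ},t_{i^\circ}]$ is the same at $p^T$ as at $p^{T^*-1}$, and this common value must be $e^*_j$ (since $\sigma^*$ realizes $p^\star$ and $j$ is first committed in $\sigma^*$ by $i^\circ$'s move, which by Lemma~\ref{lem:1st} dictates the edge used by $p^\star$ on $j$). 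Therefore the cost of the unresolved segments that $i^\circ$ resolves if chosen at $p^T$ equals $\sum_{j \in U \cap [s_{i^\circ},t_{i^\circ}]} c^*_j \le OPT$.

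To finish I need $i^\circ$ to be suboptimal at $p^T$. At $T^*$ in $\sigma^*$ she is suboptimal, and her move actually changes her edge on at least one $U$-segment $j^\diamond$; together these imply a strict per-segment improvement in share somewhere in her interval, which I want to localize to $j^\diamond$ (or another $U$-segment) so that the identical strict improvement is visible at $p^T$, where the state on $U$-segments coincides with that at $p^{T^*-1}$. The main obstacle is the possibility that the strict gain at $T^*$ lies on a resolved segment whose load in $\sigma$ differs from that in $\sigma^*$. I would remove this by choosing $\sigma^*$ as close to $\sigma$ as possible --- requiring $\sigma^*$ to imitate $\sigma$'s earlier commits whenever consistent with reaching $p^\star$ --- so that along $i^\circ$'s interval the resolved-segment loads agree in the two states, and any strict gain at $T^*$ must originate on a $U$-segment, yielding the same strict gain at $p^T$.
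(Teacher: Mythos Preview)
Your candidate $i^\circ$ and the cost argument in the second paragraph are correct and close to what the paper does. The gap is in your third paragraph: you correctly identify that you need $i^\circ$ to be suboptimal at $p^T$, but the fix you propose---aligning $\sigma^*$ with $\sigma$ on resolved segments---is vague and unnecessary. What you are missing is the simple observation that on any \emph{unresolved} segment, no player whose interval contains it can be sitting on her best-response edge: if some player were on her best-response edge $e$ there, then $e$ would be the (strict) best response for every other player in that segment as well (joining $e$ only lowers its share, and $e$ already weakly beats every alternative), so the segment would be resolved. Since $i^\circ$'s interval contains the $U$-segment $j^\diamond$, and the state of $j^\diamond$ at $p^T$ is the initial state (as you already argued), $i^\circ$ is suboptimal at $p^T$ by a strict gain visible entirely within $j^\diamond$. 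No comparison of resolved-segment loads between $\sigma$ and $\sigma^*$ is needed.

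For comparison, the paper avoids this detour by choosing the candidate differently: it takes $i'$ to be the first player, in the order of the optimal sequence, who is suboptimal in the current profile $p$. Suboptimality is then immediate by definition. The remaining work is to check that the unresolved-in-$p$ segments in $i'$'s interval are a subset of those she resolves in the optimal sequence, which follows because every player preceding $i'$ in the optimal order is not suboptimal in $p$, and hence (by the same observation above) every segment in such a player's interval is already resolved in $p$.
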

\begin{proof}
Let $p$ be an intermediate strategy profile in the BR sequence. Consider the players according to the order they deviate in some optimal BR sequence. Let $i^{ \prime }$ be the first player in this order who is suboptimal in $p$. Since no player prior to $i^{ \prime }$ in the optimal sequence is suboptimal, the segments that $i^{ \prime }$ would resolve by a deviation from $p$ are a subset of the segments she resolves in the optimal sequence. In the optimal sequence she obviously resolves these segments such that the selected edges are of total cost at most $OPT$, and therefore this is an upper bound on the total cost of unresolved segments she would set by deviating from $p$.
\end{proof}

Using the above lemma, we provide tight analysis on the performance of Min-Path for SPPs with multi-targets and single or multiple sources.
Note that in a single-source instance, every player resolves the prefix of the network corresponding to her interval.

\begin{theorem}
The inefficiency of Min-Path in SPP network formation games with single-source and multi-targets is $\theta(m)$.
\end{theorem}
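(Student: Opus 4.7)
My plan is to prove the theorem in two directions: an $O(m)$ upper bound using Lemma \ref{lem:SPP}, and an $\Omega(m)$ lower bound via an explicit construction.

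For the upper bound, the key observation is that in a single-source SPP, the resolved segments always form a prefix $E_1,\ldots,E_\ell$ of the network, because every player's interval starts at $u_0$ and any deviator resolves every segment up to her target. Fix any intermediate profile of the Min-Path sequence, let $R$ denote the total cost of the chosen edges in the resolved prefix, and let $i^\star$ be the suboptimal player guaranteed by Lemma \ref{lem:SPP} whose newly-resolved cost is at most $\mathrm{OPT}$. Since every suboptimal player has target strictly beyond $\ell$, her BR path traverses the entire resolved prefix, so her BR path cost equals $R$ plus the cost of her newly-resolved edges. In particular, $i^\star$'s BR path cost is at most $R+\mathrm{OPT}$. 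Because Min-Path chooses the suboptimal player with minimum BR path cost, the player it selects has newly-resolved cost at most $\mathrm{OPT}$ as well. Each Min-Path step strictly extends the resolved prefix by at least one segment, so the sequence terminates in at most $m$ steps, and the final NE cost equals the sum of newly-resolved costs over the deviations, which is at most $m\cdot\mathrm{OPT}$.

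For the lower bound, I would construct an SPP on $m$ segments in which each segment $j$ has a cheap edge $a_j$ and an expensive edge $b_j$, together with initial player strategies engineered so that two NE are reachable from the initial profile: one of cost $\Theta(1)$ per segment (where every segment resolves to $a_j$) and one of cost $\Theta(m)$ per segment (where every segment resolves to $b_j$). The construction places enough ``dummy'' load on the $b_j$ edges so that when the first-deviator of $E_j$ is a small-target player currently sitting on $a_j$, her personal best response is to join $b_j$ (because sharing makes $b_j$ cheaper for her per capita), locking segment $j$ to the expensive edge. Simultaneously, the initial profile contains a large-target player whose single BR move would resolve all segments to the cheap $a_j$ edges. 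The edge costs and loads are tuned so that the small-target player's BR path cost is strictly smaller than every other suboptimal player's BR path cost; Min-Path is therefore forced to pick her first, and iterating this argument across all $m$ segments yields a Min-Path NE of cost $\Theta(m^2)$ against an alternative NE of cost $\Theta(m)$, giving inefficiency $\Omega(m)$.

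The main obstacle I expect in the lower bound is the delicate self-counting balance: the conditions that make the small-target player's best response favor $b_j$ (shared with dummies) tend to also make the would-be ``good'' large-target player prefer $b_j$, so that she is no longer suboptimal and cannot trigger the alternative activation that produces the cheap NE. Resolving this requires choosing the dummy load to lie in the narrow window where the small-target player (not currently on $b_j$) prefers to join while the large-target player (currently on $b_j$) still prefers to leave, exploiting the asymmetry of self-counting; this window exists but shrinks with the chosen cost ratio, so the construction must carefully scale the number of dummies and the edge costs with $m$ to maintain both conditions across all $m$ segments simultaneously. Once this scaling is carried out, the upper bound is matched up to constants, establishing $\alpha^{\mathcal G}_{\text{Min-Path}}=\Theta(m)$.
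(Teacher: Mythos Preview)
Your upper bound is essentially the paper's argument: in the single-source case the resolved segments always form a prefix, every suboptimal player's BR path passes through that prefix at cost exactly $R$, so comparing the Min-Path deviator to the player $i'$ of Lemma~\ref{lem:SPP} shows each deviation adds at most $\mathrm{OPT}$ in newly-resolved cost. One small slip: it is not true that ``each Min-Path step strictly extends the resolved prefix'' --- Min-Path may select a suboptimal player whose target lies inside the already-resolved prefix (she simply joins the resolved edges). The paper handles this by explicitly ignoring such deviations, since they do not affect the reachable NE; you should do the same. With that caveat there are at most $m$ prefix-extending deviations, and the bound follows.

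Your lower-bound plan is exactly the shape of the paper's construction, and the ``self-counting window'' you worry about is real but easily realized. The paper takes $n=m$ players where player $i$ has target $t_i=u_i$; in segment $j$ the expensive edge $b_j$ has cost $n-j$ and is initially used by the $n-j$ players $j{+}1,\ldots,n$, while the cheap edge $a_j$ has cost $1+\epsilon$ and is used by player $j$ alone. Then player $j$ (on $a_j$, paying $1+\epsilon$) prefers to join $b_j$ (paying $\tfrac{n-j}{n-j+1}<1$), while any player on $b_j$ (paying $1$) prefers to join $a_j$ (paying $\tfrac{1+\epsilon}{2}$), so the window is open in every segment simultaneously without delicate scaling. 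Player $i$'s BR path cost is $n-1+(i-1)\epsilon$, so Min-Path picks players $1,2,\ldots$ in order and locks every segment to $b_j$, giving cost $\Theta(m^2)$; letting player $n$ move first resolves everything to the $a_j$'s at cost $\Theta(m)$. This instantiates your sketch with no further tuning needed.
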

\begin{proof}
\label{lem:playerBoundOptSingle}
We show that the total cost determined for the segments resolved in every iteration is at most $OPT$. Since at least one segment is resolved in each iteration, the whole network's cost is bounded by $m \cdot OPT$.
Let $i$ be the $i$-th player chosen to deviate by Min-Path and assume $i$ has unresolved segments. Let $i^{ \prime }$ be the player guaranteed by Lemma \ref{lem:SPP}. It may be that $i=i'$. Both players have the same BR path in the resolved segments and therefore differ only in their unresolved segments. Since Min-Path chose $i$, the cost of her unresolved segments is at most the cost of $i'$'s unresolved segments, which is at most OPT by Lemma \ref{lem:SPP}.

\begin{figure}[ht]
\begin{center}
\includegraphics[height=2.75cm]{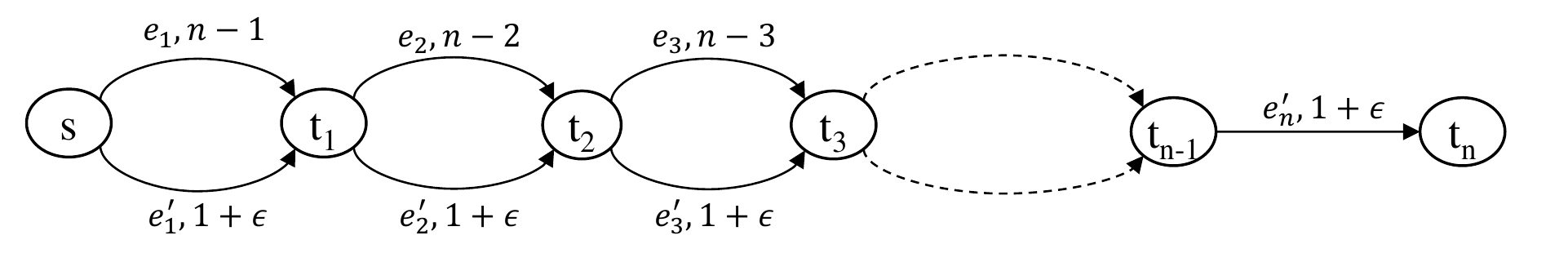}
\caption{A network on which Min-Path has inefficiency $\Omega(m)$.}
\label{fig:minpathmultt}
\end{center}
\end{figure}

We show that the analysis is tight: Consider the network depicted in Figure \ref{fig:minpathmultt}. There are $n = m$ players, where $t_i$ is the target of player $i$.
In the initial strategy profile for every $1 \leq i \leq n$, $p^{0}_{ i }=\langle {e_1}, {e_2}, \ldots , {e_{i-1}},{e^{\prime}_i} \rangle$. Note that in every segment, $E_j$, connecting $t_{ j - 1 }$ and $t_j$, the upper edge costs $n - j$ and is used by the $n - j$ players ${j+1},...,n$ and the lower edge costs $1 + \epsilon$ and is used only by player $j$.
Thus, the initial cost of player $i$ is $c_{p^0_{i}}(p^0)= \underset{{1 \leq j \leq i-1}}{ \sum } \frac{ n - j }{ n - j } + (1 + \epsilon) = ( i - 1 ) + (1 + \epsilon) = i + \epsilon$.

Clearly, in every segment, the players using the upper edge will benefit from deviating to the lower one and the player using the lower edge will benefit from deviating to the upper one. By Lemma \ref{lem:1st}, the first deviation will determine the edge that will be used in the NE reached.

Note that player $i$'s best response, is $\langle {e^{\prime}_1}, {e^{\prime}_2}, \ldots, {e^{\prime}_{i-1}} , {e_i} \rangle$ whose cost is $\underset{ 1 \leq j \leq i - 1 }{ \sum }( 1 + \epsilon ) + ( n - i ) = ( i - 1 ) \cdot ( 1 + \epsilon ) + ( n - i ) = n + ( i - 1 ) \cdot \epsilon$. Player $1$ will be chosen by the Min-Path deviator rule, and will deviate to $\langle {e_1} \rangle$. After this deviation ${e_1}$ is included in the BR of all the players. We can therefore consider the game induced by the nodes $\{ t_1 , ... , t_n \}$, treating $t_1$ as the source. The resulting game is similar to the initial one, and the same analysis shows that the second deviation is of player $i_2$ to $\langle {e_1} , {e_2} \rangle$. Repeating the process, we get that NE reached by Min-Path consists of the edges $\{ {e_1} , {e_2} , \ldots, {e_{n-1}} , {e^{\prime}_n} \}$ and therefore $SC(NE_{Min-Path}(p^0)) = \underset{ 1 \leq j \leq (n-1) }{ \sum } c_{e_j} + c_{e^{\prime}_n} = \underset{ 1 \leq j \leq (n-1) }{ \sum } ( n - j ) + 1 + \epsilon = (n - 1) \frac{ n }{ 2 } + 1 + \epsilon$.

On the other hand, another valid BR sequence is a one in which player $n$ migrates first. Her BR is deviating to all the lower edges. By Lemma \ref{lem:1st}, other players will follow and the BR sequence will converge to $\{{e^{\prime}_1} , {e^{\prime}_2}, \ldots, {e^{\prime}_n} \}$ having $SC = \underset{ 1 \leq j \leq n }{ (1 + \epsilon ) } = n(1 + \epsilon )$. The ratio between the two social costs is $\frac{ (n - 1) \frac{ n }{ 2 } + 1 + \epsilon }{ n(1 + \epsilon ) } \underset{ \epsilon \rightarrow 0 }{ \rightarrow } \frac{ (n - 1) \frac{ n }{ 2 } + 1 }{ n } \approx \frac{n}{2}$. Since $n=m$, we conclude that the inefficiency of Min-Path in SPP networks with single-source and multi-targets is $\theta(m)$.
\end{proof}

We turn to consider arbitrary instances and show that Min-Path can perform poorly.
\begin{theorem}
\label{thm:MP_SPP_ARB}
The inefficiency of Min-Path in SPP network formation games with multi-sources and multi-targets is $\theta( 2^m )$.
\end{theorem}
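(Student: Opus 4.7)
My plan is to establish the $\Theta(2^m)$ bound in two directions, using the structural handle provided by Lemma \ref{lem:SPP}.

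For the upper bound $O(2^m)$, I would run an induction along any Min-Path trajectory, tracking $C_k$, the sum of edge costs of the segments resolved after $k$ non-trivial deviations. At step $k+1$, let $p^*$ be the player Min-Path picks and let $p'$ be the suboptimal player guaranteed by Lemma \ref{lem:SPP}, whose unresolved segments incur total cost at most $OPT$ in her best response. Since Min-Path selects the suboptimal player minimizing $\sum_{e\in BR(\cdot)} c_e$, we have $\sum_{e\in BR(p^*)} c_e \le \sum_{e\in BR(p')} c_e$. Decomposing $p'$'s BR path into its already-resolved portion (cost $\le C_k$, since these edges are a subset of the globally resolved edges) and her unresolved portion (cost $\le OPT$ by Lemma \ref{lem:SPP}), the inequality becomes $\sum_{e\in BR(p^*)} c_e \le C_k + OPT$. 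The newly resolved cost introduced at step $k+1$ equals the unresolved part of $p^*$'s BR path and is therefore at most $C_k + OPT$. Hence $C_{k+1} \le 2C_k + OPT$, and together with $C_0 = 0$ this recurrence yields $C_k \le (2^k - 1)OPT$. At most $m$ steps can ever resolve new segments, so the Min-Path NE has social cost at most $(2^m - 1)OPT$.

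For the lower bound $\Omega(2^m)$, I would exhibit an SPP family with $m$ segments in which the recurrence is driven to near-equality. In each segment $j$ I would place an expensive edge $e_j$ of cost $2^{j-1}$ and a cheap edge $c_j$ of cost $\epsilon_j$ with $\sum\epsilon_j = O(1)$, together with a carefully tuned number of anchor players whose presence on $e_j$ makes its per-user cost strictly smaller than $c_j$'s for any singleton deviator into segment $j$. The non-anchor players would consist of, for each $j$, a local player $L_j$ with single-segment interval $[u_{j-1}, u_j]$ whose unique BR is $e_j$, and a global player $A$ with interval $[u_0, u_m]$ whose BR realizes the cheap path (cost $O(1)$). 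With the initial profile set so that only the $L_j$'s are suboptimal and $A$'s BR path cost always weakly exceeds the cheapest remaining $L_j$'s BR path cost, Min-Path would pick $L_1, L_2, \dots, L_m$ in order of increasing BR path cost $1, 2, 4, \dots, 2^{m-1}$, resolving segment $j$ to $e_j$ and producing a Min-Path NE of cost $\Omega(2^m)$. An alternative BR sequence that instead activates a different first deviator makes the anchors suboptimal toward the cheap edges, triggering a cascade that resolves every segment to $c_j$ and realizes a reachable NE of cost $O(1)$.

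The main obstacle will be the lower bound construction. I must simultaneously secure that (i) anchors remain non-suboptimal along the Min-Path trajectory (otherwise Min-Path would pick an anchor whose BR is the cheap edge, resolving the segment to the cheap side), (ii) $A$'s BR path cost stays above that of the next $L_{k+1}$ at every step (so Min-Path keeps preferring a local player over $A$), yet (iii) an alternative BR sequence realizing $OPT = O(1)$ is genuinely reachable from the same initial profile. Constraints (i) and (iii) pull the anchor counts in opposite directions, so the calibration must sit on boundary equalities and rely on a chosen suboptimal tie-breaking to route the two sequences differently. This is precisely the obstruction that the proper-interval structure of Section \ref{app:DPproper} eliminates, which is why the polynomial DP fails to extend here and why the $2^m$ exponential is the correct bound.
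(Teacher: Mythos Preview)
Your upper bound argument is correct and essentially identical to the paper's: both track the total cost $C_k$ of resolved segments, invoke Lemma~\ref{lem:SPP} to bound the comparison player's BR-path cost by $C_k+OPT$, and obtain the recurrence $C_{k+1}\le 2C_k+OPT$, yielding $C_m=O(2^m)\cdot OPT$. (One small correction: segments already resolved in $p^0$ may contribute cost, so $C_0\le OPT$ rather than $C_0=0$; this is immaterial asymptotically.)

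Your lower bound plan has a genuine gap. Requiring the anchors on $e_j$ to be non-suboptimal (your constraint~(i)) forces their per-user cost $2^{j-1}/a_j$ to be at most what they would pay by joining $c_j$. But then \emph{any} outsider considering segment $j$---in particular your global player $A$---finds joining $e_j$ strictly cheaper than joining $c_j$ as well, so $A$'s best response in each segment is $e_j$, and her BR path is the expensive one, not the cheap one. There is then no suboptimal player whose BR points toward the cheap side, and no BR sequence from $p^0$ can reach an $O(1)$-cost NE at all. Tie-breaking cannot rescue this: the anchor non-suboptimality inequality and the inequality needed for ``$A$'s BR is the cheap path'' are strict opposites on the anchor count $a_j$.

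The paper sidesteps the obstruction by \emph{fusing} your anchors and your global player into a single population. It places $2^{m-1}$ identical ``upper-players'' with full interval $[u_0,u_m]$ on the expensive edges $e'_j$ (cost $2^{m+j-1}$) and one local player alone on each cheap edge $e_j$ (cost $2^j$). Each upper-player pays $2^j$ in segment $j$ and would pay $2^{j-1}$ by joining the local player, so the upper-players are \emph{suboptimal} with BR equal to the cheap path (total cost $2^{m+1}-2$, not $O(1)$); meanwhile the lone local player strictly prefers joining the crowd on $e'_j$. Min-Path then selects the local players in order $1,\dots,m$ (their BR-path costs $2^m,2^{m+1},\dots$ stay just below the upper-players' BR-path cost at every step) and converges to the expensive path of cost $\Theta(2^{2m})$, whereas starting with an upper-player reaches the cheap path, giving the $\Omega(2^m)$ ratio. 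The key device---many identical global players who are collectively attractive to a single joiner but individually eager to leave---is precisely what your separation into non-suboptimal anchors plus one $A$ cannot reproduce.
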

\begin{proof}
\label{lem:playerBoundOpt}
Let $c(R_i)$ denote the total cost of resolved segments after $i$ deviations of
players whose deviation resolved at least one segment. The segments that are resolved already in $R_0$ will be included in the NE reached, therefore, $c(R_0) \leq OPT$. We
prove that $c(R_i) - c(R_{i-1}) \leq c(R_{i-1})+OPT$ for every $i$; i.e., $c(R_i) \leq 2c(R_{i-1})+OPT$. This implies that the total network's cost is $c(R_m) \leq 2^{m+1}\cdot OPT$.

Let $i$ be the $i^{th}$ player chosen to deviate by Min-Path that has some unresolved segments. The total cost of the unresolved segments that $i$ resolves is $c(R_i) - c(R_{ i - 1 })$. Let $i^{ \prime }$ be a player guaranteed by Lemma 3. Since $i$ was chosen by Min-Path, the cost of $i$'s BR path is lower than the cost of $i^{ \prime }$'s BR path.
But the cost of $i$'s BR path is at least the cost of the unresolved segments in
$i$'s BR path. On the other hand, the cost of $i^{ \prime }$'s BR path equals the sum of the cost of her resolved segments, which is bounded by $c(R_{ i - 1 })$ and the cost she would set to her unresolved segments, bounded by $OPT$ (by Lemma 3). Putting
it all together, we get $c(R_i) - c(R_{ i - 1 } ) \leq c( BR_i( p ) ) \leq c( BR_{ i^{ \prime  } }( p ) ) \leq c(R_{ i - 1 }) + OPT$, as required. 

We show that the bound is tight: \begin{figure}
\begin{center}
\includegraphics[height=3cm]{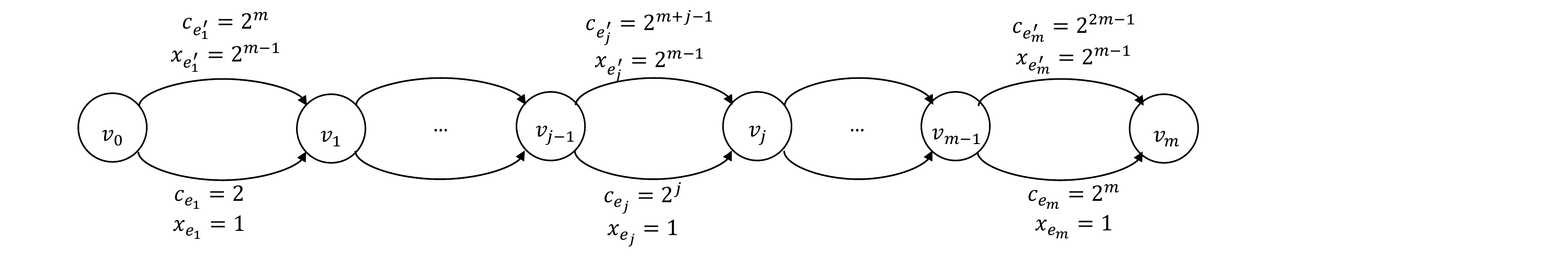}
\caption{Min-Path lower bound for multi-sources multi-targets SPP.}
{\footnotesize Every edge is labelled by its cost and the number of players using it in the initial strategy profile. E.g., the edge $e^{ \prime }_{ 1 }$ costs $2^m$ and is used by $2^{ m - 1 }$ players in the initial strategy profile.}
\label{fig:MPMULTI}
\end{center}
\end{figure}
Consider the network depicted in Figure~\ref{fig:MPMULTI}. There are $2^{ m - 1 } + m$ players as follows:
For $1 \leq i \leq m$: The source of player $i$ is $s_i = v_{i-1}$ and her target is $t_i = v_{i}$. $i$'s initial strategy is $p_{ i }^{ 0 } = e_{ i }$. Therefore, there is one player in each segment of the network who uses the lower edge. There are $2^{ m - 1 }$ other players, the "upper-players", with objective $\langle v_0, v_m \rangle$ who use the upper path $\langle e^{ \prime }_{ 1 } , ... , e^{ \prime }_{ m } \rangle$. For every segment $E_i$, the BR path of player $i$ is $\langle e^{ \prime }_{ i } \rangle$, since $\frac{ c_{ e^{ \prime }_{ i } } }{ x_{ e^{ \prime }_{ i } } + 1 } = \frac{ 2^{ m + i - 1 } }{ 2^{ m - 1 } + 1 } < 2^{ i } = \frac{ c_{ e_{ i } } }{ x_{ e_{ i } } }$ and the BR path of the upper-players is the lower path since $\frac{ c_{ e_{ i } } }{ x_{ e_{ i } } + 1 } = \frac{ 2^{ j } }{ 1 + 1 } = 2^{ j - 1 } < \frac{ 2^{ m + j - 1 } }{ 2^{ m - 1 } } = \frac{ c_{ e^{ \prime }_{ i } } }{ x_{ e^{ \prime }_{ i } } }$.

An optimal BR sequence start by a deviation of an upper-player to the lower path $\langle e_1 , ... e_m \rangle$, after that deviation all the players $1 \leq i \leq m$ do not have a beneficial deviation and all the other upper-players will follow. The NE will be $\langle e_1 , ... e_m \rangle$, whose cost $OPT = \sum_{ 1 \leq j \leq m } c_j =
\sum_{ 1 \leq j \leq m } 2^{ j } =
2^{ m + 1 } - 2$.

Consider now the BR sequence induced by the Min-Path deviator rule. Notice that for two players $1 \leq i_1 < i_2 \leq m$, the BR-paths costs satisfy $c(BR_{ i_ { 1 } }( p^0 ) ) =
c( e^{ \prime }_{ i_1 } ) =
2^{ m + i_1 - 1 } <
2^{ m + i_2 - 1 } =
c( e^{ \prime }_{ i_2 } ) =
c(BR_{ i_{ 2 } }( p^0 ) )$. The cost of the BR path of the upper-players is $\sum_{ 1 \leq j \leq m } c_{ e_j } =
\sum_{ 1 \leq j \leq m } 2^{ j } =
2^{ m + 1 } - 2$. Thus, Player $1$ whose BR path cost $2^{ m }$ will be chosen first. After her migration to $e^{ \prime }_{ 1 }$, the BR path of the upper players becomes $\langle e^{ \prime }_{ 1 } , e_{ 2 } , ... , e_{ m } \rangle$. The cost of this path is $2^m + \sum_{ 2 \leq j \leq m } 2^{ j } =
2^m + 2^{ m + 1 } - 2 > 2^{ m + 1 } =
c_{ e^{ \prime }_{ 2 } } =
c( BR_{ 2 }( p^{ 1 } ) )$, and therefore the second player to migrate will be Player $2$.

Continuing in the same manner, the first $i < m$ migrations are of players $1 , ... , i$ and the BR path of the upper-players is $\langle e^{ \prime }_{ 1 } , ... , e^{ \prime }_{ i } , e_{ i + 1 } , ... , e_{ m } \rangle$, having cost $\sum_{ 1 \leq j \leq i } 2^{ m + j - 1 } + \sum_{ i+1 \leq j \leq m } 2^j =
( 2^{ m + i } - 1 - \sum_{ 1 \leq j \leq m-1 } 2^j ) + ( 2^{ m + 1 } - 1 - \sum_{ 1 \leq j \leq i } 2^{ j } ) =
( 2^{ m + i } - 1 - 2^m + 1 ) + ( 2^{ m + 1 } - 1 - 2^{ i + 1 } + 1 ) = 2^{ m + i } + 2^{ m } - 2{ i + 1 } >
2^{ m + i }$. Player $i+1$ is the minimal of the first $m$ players that did not deviate and her BR path's cost is $c_{ e^{ \prime }_{ i + 1 } } = 2^{ m + i }$, so the Min-Path deviator rule will choose player $i + 1$.

We conclude by induction that Min-Path deviator rule will select the players in order $1 , ... , m $, and the NE that will be reached is $\langle e^{ \prime }_{ 1 } , ... , e^{ \prime }_{ m } \rangle$, having $SC(NE_{ Min-Path }( p^0 )) =
\sum_{ 1 \leq j \leq m } 2^{ m + j - 1 } =
2^{ m - 1 } \cdot \sum_{ 1 \leq j \leq m } 2^{ j } =
2^{ m - 1 } \cdot ( 2^{ m + 1 } - 1 ) =
2^{ 2m } - 2^{ m - 1 }$. The inefficiency of Min-Path in this game is $\frac{ 2^{ 2m } - 2^{ m - 1 } }{ 2^{ m + 1 } - 2 } =
\Omega( 2^m )$.

\end{proof}

\subsection{Local Rules for Extension Parallel Graphs }
\label{sec:ep}
We now show that the inefficiency of any local deviator rule is as high as the PoA, namely, $\Omega(n)$, even in the restricted class of EP networks.
Recall that the state vector of a player consists of the player's cost in her current profile and in the profile obtained by a deviation of the player, and the total cost of the path used by the player in the two profiles.

\begin{theorem}
\label{thm:feature-based}
For the class of single-source network formation games played on extension-parallel networks, the inefficiency of every local deviator rule is $\Omega(n)$.
\end{theorem}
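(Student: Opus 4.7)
The plan is to exploit the \emph{independence of irrelevant players} (IIP) condition directly. I would construct two single-source network formation games $G^{(1)}$ and $G^{(2)}$ on extension-parallel networks, together with initial profiles $p^{(1)},p^{(2)}$, each containing two distinguished suboptimal players: $a_1,b_1$ in $G^{(1)}$ and $a_2,b_2$ in $G^{(2)}$. The networks and initial profiles will be engineered so that the four-entry state vectors coincide across the two instances, namely $v_{a_1}=v_{a_2}=v_A$ and $v_{b_1}=v_{b_2}=v_B$, while the consequence of moving first differs: in $G^{(1)}$ a best NE of cost $O(1)$ is reachable only if $a_1$ deviates before $b_1$, whereas in $G^{(2)}$ a best NE of cost $O(1)$ is reachable only if $b_2$ deviates before $a_2$. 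Picking the "wrong" critical player first triggers a cascade of forced best responses whose terminal NE has cost $\Omega(n)\cdot\mathrm{OPT}$.

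Once such a pair of instances is in hand, the conclusion is mechanical. A local deviator rule $S$ must, by IIP, settle on one consistent preference between the vectors $v_A$ and $v_B$: if $S$ ever selects the player with vector $v_A$ when both are present, it can never choose the one with vector $v_B$ in any profile containing both. Hence in at least one of $G^{(1)},G^{(2)}$ the rule $S$ is forced to pick the losing critical player first, and the resulting equilibrium is $\Omega(n)$ times as expensive as $p^{\star}(p^0)$, giving $\alpha_S^G \in \Omega(n)$, which matches the PoA.

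For the explicit construction I would chain $\Theta(n)$ parallel-edge "gadgets" in series (this is EP, since parallel-edges are EP and a series composition of an EP block with a single edge is again EP). In each gadget there is one cheap edge carrying one player and one expensive edge carrying many players whose share is small, so that deviations propagate: whoever moves first to a gadget's underused edge causes all the others to follow, by an argument analogous to Lemma~2.3. The critical players $a$ and $b$ would straddle gadgets in a way that makes one of them the "trigger" for the whole cascade, while the other's deviation diverts the cascade into a costly configuration. A symmetric second instance, obtained by mirroring the roles and tuning edge costs so that the four state-vector coordinates of $a_2,b_2$ match those of $a_1,b_1$ exactly, closes the argument.

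The main obstacle is the simultaneous matching of all four coordinates of the state vectors (current cost, current path cost, best-response cost, best-response path cost) across the two instances \emph{and} ensuring that the rest of the players in each instance have state vectors that IIP does not use to evade the dilemma; this is handled by giving the remaining suboptimal players strictly distinct cost/path profiles from $v_A,v_B$, so IIP acts only on the $\{v_A,v_B\}$ pair and the reduction argument goes through.
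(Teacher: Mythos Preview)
Your high-level strategy is exactly the one the paper uses: build two instances sharing a pair of state vectors $v_A,v_B$ so that any IIP-consistent rule is forced into the wrong choice on one of them, incurring an $\Omega(n)$ ratio. That part is sound and matches the paper.

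There is, however, a genuine gap in your proposed construction. You claim that chaining $\Theta(n)$ parallel-edge gadgets in series yields an extension-parallel network, but this is false. By the definition in the paper (Appendix~A), the series operation for EP networks only allows composing an EP block with a \emph{single edge}; composing two multi-edge parallel blocks in series produces an SPP network, and the paper explicitly notes that EP and SPP are incomparable classes. So your construction, as described, would not live in the EP class the theorem is about.

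The paper avoids this by not building a long chain at all. Its two instances are constant-size EP gadgets (of the form $(e_1 \to e_2)\parallel e_3$ put in parallel with one or two further edges), on which only four players sit; the $\Omega(n)$ factor comes from a separate branch carrying $n-4$ ``bystander'' players on a single edge $e_5$. Depending on which of the two critical players deviates first, those $n-4$ players either all migrate onto the cheap shared path (giving social cost $O(1)$) or all stay put, each paying $\Theta(1)$ (giving social cost $\Theta(n)$). Crucially, the two instances are tuned so that the \emph{same} pair of state vectors $v_2=(22,34,10,30)$ and $v_3=(15,30,13,34)$ appears in both, but with the roles of ``good first mover'' and ``bad first mover'' swapped. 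There is no cascade through many gadgets; the $n$-dependence is isolated in a single parallel branch, which keeps the network EP. If you want to rescue your plan, this is the mechanism to emulate: a $\Theta(1)$-size EP core plus one extra parallel edge holding $\Theta(n)$ passive players whose equilibrium location is determined by the first critical move.
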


%


\begin{proof}
Consider the network depicted in Figure \ref{fig:feature-based}(a).
There are $n$ players, all sharing the source $s$, and the targets are as depicted in the figure.
Consider the following profile:
\begin{itemize}
\item Player $1$ uses the path $\langle e_1 \rangle$.
\item Player $2$ uses the path $\langle e_1,e_2 \rangle$.
\item Players $3,4$ use the path $\langle e_3 \rangle$.
\item Players $5, \ldots, n$ use the path $\langle e_5 \rangle$. 
\end{itemize}


\begin{figure}[ht]
\begin{center}
\includegraphics[height=3.0cm]{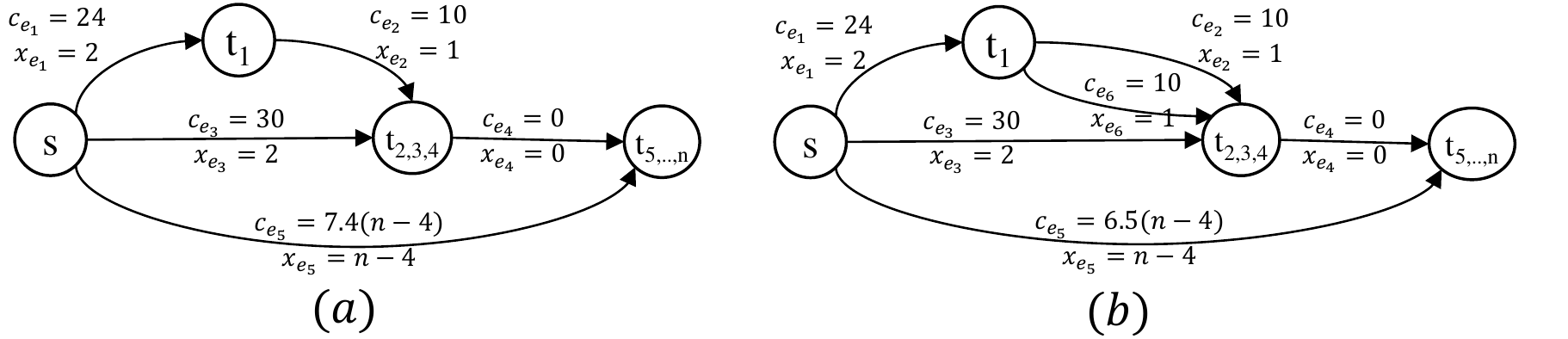}
\caption{A local deviator rule fails $(a)$ if $v_2$ is preferred, and $(b)$ if $v_3$ is preferred.}
{\footnotesize Every edge is labelled by the edge cost and the number of players using it in the initial strategy profile. E.g., the edge $e_{ 1 }$ costs $24$ and is used by two players in the initial strategy profile.}
\label{fig:feature-based}
\end{center}
\end{figure}

Consider player $2$, who uses the path $\langle e_1,e_2 \rangle$.
Her current cost is $22$ (she shares the cost of edge $e_1$ with player $1$ and pays fully for edge $e_2$), the total cost of her path is $34$, her post-deviation cost is $10$ (obtained by deviating to $e_3$, and sharing this cost with players $3,4$), and the total cost of her post-deviation path is $30$.
Thus, the state vector of player $2$ is $v_2 = (22, 34, 10, 30 )$.
Similarly, one can verify that the state vector of player $3$ (or player $4$) is $v_3 = (15, 30, 13, 34)$ (obtained by deviating to the path $\langle e_1,e_2 \rangle$).
The suboptimal players in this profile are players $2$ and $3$ (or $4$).
If the deviator rule chooses the state vector $v_2$ over $v_3$, then player $2$ will deviate to $e_3$, reaching a NE whose social cost is $54 + 7.4 (n-4)$.
On the other hand, if the deviator rule chooses the state vector $v_3$ over $v_2$, then player $3$ will deviate to $\langle e_1 , e_2 \rangle$, and from this point on all players will deviate to $\langle e_1 , e_2 \rangle$, reaching a NE whose social cost is $34$.
We conclude that a deviator rule that prefers $v_2$ over $v_3$ reaches an inefficiency of $\frac{54+7.4(n-4)}{34}=\Omega(n)$.



Consider next the network depicted in Figure \ref{fig:feature-based}(b).
There are $n$ players, all sharing the source $s$, and the targets are as depicted in the figure.
Consider the following profile:
\begin{itemize}
\item Player $1$ uses the path $\langle e_1, e_2 \rangle$.
\item Player $2$ uses the path $\langle e_1, e_6 \rangle$.
\item Players $3,4$ use the path $\langle e_3 \rangle$.
\item Players $5, \ldots, n$ use the path $\langle e_5 \rangle$.
\end{itemize}


One can verify that the state vector of player $2$ (or $1$) is $v_2 = (22, 34, 10, 30 )$ (where the best response is a deviation to $\langle e_3 \rangle$), and the state vector of player $3$ (and $4$) is $v_3 = (15, 30, 13, 34)$ (where the best response is a deviation to $\langle e_1, e_2 \rangle$). These are the only suboptimal players.
By the previous scenario, in order to avoid an inefficiency of $\Omega(n)$, the deviator rule must choose $v_3$ over $v_2$.
If player $3$ deviates, then she deviates to $\langle e_1, e_2 \rangle$. After this deviation, players $4$ and $2$ will also deviate to $\langle e_1, e_2 \rangle$, but the $n-4$ players will stay in their original path (for a cost of $6.5$ each, compared to $6.8$ upon deviation), resulting in a social cost of $34 + 6.5 (n-4)$.
In contrast, a deviator rule that chooses $v_2$ over $v_3$ will result in players $1$ and $2$ deviating to $\langle e_3 \rangle$, followed by the $n-4$ bottom players deviating to $\langle e_3, e_4 \rangle$.
This leads to a social cost of $30$, so a deviator rule that prefers $v_3$ over $v_2$ results in an inefficiency of $\frac{34+6.5(n-4)}{30}=\Omega(n)$.
We conclude that any local deviator rule has inefficiency of $\Omega(n)$.
\end{proof}


\subsection{Weighted Symmetric Network Formation Games on Parallel-Edge Graphs}
\label{sec:weightedSymm}
A {\em weighted symmetric resource selection network formation game} \cite{ARV09,HK12}, also known as a network formation game with weighted players on parallel $\langle s, t \rangle$ links, is a game in which the players are weighted, and all players have the same set of singleton strategies. Formally, each player $i$ has a {\em weight} $w_i>0$, and her contribution to the load of the (single) resource she uses as well as her payment are multiplied by $w_i$, i.e., if an edge with cost $c_e$ is shared by $k$ players with weights $w_1, w_2 ,\ldots, w_k$ then player $i$ pays $\frac{ w_i }{ \sum_{j=1}^{k}w_j } \cdot c_e$.

In Section \ref{sec:weightedHardness} we prove that it is NP-hard to find a deviator rule with inefficiency lower than $\frac{3}{2}$, and in Section \ref{sec:weightedNoLocal} we show that no local rule can guarantee a constant efficiency. In Section \ref{sec:weightedMinPath} we show that while being optimal for unweighted games, Min-Path has inefficiency $n$ in weighted symmetric games, even if the weights are arbitrarily close to each other.

\subsubsection{Hardness Result}
\label{sec:weightedHardness}
We first consider the computational complexity of finding a good BR sequence, and show that it is NP-hard to find one, or even to achieve inefficiency at most $\frac{3}{2}$.

\begin{theorem}
\label{thm:weightedHard}
In weighted network formation games on parallel edge networks, it is NP-hard to find a deviator rule with inefficiency at most $\frac{3}{2}$.
\end{theorem}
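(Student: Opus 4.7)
The plan is to reduce from \textsc{Partition}, mirroring the template used in Theorem~2.1 but calibrated so the gap between a ``good'' and a ``bad'' reachable NE is exactly $3/2$ rather than $\Omega(n)$. Given an instance $\{a_1,\dots,a_n\}$ with $\sum_i a_i = 2$, and the promise that a subset $I$ with $\sum_{i\in I}a_i=1$ may exist, I would build a parallel-edge network with a small constant number of edges plus one ``partition player'' of weight $a_i$ for each $i$ (and a couple of auxiliary trigger players to rule out trivial orderings). The edge costs and initial profile are to be chosen so that two designated ``balance'' edges of equal cost will form the support of a cheap NE only when the total weight on each of them is exactly $1$, i.e., when a valid partition has been realized by the deviations.

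Concretely, I would organize the argument as follows. First, design the initial profile so that every partition player is suboptimal and can deviate to one of the two balance edges; the weighted cost-sharing rule $w_i c_e/(w_i+\ell_e)$ is what makes the NE condition on the balance edges equivalent to ``$\ell_e=1$ on each edge.'' Second, show that if a deviator rule schedules a subset $I$ whose weights sum to $1$ onto one balance edge before anyone else moves, then BR dynamics converges to a NE of social cost $C$ in which both balance edges are populated with total weight $1$ each. Third, show that any other schedule causes at least one partition player to deviate ``early,'' permanently unbalancing the load, so the BR dynamics is forced into a NE where an expensive backup edge remains active (or the balance edges are used unequally), with social cost at least $\tfrac{3}{2}C$. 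Combining these, an algorithm producing a deviator rule with inefficiency $\le 3/2$ would identify such a subset $I$ whenever one exists, yielding the hardness reduction.

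The main obstacles will be twofold. The first, quantitative, obstacle is calibrating the edge costs (and the weight of the auxiliary triggers) so that the ratio between ``balanced'' and any ``unbalanced'' reachable NE is \emph{at least} $3/2$; this requires the expensive backup edge to contribute a fixed additive term that is precisely $C/2$ in any unbalanced outcome, regardless of which non-partition subset actually deviated. The second, structural, obstacle is certifying that \emph{every} BR sequence that does not realize the partition first is trapped into the bad outcome: I must exploit the irreversibility of weighted BR moves (once a too-heavy load has been committed to a balance edge, no subsequent move restores the $1{:}1$ split because the extra weight makes deviation unprofitable) to show that partial progress cannot be repaired. The earlier observations in the paper on weighted cost sharing and the basic potential-function argument for convergence will be used freely to keep the analysis of ``reachable NE'' tractable.
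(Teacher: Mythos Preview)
Your proposal has the right meta-structure (reduce from \textsc{Partition}, engineer a $3/2$ gap between the best reachable NE when the partition is found versus not), but the concrete mechanism you sketch does not work, and the missing idea is exactly the one the paper supplies.

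The core of your plan is that two equal-cost ``balance edges'' will be in NE only when each carries total weight exactly~$1$. This is false under proportional cost sharing. If both edges have cost $c$ and the partition players split with loads $\ell_1+\ell_2=2$, a player of weight $w_i$ on edge~$1$ has no incentive to move iff $c/\ell_1 \le c/(\ell_2+w_i)$, i.e.\ $\ell_1-\ell_2 \le w_i$; symmetrically for edge~$2$. So any split with $|\ell_1-\ell_2|\le \min_i a_i$ is stable, not just the $1{:}1$ split. Worse, the social cost of \emph{any} split that uses both balance edges is exactly $2c$, independent of how the weight is distributed, so ``unbalanced'' and ``balanced'' outcomes are indistinguishable in cost. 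To get a gap you would then need a third expensive edge to stay active in the unbalanced case, but your irreversibility argument (``once a too-heavy load has been committed \ldots no subsequent move restores the $1{:}1$ split'') gives no reason for that third edge to remain in use, and in a parallel-edge game every player can always abandon it for a cheaper alternative.

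The paper's construction sidesteps this entirely with a different gadget: a single heavy trigger player (weight~$2$) sitting on a very expensive edge, plus three other edges of costs $3+\epsilon$, $9+\epsilon$, and $2$, with initial loads $2$, $6$, and $0$ respectively. The cheap edge $e_4$ becomes the trigger player's best response \emph{only} when the loads on the two intermediate edges simultaneously satisfy $l_2<1+\epsilon$ and $l_3<7+\epsilon$; since the partition players start on $e_2$ and their only beneficial move is to $e_3$, this happens iff a subset of total weight exactly~$1$ has migrated. Once the $2$-player reaches $e_4$, everyone follows and the NE has cost~$2$; otherwise the $2$-player lands on $e_2$ or $e_3$ and the cheapest reachable NE costs at least $3+\epsilon$. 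The $3/2$ ratio thus comes from a single threshold test encoded in one player's best-response calculation, not from balancing two edges. If you want to repair your approach, you need a comparable ``detector'' mechanism; the balance-edge idea as stated does not encode the partition constraint.
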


\begin{proof}
We show a reduction from the {\em Partition} problem: Given a set of numbers $ \{ a_1 , a_2 , ... , a_n \} $ such that $\sum_{ i \in [n] } a_i = 2$, the goal is to find a subset $ I \subseteq [n] $ such that $\sum_{ i \in I } a_i = \sum_{ i \in [n] \backslash I } a_i = 1$. This problem is NP-hard even if it is known that such a subset $I$ exists. Given an instance of {\em Partition}, such that for some $ I \subseteq [n] $, it holds that $\sum_{ i \in I } a_i =1$, construct the network and initial profile depicted in Figure \ref{fig:STNPH}(a). Specifically, the instance consists of $n+8$ players using four parallel links as follows:
\begin{itemize}
\item $e_1$ has a large cost $C$ and is used by one player of weight $2$, denoted {\em the $2$-player}.
\item $e_2$ has cost $3 + \epsilon$ and is shared by $n$ players with the weights $\{ a_i \}_{i=1}^n$ corresponding to the Partition instance.
\item $e_3$ has cost $9 + \epsilon$ and is shared by $6$ unit-weight players.
\item $e_4$ has cost $2$ and is not used by any player.
\end{itemize}

\begin{figure}
\begin{center}
\includegraphics[height=3.75cm]{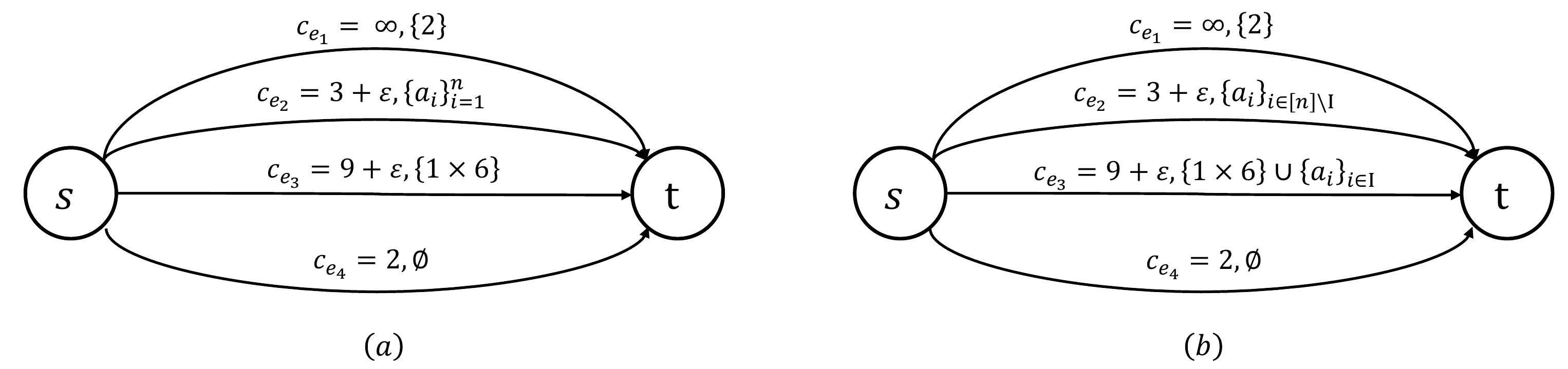}
\caption{The network constructed in the reduction from Partition.}
{\footnotesize $(a)$ is the initial configuration, and $(b)$ is the configuration before the migration of the $2$-player.
The labels on an edge give its cost, and the set of weights of the players assigned to it. E.g., the cost of edge $e_3$ in profile $(a)$ is $9 + \epsilon$ and it is used by $6$ players of weight $1$.}
\label{fig:STNPH}
\end{center}
\end{figure}
The following claim completes the proof.
\begin{claim}
\label{clm:partition}
Let $I$ be a subset of $[n]$ such that $\sum_{ i \in I } a_i=1$. A deviator rule that knows $I$ can determine a BR-sequence that ends up with a NE whose social cost is $2$. If $I$ is not detected then the final NE will have social cost more than $3$.
\end{claim}
\begin{proof}
Assume that a set $I \subset [ n ]$ s.t. $\sum_{ i \in I } a_i = 1$ is known. Observe the BR-sequence that starts by migrations of the players in $I$, and then the player with weight $2$. Note that migrating to $e_4$ incurs a cost of $2$ which is more than their current cost $\frac{ c_2 }{ l_2 } \cdot a_i = \frac{ 3 + \epsilon }{ 2 } \cdot a_i < 2$, for $a_i < 1$.
The BR of the first player is $e_3$ since $\frac{ 9 + \epsilon }{ 6 } < \frac{ 3 + \epsilon }{ 2 }$. Also, after the first player it would be even more beneficial to the others to follow. After the players in $I$ deviate, the optimal BR sequence let the player with weight $2$ migrate.
The load on $e_2$ at this time point is $1$, and the load on $e_3$ is $7$. So the costs the $2$-player would incur by deviating to $e_2,e_3$ and $e_4$ are $\frac{ 3 + \epsilon }{ 3 } \cdot 2, \frac{ 9 + \epsilon }{ 9 } \cdot 2$ and $2$, respectively.
Each of the first two alternatives is strictly larger than $2$, and therefore, the $2$-player will migrate to $e_4$. It is easy to verify that all the unit-weight players will follow and then all the other players. The NE that will be reached is the edge $e_4$ which is clearly optimal.

For the other side of the reduction proof, assume that there exists a BR sequence converging to a NE with $SC=2$. Such a sequence must converge to $e_4$. Our proof is based on analyzing the conditions required for having $e_4$ the BR of some player for the first time. Specifically, we show that the $2$-player is the only one whose BR move may utilize $e_4$, and that this can happen if and only if the loads on $e_2$ and $e_3$ are exactly $1$ and $7$ respectively. First note that as long as $e_4$ is not utilized, each of the players on $e_2$ and $e_3$ either pays less than $2$ or has a BR with cost that is less than $2$.
Therefore, $e_4$ can be the BR only of the $2$-player. Denote by $l_2$ and $l_3$ the loads (total weight) on $e_2$ and $e_3$ when the $2$-player performs BR and migrate to $e_4$.
Given that $e_4$ is the BR of the $2$-player it must hold that $2 \cdot\frac{ c_2 }{ l_2 + 2 } > 2$ and $2 \cdot\frac{ c_3 }{ l_3 + 2 } >2$.
Substituting the costs of $e_2$ and $e_3$, we get $l_2<1+\epsilon$ and $l_3 < 7+\epsilon$. Since $\epsilon$ can be arbitrarily small, as shown in Figure \ref{fig:STNPH}(b), these conditions can only be fulfilled if players of total weight exactly $1$ have migrated from $e_2$ to $e_3$. This set of players corresponds to a subset $I$ in the Partition instance whose total sum is $1$. Thus, converging to a NE with $SC=2$, must involve a detection of the subset $I$. Moreover, since the second-best NE has $SC=3 + \epsilon$, we conclude that any deviator rule with inefficiency $\frac{3}{2}$, must be able to exactly solve the partition problem.
\end{proof} \end{proof}

\subsubsection{ Local Deviator Rules }
\label{sec:weightedNoLocal}
Our bad news continue with a negative result referring to any local deviator rule. Recall that in weighted network formation games, the state vector of a player includes her weight, her current cost, her strategy's cost, her cost after performing BR, and the cost of her BR strategy.

\begin{theorem}
\label{thm:weightedLB}
Any local deviator rule has inefficiency $\Omega ( \sqrt{ n } )$ in weighted network formation games on parallel-edges networks.
\end{theorem}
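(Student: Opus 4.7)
The plan is to prove the lower bound by exhibiting a pair of ``twin'' instances that force any IIP-respecting rule into a contradiction, in the same spirit as the proof of Theorem \ref{thm:feature-based} but exploiting weights to push the gap up to $\Omega(\sqrt{n})$. Specifically, for each integer $k$ I will construct two weighted parallel-edge games $G_1(k)$ and $G_2(k)$, each on $n = \Theta(k^2)$ players, together with initial profiles in which exactly two players are suboptimal. The two suboptimal players in $G_1$ will have state vectors $(v_A, v_B)$, and the two suboptimal players in $G_2$ will have the very same pair $(v_A, v_B)$, despite the rest of the instance being different.

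The next step is to make the post-deviation cascades asymmetric in the two instances. In $G_1$ I will arrange the edge costs, the non-deviating players' weights, and their current assignments so that if the player holding state vector $v_A$ moves first, her best response lies on an edge whose reduced price per unit weight triggers a chain of beneficial migrations ending at an equilibrium of social cost $O(1)$; whereas if the player holding $v_B$ moves first, her best response traps a heavy coalition on a costly edge, yielding a NE of social cost $\Omega(k)$. The game $G_2(k)$ will be the mirror image: $v_B$ now leads to the cheap NE and $v_A$ to the expensive one. The two suboptimal players will in both instances be the only suboptimal ones, so the inefficiency of any deviator rule is governed entirely by which of $v_A, v_B$ it chooses.

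Finally, I invoke IIP. A local deviator rule $S$ chooses among suboptimal players on the basis of their state vectors alone, and by the IIP condition, the relative preference between the two fixed state vectors $v_A$ and $v_B$ cannot flip between $G_1$ and $G_2$, nor can it depend on the identities or weights of the other (non-suboptimal) players. Hence $S$ must pick the same side in both twins, which forces it to produce the bad equilibrium in at least one of them, giving $\alpha_S \geq \Omega(k) = \Omega(\sqrt{n})$.

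The main obstacle is the quantitative tuning: I must choose edge costs and a weight distribution (a mix of roughly $k$ players of weight $\Theta(1)$ and $\Theta(k^2)$ players of very small weight, with carefully set edge costs $c_1, c_2, c_3, c_4$) so that the two suboptimal players have exactly matching weights, current costs, path costs, post-BR costs, and post-BR path costs across the two instances, while still producing an $\Omega(\sqrt{n})$ gap in the cascades. This is delicate because each state-vector entry of a suboptimal player is a ratio of an edge cost to a total load, so changing the ambient configuration by moving low-weight ``filler'' players around keeps loads (and hence state vectors) fixed up to $o(1)$ while radically altering which migrations remain beneficial once the first deviation occurs. The rest of the argument is a direct verification that in the designated cascade no further deviations escape the intended equilibrium, mirroring the one-instance analyses used earlier in the section.
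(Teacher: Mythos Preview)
Your high-level strategy --- build two instances sharing a pair of suboptimal state vectors, arrange opposite ``good'' first moves in each, and invoke IIP --- is exactly the paper's approach. However, there is a real gap in your plan: you propose to make the two suboptimal state vectors match across $G_1$ and $G_2$ only ``up to $o(1)$'' by shuffling low-weight filler players. The IIP condition is stated for \emph{identical} state vectors; if $v_A$ in $G_1$ differs from $v_A$ in $G_2$ by any positive amount, IIP is vacuous between those profiles and your contradiction disappears. You must engineer an exact match, and with your filler-player mechanism this is not automatic, because every coordinate of the state vector (current cost, BR cost) is a ratio involving the total load on an edge.

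The paper sidesteps this difficulty by a cleaner device than filler players. It takes $G_a$ (three parallel edges, $n=\Theta(r^2)$ players) and forms $G_b$ by \emph{adding} one new parallel edge $e_4$ of cost $\approx 2/r$ together with one new light player on it. The new edge is priced so that it is \emph{not} the best response of any of the original suboptimal players; hence their state vectors in $G_b$ are literally unchanged from $G_a$ --- same edges, same loads, same BR targets. The trick is that in $G_b$, after the $v_2$-player deviates first, Player~1's best response \emph{becomes} $e_4$, opening a cascade to a NE of cost $O(1/r)$; playing $v_1$ first never uncovers $e_4$ and ends at cost $\Omega(1)$. Combined with $G_a$ (where $v_1$ first yields cost $1$ and $v_2$ first yields cost $r$), IIP forces inefficiency $\Omega(r)=\Omega(\sqrt{n})$. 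Two minor corrections to your outline: you do not need exactly two suboptimal players (the paper has $r+1$ in $G_a$), only that the two target state vectors both occur; and the second instance is not a ``mirror image'' but an augmentation that reverses which first move is good.
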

\begin{proof}
Given $r$, we show that no local deviator rule has inefficiency lower than $r/2$. Consider the game $G_a$ depicted in Figure \ref{fig:weighted2}(a).
\begin{figure}
\begin{center}
\includegraphics[height=3.5cm]{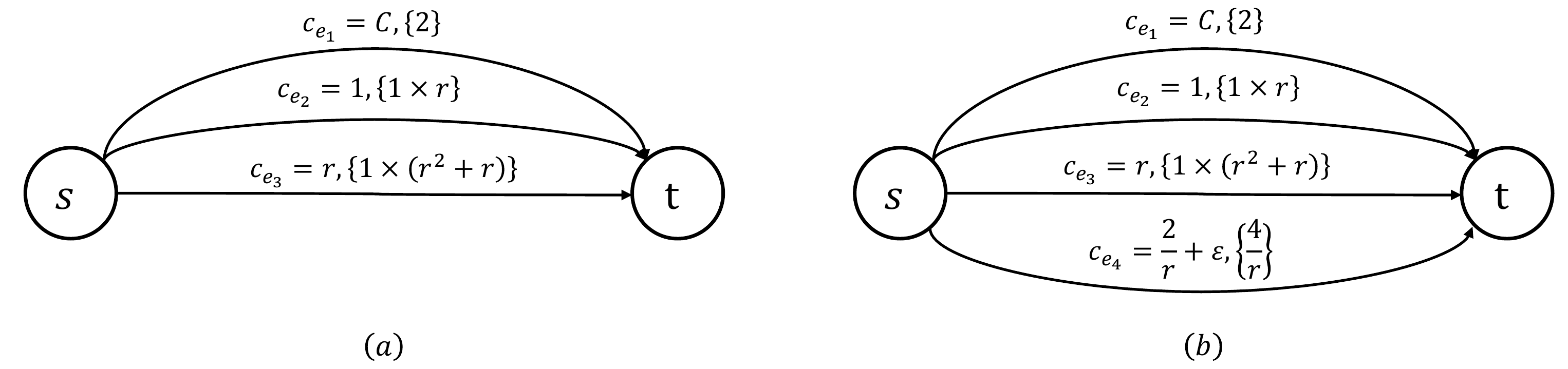}
\caption{A feature-based deviator rule fails $(a)$ if $v_2$ is prioritized, and $(b)$ if $v_1$ is prioritized.}
{\footnotesize The labels on an edge give its name, its cost, and the set of weights of the players assigned to it in $p^0$.
The labels on an edge give its cost, and the set of weights of the players assigned to it. E.g., the cost of edge $e_3$ in profile $(a)$ is $r$ and it is used by $( r^2 + r )$ players of weight $1$.}
\label{fig:weighted2}
\end{center}
\end{figure}

The network consists of three parallel edges. The upper edge, $e_1$ has a very high cost, $C$, and is used only by Player $1$ whose weight is $w_1 = 2$. The middle edge, $e_2$ costs $1$ and is used by $r$ unit-weight players $2 , ... , r + 1$.
The lower edge, $e_3$ costs $r$ and is used by $r^2 + r$ unit-weight players $r + 2 , ... , r^2 + 2r + 1$, each has weight $w_i = \frac{ r }{ r + 1 }, i \in \{ 2 , ... , r + 1 \}$. The suboptimal players in $p^0$ are the players on $e_1$ and $e_2$.
Since $\frac{ 1 }{ r+2 } < \frac{ r }{ r^2 + r + 2 }$ the BR of Player 1 is $\langle e_2 \rangle$. Also, since $\frac{ r }{ r^2 + r + 1 } < \frac{ 1 }{ r }$, a deviation to $\langle e_3 \rangle$ is the BR of Player 2 and all its equivalents. Observe the state vectors corresponding to Player 1 and Player 2: Player 1 currently pays $C$ for a path that costs $C$, and after a deviation to its BR path $e_2$ will pay $\frac{ 2 }{ r + 2 }$ for a path that costs $1$. Her weight is $2$ and therefore $v_1=(C,C, \frac{ 2 }{ r+2 }, 1 , 2)$. One can verify that the corresponding state vector of Player 2 is $v_2=(\frac{1}{r}, 1 , \frac{ r }{ r^2+r+1 }, r , 1 )$ (obtained by a deviation to $e_3$).


If a deviator rule prefers $v_1$ over $v_2$ then the resulting NE will be $\langle e_2 \rangle$, since no player on $e_2$ will be suboptimal after the first deviation. On the other hand, if $v_2$ is chosen then after her deviation the only BR path is $e_3$ and the resulting NE will be $\langle e_3 \rangle$. We conclude that in order to achieve inefficiency less than $r$, the deviator rule must choose $v_1$ over $v_2$.

Consider now the game $G_b$ depicted in Figure \ref{fig:weighted2}(b). It extends the game $G_a$ with an additional edge $e_4$ whose cost is $\frac 2 r +\epsilon$ and a single player $r^2 + 2r + 2$ having weight $w_{ r^2 + 2r + 2 } = \frac{4}{r}$. We show that in this game, a deviator rule must choose $v_2$ in order to reach an efficient NE.
Note that for sufficiently large $r$ and respectively sufficiently small $\epsilon$,
\begin{displaymath}
\frac{ \frac{ 2 }{ r } + \epsilon }{ \frac{ 4 }{ r } + 1 } =
\frac{ 2 + \epsilon \cdot r }{ r + 4 } >
\frac{ r }{ r^2 + r + 1 }
~~~\mbox{~~and~~}~~~~~~~
\frac{ \frac{ 2 }{ r } + \epsilon }{ \frac{ 4 }{ r } + 2 } =
\frac{ 1 + \epsilon \cdot \frac{ r }{ 2 } }{ r + 2 } > \frac{ 1 }{ r + 2 }.
\end{displaymath}
Therefore, the new edge $e_4$ is not the BR choice of any player and the state vectors of the players on $e_1$ and $e_2$ remain $v_1$ and $v_2$ as in the game $G_a$.
We show that an efficient NE is achieved if and only if a player on $e_2$ plays first.

If the $\frac{ 4 }{ r }$-player is chosen to deviate first, she will select $\langle e_3 \rangle$ as her best response, since $\frac{ r }{ r^2 + r + \frac{ 4 }{ r } } = \frac{ 1 }{ r + 1 + \frac{ 4 }{ r^2 } } < \frac{ 1 }{ r + \frac{ 4 }{ r } }$. After this deviation $\langle e_4 \rangle$ is the least profitable choice, and therefore a NE consisting of $\langle e_4 \rangle$ will not be reached.
The players on $e_3$ are not suboptimal in $p^0$, and therefore, will never be selected to play first.
If $v_1$ is chosen, then similar to $G_a$, we will end up with a NE consisting of $\langle e_3 \rangle$.

An optimal BR sequence chooses $v_2$ and let a unit player from $e_2$ migrate to $e_3$. Then, Player $1$ is selected. Since
$
\frac{ \frac{ 2 }{ r } + \epsilon }{ \frac{ 4 }{ r } + 2 } =
\frac{ 1 + \epsilon \cdot \frac{ r }{ 2 } }{ r + 2 } <
\min \{ \frac{ 1 }{ r - 1 + 2 } , \frac{ r }{ r^2 + r + 1 + 2 } \}
$
she will choose the lower cheapest path $\langle e_4 \rangle$. The unit-players will then follow, leading to $SC = \frac{ 2 }{ r }$.

We conclude that in order to avoid inefficiency $\frac{ r }{ 2 }$, in each of the games, an optimal deviator rule must choose a different deviator among the players corresponding to state vectors $v_1$ and $v_2$. Therefore, any local deviator rule has inefficiency at least $\frac{ r }{ 2 } = \Omega ( \sqrt{ n } )$.
\end{proof}


\subsubsection{ The Performance of Min-Path in Weighted Network Formation Games }
\label{sec:weightedMinPath}
Weighted network formation games on parallel-edges networks can be extended to weighted symmetric games with strategies consisting of two resources (a $2$-segment SPP). By \\ \cite{AD+08}, these are potential games and any BR sequence converges to a NE.
In Theorem \ref{thm:minopt} we showed that the deviator rule Min-Path ensures the optimal reachable NE in the
unit-weight symmetric games. We show that with weighted players, the inefficiency of Min-Path can be as bad as the PoA even if the weights are arbitrarily close to each other, and the strategies are sets of two resources.

\begin{theorem}
\label{thm:minweightedbad}
The Min-Path deviator rule in weighted network formation games on a $2$-segment SPP have inefficiency $\Omega ( n )$. This holds even if the ratio $\max_i w_i/ \min_i w_i$ is arbitrarily close to $1$.
\end{theorem}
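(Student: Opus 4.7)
The plan is to exhibit, for every $n$ and every $\eta>0$, a weighted symmetric network formation game on a 2-segment SPP with $n$ players whose weight ratio $\max_i w_i/\min_i w_i$ is at most $1+\eta$, and such that the NE reached by the Min-Path rule has social cost $\Omega(n)$ times that of another NE reachable from the same initial profile. This is a lower-bound construction, so the only work is to design one well-calibrated instance.

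The first step is to choose the network: two segments in series, each consisting of two parallel edges; label them $e^L_1,e^H_1$ in segment 1 and $e^L_2,e^H_2$ in segment 2. The $n$ players are split into two classes: a constant number of ``light'' players of weight $1$ and the remaining $n-O(1)$ ``heavy'' players of weight $1+\eta$. I would design the initial profile and the four edge costs so that exactly two kinds of suboptimal players exist, and so that the sum of raw edge costs in the best-response path of a light player is strictly smaller than the best-response raw cost of every heavy player. This forces Min-Path to select a light player first.

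The key mechanism is that, unlike in the unweighted symmetric case, Lemma~\ref{lem:1st} does not hold for weighted players: the edge chosen by the first deviator in a segment need not be the one adopted by subsequent heavier deviators, because weighted cost-sharing makes a player's share depend on the weight composition, not only the count, of co-users. I would exploit this by making the light player's best-response edges $e^L_1,e^L_2$ attractive only to unit-weight players; once a single light player sits on $e^L_j$, the cost share a heavy player would pay upon joining is strictly worse than her share on $e^H_j$. The heavy players then cascade instead onto $\langle e^H_1,e^H_2\rangle$, producing an NE whose social cost is $\Omega(n)$ (dominated by a fixed-cost edge shared by $\Theta(n)$ heavy players). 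In contrast, had a heavy player deviated first, her best response would have pulled the entire population onto a single cheap $L$-path, yielding an alternate reachable NE of cost $O(1)$; dividing gives the claimed $\Omega(n)$ inefficiency.

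The main obstacle is the simultaneous calibration of the four edge costs and the initial placement so that (i) the raw-cost ordering required by Min-Path selects the ``wrong'' light deviator, (ii) after the wrong first move the BR dynamics cascade verifiably terminates at the expensive NE, and (iii) a different first deviator would trigger the cheap cascade. The single parameter $\eta$ controls the weighted-share inequalities at each step, and by placing the edge costs on the correct side of the thresholds determined at $\eta=0$, the inequalities persist for all sufficiently small $\eta>0$ by continuity; hence the construction works with weight ratio arbitrarily close to $1$. Verifying each of the weighted cost-share inequalities at each profile traversed by the two cascades is routine arithmetic once the costs are fixed, and this finishes the proof.
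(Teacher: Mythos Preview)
Your proposal has two concrete gaps that, as written, make the construction fail.

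First, the social-cost accounting is wrong. In fair-cost-sharing network formation the social cost of a profile equals the sum of the costs of the edges in use; an edge of cost $c$ contributes exactly $c$ regardless of how many players share it. So ``a fixed-cost edge shared by $\Theta(n)$ heavy players'' contributes $O(1)$, not $\Omega(n)$, to the social cost. With only four edges, the ratio between any two reachable NE is at most the ratio of the two possible path costs, which is a fixed number once the four edge costs are fixed. To get an $\Omega(n)$ ratio you must let the edge costs themselves scale with $n$; your write-up does not do this, and even if you intended it, the weighted-share inequalities you appeal to (differences of order $\eta$ between light and heavy shares) would then have to be calibrated against cost gaps that grow with $n$, which you have not addressed.

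Second, the logic about best responses is internally inconsistent. You require (i) the light player's BR path to have strictly smaller raw cost than every heavy player's BR path, so that Min-Path selects a light player; hence the heavy player's BR is \emph{not} the $L$-path. But you then assert (ii) ``had a heavy player deviated first, her best response would have pulled the entire population onto a single cheap $L$-path.'' These two statements contradict each other: if the heavy player's first-move BR were the $L$-path, its raw cost would equal the light player's BR raw cost, violating (i). You would need a far more delicate multi-step alternative sequence (heavy deviates to a non-$L$ path, then further moves somehow funnel everyone to $L$), and nothing in the sketch indicates how to arrange that with only two edges per segment and near-equal weights.

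For comparison, the paper's construction is quite different: each segment has $\Theta(n)$ parallel edges with costs scaling polynomially in $n$, the players are split roughly evenly into $k$ players of weight $1+2/k$ and $k{+}1$ unit-weight players, and it is the \emph{heavier} players whose BR path has the smaller raw cost (so Min-Path selects them), steering the dynamics to a path of cost $\Theta(n^2)$, while an alternative sequence starting with a unit-weight player reaches a path of cost $\Theta(n)$. The $\Omega(n)$ ratio thus comes from the edge-cost scale, not from the number of sharers.
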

\begin{proof}
Let $k > 2$ be an integer, and let $l = 1 + \frac{ 2 }{ k }$ and $r = l \cdot k = k + 2$. Consider the SPP network consisting of two segments, $E_1$ and $E_2$, depicted in Figure~\ref{fig:noRmin1}.
The game consists of $r-1 + \frac{ r }{ l } = 2k + 1$ players in the following initial configuration:
\begin{itemize}
\item $\frac{ r }{ l } = k $ players with weights $l$, denoted {\it $l$-players}. Each of these players is using an edge with a cost of $2r$ that only she is using in $E_1$ and the upper edge in $E_2$.
\item $r - 1$ unit-weights players. Each of these players is using the lower edge in $E_1$ and an edge of cost $2r$ that only she is using in $E_2$.
\end{itemize}

\begin{figure}
  \begin{center}
\includegraphics[height=4.357cm]{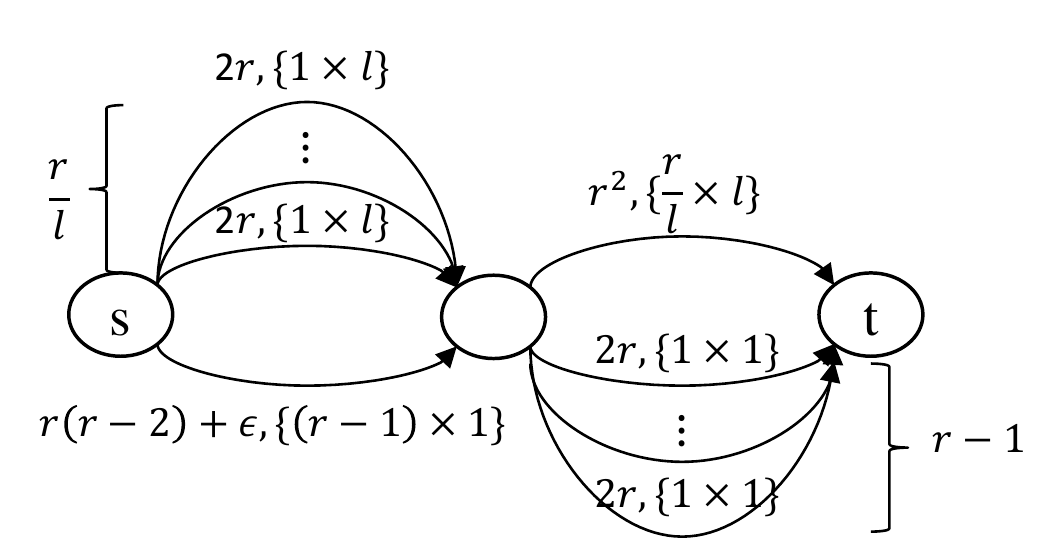}
  \caption{\small An instance for which Min-Path ensures no finite inefficiency.}
{\footnotesize Every edge is labelled by its cost and the weights of players using it.}
\label{fig:noRmin1}
  \end{center}
\end{figure}
We calculate the players' BR in this configuration.
The BR edge of the $l$-players in $E_1$ is the lower one since for sufficiently small $\epsilon$:
\begin{displaymath}
\frac{ 2r }{ 2l } \cdot l =
r >
r \cdot \frac{ k + 2 }{ k + 2 + \frac{ 2 }{ k } } + \epsilon \frac{ l }{ ( r - 1 ) + l } =
r \cdot \frac{ l \cdot k }{ ( r - 1 ) + l } + \epsilon \frac{ l }{ ( r - 1 ) + l } =
\frac{ r( r - 2 ) + \epsilon }{ ( r - 1 ) + l } \cdot l.
\end{displaymath}
The BR edge of the $l$-players in $E_2$ is one of the lower edges since
$\frac{ r^2 }{ r } \cdot l = r \cdot l >
\frac{ 2r }{l +1} \cdot l$.
Thus, the BR path of every $l$-player is migrating to lower edges, and therefore the cost of their BR path is $r(r-2) + \epsilon + 2r = r^2 + \epsilon$.
The unit-weight players' BR edge in $E_1$ is an upper edge since
\begin{displaymath}
\frac{ r( r - 2 ) + \epsilon }{ r - 1 } >
\frac{ r( r - 2 ) }{ r - 1 } =
r \cdot \frac{ k }{ k + 1 } =
r \cdot \frac{ 2 }{ 1 + 1 + \frac{ 2 }{ k } } =
\frac{ 2r }{ l + 1 }.
\end{displaymath}
The BR edge in $E_2$ for a unit-weight player is the upper one since
$\frac{ r^2 }{ r + 1 } < \frac{ 2r }{ 1 + 1 }$.
Thus, the BR path of the unit-weight players is the upper edges, and therefore the cost of their BR path is $2r + r^2$.
For $\epsilon < r$ it holds that $2r + r^2 > r^2 + \epsilon$; therefore, the Min-Path deviator rule will let an $l$-player whose BR is minimal migrate first. It is easy to see that the unit-weight players will join this path and the NE reached by Min-Path consists of the lower edge in the left segment and one of the lower edges in the right segment. That implies, $SC( NE_{Min-Path}( p^0 ) ) = r^2 + \epsilon$.

We show that a better BR sequence exists.
Consider a BR sequence that starts by a migration of a unit-weight player, followed by a migration of all the $l$-players whose edge in the first segment was not the one migrated to. After the migration of the unit-weight player the loads on the edges she migrated to are $l + 1$ in $E_1$, and $r + 1$ in $E_2$. It is easy to verify that for sufficiently large $k$,
\begin{displaymath}
\frac{ 2r }{ 2l + 1 } \cdot l =
r \cdot \frac{ 2l }{ 2l + 1 } =
r \cdot \frac{ 2k + 4 }{ 3k + 4 } <
r \cdot \frac{ k }{ k + 1 + \frac{ 2 }{ k } } =
r \cdot \frac{ r - 2 }{ r - 2 + l } <
\frac{ r ( r - 2 ) + \epsilon }{ r - 2 + l }
\end{displaymath}
so all of the $l$-players will follow the first migration in $E_1$. Also, the BR edge for an $l$-player in $E_2$ is a lower edge, since
\begin{displaymath}
\frac{ 2r }{ 1 + l } =
r \cdot \frac{ 1 }{ 1 + \frac{ 1 }{ k } } <
r \cdot \frac{ 1 }{ 1 + \frac{ 1 }{ k + 2 } } =
r \cdot \frac{ r }{ r + 1 } =
\frac{ r^2 }{ r + 1 }.
\end{displaymath}

It is easy to verify that after the migration of an $l$-player, her chosen path, of cost $2r + 2r$, will remain the BR path of all the $l$-players and afterwards it becomes the BR path of the unit-weight players as well. Therefore, there exists a BR sequence that converges to a NE with social cost $4r$. We conclude,
\begin{displaymath}
\frac{ SC( NE_{ Min-Path }( p^0 ) ) }{ SC( p^{ \star }( p^0 ) ) } \geq \frac{ r^2 + \epsilon }{ 4r } > \frac{ r }{ 4 } = \frac{ k + 2 }{ 4 } > \frac{ k }{ 4 }
\end{displaymath}
Note that the number of players is $n = \frac{ r }{ l } + ( r - 1 ) < 2r = 2k + 4$.
Since $k$ can be arbitrarily large, we conclude $\alpha_{Min-Path}^G = \Omega(n)$. Thus, Min-Path ensures no finite inefficiency. Note that $l$ can be arbitrarily close to $1$. This implies that our bound is valid even if the ratio between the maximal and the minimal weights is arbitrarily close to $1$.
\end{proof}

\section{ Job Scheduling Games }
\label{sec:sched}

Job scheduling games are resource selection congestion games corresponding to scenarios in which each player controls a job and needs to select a machine to process the job.
We consider two models:
\begin{enumerate}
\item
Every job is associated with a length $w_i$ corresponding to its processing time. The load on a machine $M_j$ in a schedule $p$ is the total processing time of the jobs assigned to it. That is, $L_j( p ) = \sum_{ i: p_i = M_j } w_i$. The latency function is linear, thus, the cost of job $i$ in a schedule $p$ is $c_i(p)=L_{p_i}(p)$ \cite{Voc07}.
\item
Congestion has conflicting effects (\cite{FT12,CG11}). Jobs have unit processing time and machines have an activation cost $B$. The cost function of job $i$ in a schedule $p$ consists of two components: the load on the job's machine and the job's share in the machine's activation cost. Formally, the cost of job $i$ in a schedule $p$ where $p_i=M_j$ is $c_i( p) = L_j(p) + B/L_j(p)$.
\end{enumerate}


\subsection{ Weighted Jobs on Parallel Identical Machines }
In what follows we show that no local deviator rule has inefficiency better than the PoA (which is known to be $\frac{ 2m }{ m + 1 }$ \cite{FH79}). Recall that the state vector we use consist of the job's length, its machine and the machines' loads.

\begin{theorem}
\label{thm:lb_sched}
The inefficiency of every local deviator rule is $\frac{ 2m }{ m + 1 }$.
\end{theorem}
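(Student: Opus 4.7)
The upper bound $\alpha_S^{\mathcal{G}} \le \frac{2m}{m+1}$ is immediate from Observation~\ref{ob:alpha_range} together with the PoA bound $\frac{2m}{m+1}$~\cite{FH79} for makespan on $m$ identical machines. For the matching lower bound I mimic the IIP-based strategy used in Theorem~\ref{thm:feature-based}: construct two scheduling instances $G_a$ and $G_b$ with initial profiles $p_a^0, p_b^0$ such that along every BR sequence from each of them, the game reaches a profile containing two suboptimal jobs whose state vectors $v_1, v_2$ coincide across $G_a$ and $G_b$, but the ``correct'' choice (the one leading to the optimal reachable NE from that initial profile) flips between the two instances.

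The skeleton of the construction follows the PoA-tight configuration: a heavy job of weight roughly $m$ and a light job of weight roughly $1$ both sit on a designated machine $M_1$, producing critical state vectors $v_1=(w_H, M_1, L_1, \ldots, L_m)$ and $v_2=(w_L, M_1, L_1, \ldots, L_m)$. Since a job's state vector exposes only its own weight, its current machine, and the vector of loads, it is blind to \emph{which} jobs contribute to each load. In $G_a$, the composition of jobs on $M_2, \ldots, M_m$ is chosen so that moving the heavy job at the critical step triggers a cascade of BR moves ending at an optimal NE, while moving the light job produces a configuration whose current costs are all tied with the corresponding best responses (so no further job is strictly suboptimal), trapping the makespan at $\frac{2m}{m+1}\cdot \mathrm{OPT}$. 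In $G_b$ the non-critical loads are preserved but their internal compositions are altered so that the roles reverse: the heavy job's move is the one that now freezes BRD at the bad makespan, while the light job's move continues toward OPT.

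The IIP condition forces any local rule to compare $v_1$ and $v_2$ identically in both instances, so it must commit to the trap in at least one of $G_a, G_b$, witnessing inefficiency $\frac{2m}{m+1}$. The main technical difficulty is aligning parameters so that (i) both initial profiles funnel every BR sequence through the critical state-vector pair, (ii) the load vectors and job-weight labels agree exactly across $G_a$ and $G_b$ despite the different underlying job multisets, and (iii) the divergent post-critical dynamics realize the tight PoA ratio $\frac{2m}{m+1}$ rather than a smaller gap. This is accomplished by calibrating the heavy and light weights so that the optimal schedule places load $m+1$ on each machine while the trap NE concentrates load $2m$ on a single machine, mirroring the standard Finn--Horowitz worst-case instance.
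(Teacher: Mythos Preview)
Your high-level plan is exactly the paper's: exhibit two instances sharing the same load profile (hence the same state vectors for the suboptimal jobs), arranged so that the ``right'' deviator in one instance is the ``wrong'' one in the other, and invoke IIP. The upper bound via Observation~\ref{ob:alpha_range} and the Finn--Horowitz PoA is also what the paper uses.

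The gap is that you never build the instances; you only assert that a suitable calibration exists. Two specific points make this more than a cosmetic omission. First, your sketch places both critical jobs on $M_1$ with weights $\approx m$ and $\approx 1$. Since a job's best-response destination is determined by the load vector alone, moving the heavy job (resp.\ the light job) from $M_1$ yields the \emph{same} post-move load vector in $G_a$ and $G_b$; the divergence you need must then come entirely from the hidden compositions on $M_2,\ldots,M_m$. Getting this to freeze at makespan $2m$ in one instance while cascading down to $m+1$ in the other, \emph{with both jobs initially strictly suboptimal}, runs into arithmetic obstructions (e.g.\ for the heavy $m$-job to be suboptimal you need some $L_j<m$, but for the trap after the light move to be a NE the remaining mass on $M_1$ forces $L_j\ge m$). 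You have not shown how to thread this. Second, your requirement that ``every BR sequence funnels through the critical pair'' is an additional burden you impose on yourself; in the paper the critical profile \emph{is} the initial profile, so no funneling argument is needed.

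For comparison, the paper's construction puts the competing state vectors on \emph{different} machines: $v'$ is an $(m-\epsilon)$-job on $M_1$ (with $L_1=2m-\epsilon$) and $v''$ is a $\tfrac{3}{2}\epsilon$-job on $M_2$ (with $L_2=m+\tfrac{1}{2}\epsilon$); a third vector $\hat v$ (the $\epsilon$-job on $M_1$) must also be handled. The two instances share the load vector $(2m-\epsilon,\,m+\tfrac12\epsilon,\,m-2\epsilon,\,m,\ldots,m)$ exactly, but in instance~(a) $M_3$ consists of many $\epsilon$-jobs (so pushing an $(m-\epsilon)$-job there lets the sand redistribute to makespan $\approx m+1$), whereas in instance~(b) $M_3$ is a single block and $M_2$ is the sand, so the roles of $v'$ and $v''$ reverse. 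Your outline has the right shape but is missing precisely this concrete realization, and the particular parametrization you gesture at would need to be replaced or substantially justified.
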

\begin{proof}
Assume, by a way of contradiction, that there is a local deviator rule whose inefficiency is better than the PoA.
Given $m$, let $\epsilon$ be a small constant such that $( m + \frac{1}{2} \epsilon)/( \frac{3}{2} \epsilon)$ is an integer. The proof consists of two games, one can verify that both have the same initial loads profile.
\begin{figure}
\begin{center}
\includegraphics[height=4.25cm]{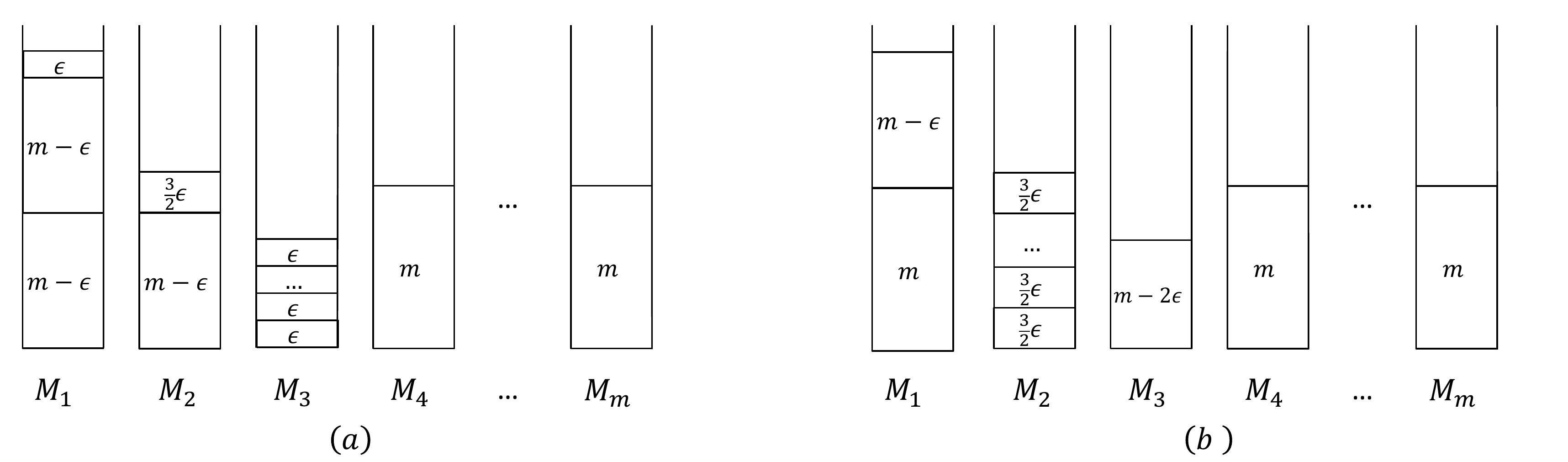}
\caption{The initial profile for which $(a)$ prioritizing $v'$, or $(b)$ prioritizing $v''$ leads to inefficiency $\frac{ 2m }{ m + 1 } = PoA$.}
\label{fig:IdenticalMachinesBoth}
\end{center}
\end{figure}

Consider first an instance with the following initial schedule, depicted in Figure \ref{fig:IdenticalMachinesBoth}$(a)$:
\begin{itemize}
\item $L_1(p^0)=2m - \epsilon$. $M_1$ processes two jobs of length $m - \epsilon$ and one job of length $\epsilon$.
\item $L_2(p^0)=m + \frac{1}{2} \epsilon$. $M_2$ processes a job of length $m - \epsilon$ and a job of length $ \frac{ 3 }{ 2 }\epsilon$.
\item $L_3(p^0)=m - 2 \epsilon$. $M_3$ processes $\frac{ m - 2 \epsilon }{ \epsilon }$ jobs of length $\epsilon$.
\item $L_j(p^0)=m$, for every machine $M_j, 4 \le j \le m$, each processing a single job of length $m$.
\end{itemize}
The only suboptimal jobs are the ones on $M_1$ and the job of length $\frac{ 3 }{ 2 } \epsilon$ on $M_2$. The state vector of each of the $(m-\epsilon)$-jobs on $M_1$ is $v'= (m - \epsilon, 1 )$.\footnote{For better readability we omit the machines' loads from the state vectors, ind include only the jobs' length and the index of its machines.} The state vector of the $\epsilon$-job on $M_1$ is $\hat{v}= (\epsilon, 1 )$. The state vector of the suboptimal job on $M_2$ is $v'' =( \frac 3 2 \epsilon, 2 )$. If a deviator rule prefers $v''$ over $v'$ then the short job on $M_2$ will migrate to $M_3$. Then, the only beneficial migration is of a job from $M_1$ to $M_2$, leading to a NE with makespan $2m - 2\epsilon$ (on either $M_1$ or $M_2$). If the deviator rule prefers $\hat{v}$ then the $\epsilon$-job on $M_1$ will deviate to reach a NE with makespan $2m - 2\epsilon$.

On the other hand, an optimal deviator rule selects $v'$, and let one of the jobs of length $m - \epsilon $ migrate to $M_3$. Then, the $\epsilon$-jobs from $M_3$ will spread among the machines, and the resulting NE would have makespan $ m + \frac{ m - \frac{5}{2} \epsilon }{ m } = m + 1 - \frac{ 5 \epsilon }{ 2 m }$.

We conclude that a deviator rule that prefers $v''$ or $\hat{v}$ has inefficiency $\frac{ 2m - 2 \epsilon }{ m + 1 - \frac{ 5 \epsilon }{ 2 m } } \underset{ \epsilon \rightarrow 0 }{ \rightarrow } \frac{ 2m }{ m + 1 }$.

Consider now the following initial schedule, depicted in Figure \ref{fig:IdenticalMachinesBoth}$(b)$.
\begin{itemize}
\item $L_1(p^0)=2m - \epsilon$. $M_1$ processes two jobs, one job of length $m - \epsilon$ and one job of length $m$.
\item $L_2(p^0)=m + \frac{1}{2} \epsilon$. $M_2$ processes $\frac{ m + \frac{1}{2} \epsilon }{ \frac{3}{2} \epsilon }$ jobs of length $\frac{ 3 }{ 2 } \epsilon$.
\item $L_3(p^0)=m - 2 \epsilon$. $M_3$ processes one job of length $m - 2 \epsilon$.
\item $L_j(p^0)=m$, for every machine $M_j, 4 \le j \le m$, each processing a single job of length $m$.
\end{itemize}

Again, the only suboptimal jobs are the ones on $M_1$ and $M_2$. If a local deviator rule chooses $v''$ then a job of length $\frac{3}{2} \epsilon$ on $ M_2 $ migrates first. It will migrate to $M_3$ and then the job of length $ m - \epsilon $ on $M_1$ will migrate to $M_2$, whose load, $m-\epsilon$ is the lightest. Then, all the jobs of length $\frac{ 3 }{ 2 } \epsilon$ on $M_2$ will spread among the machines and a NE with makaspan $m + 1 - \frac{ 5 \epsilon }{ 2 m }$ will be reached.

On the other hand, a deviator rule that does not choose $v''$, chooses a job on $M_1$. If it chooses the job of length $m - \epsilon$, whose state vector is $v'$, it would migrate to $M_3$. A NE is reached after this single migration and has makespan $2m - 2 \epsilon$. Alternatively, if the job of length $m$ is chosen first, it migrates to $M_3$ and afterwards the only beneficial migration is of the job of length $m - 2 \epsilon $ to $M_1$. Again, we will end up with makespan $2m - 2 \epsilon$.

We conclude that for this instance, the deviator rule must choose $v''$ in order to achieve inefficiency better than $\frac{ 2m }{ m + 1 }$. Since this conflicts with the optimal choice for the first instance, no local rule ensures inefficiency better than $\frac{ 2m }{ m + 1 }$.
\end{proof}

We note that this result captures several common natural deviator rules in job scheduling games, such as selecting a longest job, a job having max-cost, or a job whose BR causes the max-improvement.

\subsection{ Job Scheduling Games with Conflicting Congestions Effects }
In this section we consider scheduling of unit-size jobs on identical machines.
The machines are associated with an {\em activation cost}, $B$.
For a given profile $p$, the load on machine $j$, denoted $L_j(p)$, is
the number of jobs assigned to it.
The cost function of job $i$ in a given schedule is the sum of two components:
the load on $i$'s machine and $i$'s share in the machine's activation
cost.
The activation cost $B$ is shared equally between all the jobs assigned to a particular machine.
That is, given a profile $p$ in which $p_i = M_j$, the cost of
job $i$ is $c_i(p) = L_j(p) + \frac{B}{L_j(p)}$.

We denote the cost of a job assigned to a machine with load $x$ by $c(x)$, where $c(x) = x +\frac{B}{x}$.
As \cite{FT15} showed, the cost function exhibits the following structure.
\begin{observation}
\label{obs:optload} The function $c(x)=x+B/x$ for $x > 0$
attains its minimum at $x=\sqrt{B}$, is decreasing for $x \in
(0,\sqrt{B})$, and increasing for $x > \sqrt{B}$.
\end{observation}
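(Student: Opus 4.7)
The plan is to treat this as a standard single-variable calculus fact, with either a derivative-based argument or an AM-GM argument. I would pick the derivative approach because it directly yields both the location of the minimum and the monotonicity on each side in one shot.

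First I would differentiate: $c'(x) = 1 - B/x^2$ for $x > 0$. Solving $c'(x) = 0$ gives $x^2 = B$, i.e., $x = \sqrt{B}$ (taking the positive root since the domain is $x > 0$). Next I would check the sign of $c'$ on each side: for $x \in (0, \sqrt{B})$ we have $x^2 < B$, so $B/x^2 > 1$ and hence $c'(x) < 0$, giving that $c$ is strictly decreasing on this interval; for $x > \sqrt{B}$ we have $x^2 > B$, so $B/x^2 < 1$ and $c'(x) > 0$, giving that $c$ is strictly increasing on this interval. It follows that $x = \sqrt{B}$ is the unique minimizer, with minimum value $c(\sqrt{B}) = 2\sqrt{B}$.

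As a sanity check I would also note the AM-GM inequality $x + B/x \geq 2\sqrt{x \cdot B/x} = 2\sqrt{B}$, with equality exactly when $x = B/x$, confirming the minimizer and minimum value independently.

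There is no real obstacle here; the only thing worth being careful about is restricting to the domain $x > 0$ (so that the negative root of $x^2 = B$ is excluded and $B/x^2$ is well-defined), and distinguishing strict monotonicity from mere non-strictness, which the strict sign of $c'$ on each open interval delivers.
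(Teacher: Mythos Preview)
Your proof is correct and is the standard calculus argument for this fact. The paper itself does not give a proof of this observation; it simply attributes the statement to \cite{FT15}, so there is no ``paper's own proof'' to compare against, but your derivative-based argument (with the AM--GM sanity check) is exactly what one would expect.
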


Denote by $l^{*}$ the load minimizing the players' cost. Thus, $l^{ * } = \argmin_{ l \in \{ \lfloor \sqrt{ B } \rfloor , \lceil \sqrt{ B } \rceil \} } c( l ) $, with tie breaking in favor of the lower option. For a given profile $p$, let $M(p)$ denote the active machines in $p$, that is, $M(p) = \{ j : L_j( p ) > 0 \}$, and let $m(p)=|M(p)|$.
Among the machines in $M(p)$, a machine that has load at least (respectively, smaller than) $l^*$
is said to be a \emph{high} (\emph{low}) machine. By Observation \ref{obs:optload}, the BR of any migrating job, is either the most loaded low machine, or the least loaded high machine.
Since the jobs are identical, all the jobs assigned on a specific machine incurs the same cost. Thus, when analyzing BR-deviations, we assume that a deviator rule is a function  $S : P \rightarrow M(p)$ that given a profile $p$, chooses {\em a machine} rather than a job.
One arbitrary player assigned to the chosen machine then performs a BR move.
We also refer to machines, rather than jobs, as being suboptimal. A machine is suboptimal if the jobs it processes are suboptimal.
Finally, we assume that the activation cost, $B$, is a known parameter that can be used by the deviator rule.

We first show that in any NE profile, the machines are balanced. Formally:

\begin{claim}
In a job scheduling game with activation cost $B$ and unit-length jobs, in any $p \in NE$, every active machine is assigned $ \ceil{\frac{ n }{ m( p )}} $ or $ \floor{\frac{ n }{ m( p )}} $ jobs.
\end{claim}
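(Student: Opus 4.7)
The plan is to prove the claim by contradiction, exploiting the equilibrium condition between any two active machines whose loads differ substantially. Assume for contradiction that there exist two active machines $M_j, M_k$ with $L_j(p) \geq L_k(p) + 2$; in particular both loads are positive. Abbreviate $L_j = L_j(p)$ and $L_k = L_k(p)$. I will derive two inequalities involving $B$ from the fact that neither a job on $M_j$ nor a job on $M_k$ wants to migrate to the other, and then combine them to produce a contradiction.

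Since $p$ is a NE, no job on $M_j$ can strictly improve by moving to $M_k$, so $c(L_j) \leq c(L_k + 1)$, which after clearing denominators and factoring out $L_j - L_k - 1 \geq 1$ yields
\[
L_j(L_k + 1) \leq B.
\]
Symmetrically, no job on $M_k$ wants to move to $M_j$, giving $c(L_k) \leq c(L_j + 1)$. The same manipulation, but now with the factor $L_k - L_j - 1 \leq -3 < 0$ whose sign reverses the inequality after division, yields
\[
L_k(L_j + 1) \geq B.
\]

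Chaining the two bounds gives $L_j(L_k+1) \leq B \leq L_k(L_j+1)$, and expanding both sides and cancelling the common term $L_j L_k$ leaves $L_j \leq L_k$, contradicting the standing assumption $L_j \geq L_k + 2$. Hence no two active machines can have loads differing by $2$ or more, so all active-machine loads lie in $\{\lfloor n/m(p)\rfloor, \lceil n/m(p)\rceil\}$, as required.

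The only delicate step is the algebraic manipulation of the inequalities $c(L_j) \leq c(L_k+1)$ and $c(L_k) \leq c(L_j+1)$: one must factor $L_j - L_k - 1$ (respectively $L_k - L_j - 1$) cleanly out of both sides and track its sign carefully when dividing, since the sign flip in the second inequality is exactly what makes the two bounds point in opposite directions and produce the crisp contradiction. Note that this argument uses only the NE condition and the explicit form $c(x) = x + B/x$; Observation~\ref{obs:optload} is not needed here, although it is implicitly consistent with the bounds derived.
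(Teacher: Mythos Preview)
Your proof is correct and is in fact cleaner than the paper's. Both proofs start from the same contradiction hypothesis (two active machines whose loads differ by at least $2$) and the same two NE inequalities $c(L_j)\le c(L_k+1)$ and $c(L_k)\le c(L_j+1)$, but they diverge in how they extract a contradiction. The paper first argues, via Observation~\ref{obs:optload}, that the lighter machine must be low and the heavier one high, then exploits the symmetry $c(x)=c(B/x)$ together with monotonicity on each side of $\sqrt{B}$ to obtain the bounds $l+1\le B/l'$ and $l'\le B/(l+1)$, and finally combines these to reach $B\ge l'(l'+1)$ contradicting the earlier $B<l'(l'+1)$. Your route is purely algebraic: you factor $L_j-L_k-1$ (respectively $L_k-L_j-1$) out of each NE inequality, track the sign when dividing, and immediately get the sandwich $L_j(L_k+1)\le B\le L_k(L_j+1)$, which forces $L_j\le L_k$. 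This avoids any appeal to the unimodal shape of $c$ or to the low/high classification, so it is more self-contained; the paper's approach, on the other hand, makes the geometric picture (one machine on each side of the optimum $\sqrt{B}$) explicit, which is conceptually informative even if not logically necessary here.
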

\begin{proof}
\label{clm:balanced}
Assume by contradiction that in some NE profile $p$ the machines are not balanced, that is, there are two machines having loads $l$ and $l'$ such that $l+2 \le l'$. It must be that the machine having load $l$ is low while the machine having load $l'$ is high, as otherwise, if both are low then a migration to the higher one is beneficial, and if both are high, then a migration to the lower one is beneficial, contradicting the stability of $p$. Moreover, since $p$ is a NE, we have $c( l^{ \prime } ) \leq c( l + 1 )$ and $c( l ) \leq c( l^{ \prime } + 1 )$.
Since the cost function increases on high machines, we have $c( l^{ \prime } ) < c( l^{ \prime } + 1 )$, that is, $l^{ \prime } + \frac{ B }{ l^{ \prime } } < l^{ \prime } + 1 + \frac{ B }{ l^{ \prime } + 1 }$, implying that $B < l^{ \prime }( l^{ \prime } + 1 )$.

By the definition of the cost function, $c( l^{ \prime } ) = c(\frac B {l'})$. Recall that the cost function is decreasing for $x \le l+1$ and note that $\frac B {l'} \le l+1$. Since $c( \frac B {l'} )=c(l') \leq c( l + 1 )$, we conclude $l + 1 \le \frac{ B }{ l^{ \prime } }$. Similarly, the cost function is increasing for $x >l'$ and $\frac B {l+1}\ge l'$. Since $c( l' ) \leq c(l+1) =c( \frac B {l+1} )$ we conclude $l' \le \frac{ B }{ l+1}$. Combining both inequalities we get $\frac{ B }{ l^{ \prime } + 1 } - \frac{ B }{ l^{ \prime } } \geq 1$, which implies $B \geq l^{ \prime }( l^{ \prime } + 1 )$.
A contradiction.
\end{proof}

The following results characterize BR sequences and NE profiles.
\begin{claim}
\label{cl:max_m}
In an instance of scheduling games with conflicting congestion effects, an optimal NE is one in which the number of active machines is maximal.
\end{claim}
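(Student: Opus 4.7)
The plan is to show that for the NE $p^{\star}$ with maximum $m^{\star} := m(p^{\star})$ and any other NE $p$, one has $SC(p^{\star}) \le SC(p)$, where $SC(q)=\max_j c(L_j(q))$ is the makespan. This immediately implies the claim.

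First I would invoke Claim~\ref{clm:balanced}: in any NE $q$ the active-machine loads lie in $\{\lfloor n/m(q) \rfloor, \lceil n/m(q) \rceil\}$. Since $m(p) \le m^{\star}$, both the floor and the ceiling for $p$ dominate those for $p^{\star}$. I would then prove a uniform structural lower bound, namely $\lfloor n/m(q) \rfloor \ge l^{*}-1$ in every NE $q$: otherwise some machine has load $L \le l^{*}-2$, and by Observation~\ref{obs:optload} $c$ is strictly decreasing on integer loads in $[1,l^{*}-1]$, so a migration from the load-$L$ machine to another active machine yields a new load at most $l^{*}$ with strictly smaller cost, contradicting NE.

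The main case analysis then splits on $k^{\star} := \lfloor n/m^{\star} \rfloor$. If $k^{\star} \ge l^{*}$, every load in both profiles sits in the non-decreasing regime of $c$, and the ceiling-domination yields $SC(p) = c(\lceil n/m(p) \rceil) \ge c(\lceil n/m^{\star} \rceil) = SC(p^{\star})$. If $k^{\star} = l^{*}-1$, the loads of $p^{\star}$ are in $\{l^{*}-1, l^{*}\}$, and NE forces at most one active machine of load $l^{*}-1$ (else a migration between two such machines strictly lowers cost from $c(l^{*}-1)$ to $c(l^{*})$). If no such machine exists, $SC(p^{\star}) = c(l^{*})$ is the global minimum of $c$ and the bound is trivial. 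Otherwise $SC(p^{\star}) = c(l^{*}-1)$, and the NE stability at that machine (no profitable deviation to a load-$l^{*}$ machine) provides the key inequality $c(l^{*}+1) \ge c(l^{*}-1)$. An integrality argument then excludes any competing NE $p$ with $m(p) < m^{\star}$ and $\lfloor n/m(p) \rfloor = l^{*}-1$: the only integer $m$ values producing a balanced NE with loads in $\{l^{*}-1, l^{*}\}$ and at most one load-$(l^{*}-1)$ machine are $n/l^{*}$ and $(n+1)/l^{*}$, and for $l^{*} \ge 2$ these cannot both be integers. Hence $\lfloor n/m(p) \rfloor \ge l^{*}$, which in fact forces $\lceil n/m(p) \rceil \ge l^{*}+1$ by the same integrality consideration, and we conclude $SC(p) \ge c(l^{*}+1) \ge c(l^{*}-1) = SC(p^{\star})$.

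The main obstacle is this boundary subcase with a load-$(l^{*}-1)$ machine in $p^{\star}$, where the argument must simultaneously leverage the NE stability at that machine (to obtain $c(l^{*}+1) \ge c(l^{*}-1)$) and an integrality observation ruling out any other NE configuration with strictly smaller $m$ but the same load range. Outside that boundary subcase, the result reduces straightforwardly to monotonicity of $c$ on $[l^{*}, \infty)$ combined with Claim~\ref{clm:balanced}.
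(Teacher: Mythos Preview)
Your argument is correct and follows essentially the same route as the paper: both invoke the balanced-loads claim, the fact that a NE has at most one low machine, monotonicity of $c$ on $[l^*,\infty)$ for the generic case, and the inequality $c(l^*-1)\le c(l^*+1)$ (extracted from NE stability) for the boundary case $\lfloor n/m^\star\rfloor = l^*-1$; your integrality step in that boundary case is more explicit than the paper's one-line ``any NE with less machines will have some machines with load at least $l^*+1$.'' One small omission: your structural bound $\lfloor n/m(q)\rfloor \ge l^*-1$ tacitly assumes there is ``another active machine'' to migrate to, so it fails when $m(q)=1$ and $n\le l^*-2$; the paper dispatches this by observing that in that regime the single-machine profile is the unique NE, making the comparison vacuous.
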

\begin{proof}
Let $B$ be the activation cost.
We first note that no profile with two low machines is stable - since, by Observation \ref{obs:optload}, jobs from a low machine would benefit from migrating to a more (or equally) loaded low machine.
By Claim \ref{clm:balanced}, the loads in a NE are either $ \ceil{\frac{ n }{ m( p )}} $ or $ \floor{\frac{ n }{ m( p )}} $. Therefore, increasing $m(p)$ lowers the loads and, by Observation \ref{obs:optload}, also the costs for $\frac{ n }{ m(p) } \geq l^{*}$.
If there exist $p \in NE$ such that $l^{*} - 1 < \frac{ n }{ m(p) } < l^{*}$ then some machines will have load $l^{*} - 1$ and the others will have loads $l^{ * }$. Since $p$ is stable $c( l^{ * } - 1 ) \leq c( l^{ * } + 1 )$ and therefore any NE with less machines will have some machines with load at least $l^{ * } + 1$ and therefore higher social cost. If $\frac{ n }{ m(p) } \leq l^{ *} - 1$ then it must be that $n < l^{ * }$ and the only NE is on a single machine. If there is more than a single active machine then there are two low machines, and, as we already pointed out, $p$ cannot be stable.
\end{proof}

In the following analysis we assume that the ordering of machines by load relations is fixed, that is, if $L_{ j_1 }( p^0 ) \leq L_{ j_2 }( p^0 )$ then throughout dynamics the same order remains. This assumption is w.l.o.g., since we can always relabel the machines. To avoid this relabelling we assume that whenever a job migrates to a machine having load $x$ it joins the machine having highest index among the machines having load $x$.
We begin with a simple observation.
\begin{observation}
\label{observation:HighStays}
If a machine $M_j$ is high in the initial profile, 
then for every $ p \in NE( p^0 ) $, $M_j \in M(p)$.
Also, if a job migrates to a machine $M_j$ during a BR sequence, then $M_j$ will be active in the final profile.
\end{observation}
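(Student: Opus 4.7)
The plan is to prove both claims by tracking the load on $M_j$ along the BR sequence, relying on the fact (from Observation~\ref{obs:optload}) that $l^{*}$ is the integer minimizer of $c(x)=x+B/x$, so $c(l^{*})\leq c(x)$ for every positive integer $x$.

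For the first statement I would prove by induction on $t$ that $L_j(p^t)\geq l^{*}$ along any BR sequence from $p^0$. The load $L_j$ can decrease only when a job on $M_j$ migrates away. If $L_j(p^t)>l^{*}$, such a departure still leaves $L_j(p^{t+1})\geq l^{*}$. If $L_j(p^t)=l^{*}$, every job on $M_j$ pays $c(l^{*})$, while relocating to any other machine $M_{j'}$ would cost $c(L_{j'}(p^t)+1)\geq c(l^{*})$; so no job on $M_j$ is suboptimal and no migration from $M_j$ occurs. Hence $L_j\geq l^{*}\geq 1$ in every reachable profile, and in particular $M_j\in M(p)$ for every $p\in NE(p^0)$.

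For the second statement, let $t_0$ be the step at which a job first migrates to $M_j$ and set $L_j^{\mathrm{new}}=L_j(p^{t_0})$. If $L_j^{\mathrm{new}}\geq l^{*}$, the first part applied from $p^{t_0}$ onward already gives the conclusion. Otherwise $1\leq L_j^{\mathrm{new}}<l^{*}$, and I would carry the invariant ``no job on $M_j$ is suboptimal in $p^t$'' by induction on $t\geq t_0$. The base case holds because the migrating job's BR condition yields $c(L_j^{\mathrm{new}})\leq c(L_{j'}(p^{t_0})+1)$ for every $j'\neq j$ (using that the origin's load just decreased by one). For the inductive step at a BR move of step $t+1$, a move from $M_j$ is ruled out by the invariant itself; a move to $M_j$ keeps $M_j$'s jobs stable by the same BR argument as in the base case; and for a move $M_a\to M_b$ with $a,b\neq j$, the change at $M_a$ cannot destabilize $M_j$ because combining the migrating job's strict improvement $c(L_b+1)<c(L_a)$ with the invariant $c(L_j(p^t))\leq c(L_b+1)$ would force $c(L_j(p^t))<c(L_a)$, while the change at $M_b$ cannot destabilize $M_j$ because $c(L_b+2)<c(L_j(p^t))\leq c(L_b+1)$ would force $L_b=L_j(p^t)-1\leq l^{*}-2$, making $M_j$ (at load $L_j(p^t)\leq l^{*}-1$) a strictly better BR destination for the migrating job than $M_b$ and contradicting the choice of $M_b$. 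So $L_j$ is non-decreasing while it remains below $l^{*}$, and once it reaches $l^{*}$ the first part takes over, proving $M_j$ is active in the final profile.

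The main obstacle is this low-load subcase of the second statement: the invariant must be preserved even under BR moves that do not touch $M_j$ at all, and ruling this out needs the two-sided case analysis above, keyed on where $L_b+1$ and $L_b+2$ sit relative to $l^{*}$.
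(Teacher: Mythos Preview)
Your argument for the first part is essentially the paper's: a high machine cannot drop below $l^{*}$ because at load $l^{*}$ no job on it is suboptimal. This is fine.

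For the second part your argument is correct but takes a longer route than the paper. The paper observes that once a low machine $M_j$ is chosen as a best response, it \emph{remains} the best response destination in every subsequent step until it becomes high: since $c$ is decreasing on loads at most $l^{*}$, the cost $c(L_j+1)$ of joining $M_j$ only improves after each arrival, while no other machine becomes more attractive (only the origin's load changed, and that machine was just left). Consequently every subsequent migrating job goes to $M_j$, so its load climbs monotonically to $l^{*}$ and the first part finishes.

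You instead carry the weaker invariant ``no job on $M_j$ is suboptimal'' and must therefore handle the extra case of a move $M_a\to M_b$ with $a,b\neq j$. Your two-sided analysis of that case is correct (in particular the derivation $L_j=L_b+1$ from $c(L_b+2)<c(L_j)\le c(L_b+1)$ works because all three loads lie in the strictly decreasing range $\{1,\dots,l^{*}\}$), but the paper's observation makes this case vacuous: if $M_j$ is the best response, no suboptimal job would choose $M_b\neq M_j$. So your approach is sound and self-contained, while the paper's approach exploits a slightly stronger structural fact to avoid the case analysis altogether.
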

\begin{proof}
An active machine is not in $M(p)$ if it is emptied out. However, the load of a high machine will never decrease below $l^*$ since this load incurs the minimum possible cost, and none of the jobs assigned to $M_j$ when its load is $l^*$ is suboptimal.
For the second part, let $M_j$ be a BR machine. If $M_j$ is high, then it will not be emptied out as we just showed. If $M_j$ is low, then since the cost function is decreasing for load at most $l^*$, then $M_j$ will remain the best response until it becomes high, and will not be emptied out once it reaches load $l^*$.
\end{proof}

The next lemma states a sufficient and necessary condition for a low machine to remain active in the NE profile.
We assume by relabelling that $L_1( p^0 ) \leq L_2( p^0 ) \leq ... \leq L_m( p^0 )$.
Denote $l_{i}=L_i(p^0)$.
\begin{claim}
\label{clm:lastingFormula}
For every initial profile $p^0$, and every $M_j \in M(p^0)$, there exists $p \in NE(p^0)$ in which $M_j \in M(p)$ iff $j=m$ or $\frac{ n - l_{j} }{ m - j } > \frac{ B }{ l_{ j } + 1 }$.
\end{claim}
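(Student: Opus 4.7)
The plan handles the case $j = m$ separately and then proves both directions of the iff for $j < m$. For $j = m$ the claim is immediate: the load-ordering assumption stated just before Observation~\ref{observation:HighStays} preserves $L_m(p) \ge L_k(p)$ for every $k$ throughout any BR dynamics, and since $n \ge 1$ we always have $L_m(p) \ge 1$, so $M_m \in M(p)$ for every $p \in NE(p^0)$.

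For the sufficient direction with $j < m$, I would read the inequality $(n - l_j)/(m - j) > B/(l_j + 1)$ as follows. The quantity $B/(l_j + 1)$ is the upper root of $c(x) = c(l_j + 1)$ (both roots multiply to $B$, and $c$ is U-shaped by Observation~\ref{obs:optload}), so a job on any machine with current load strictly above this threshold has a strictly improving BR to $M_j$, bringing its cost from $c(L_k)$ down to $c(l_j + 1)$. The inequality says exactly that if $M_1, \ldots, M_{j-1}$ lose all their jobs onto $M_{j+1}, \ldots, M_m$ while $M_j$ remains at $l_j$, then the average load on the upper machines strictly exceeds $B/(l_j+1)$, so by pigeonhole some upper machine must cross it. I would drive the dynamics by: (i) migrating jobs off $M_1, \ldots, M_{j-1}$ one at a time (each such BR lands either on $M_j$, which trivially keeps it active, or on an upper machine, loading it further); (ii) once some $M_k$ with $k > j$ exceeds $B/(l_j+1)$, scheduling a BR from $M_k$ to $M_j$; and (iii) completing any remaining BR moves until termination, which is guaranteed by the congestion-game potential. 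In every branch $M_j$ is active in the resulting NE.

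For the necessary direction I would argue contrapositively: assume $(n - l_j)/(m - j) \le B/(l_j + 1)$ and suppose toward contradiction that some $p \in NE(p^0)$ satisfies $M_j \in M(p)$, writing $\ell := L_j(p)$. A key intermediate step is to show $\ell \ge l^*$: if instead $\ell + 1 \le l^*$, then stability against a migration into $M_j$ combined with Claim~\ref{clm:balanced} and the load-ordering forces $L_k(p) = \ell + 1$ for every active $k > j$ and forces $M_1, \ldots, M_{j-1}$ to be inactive, but then stability of $M_j$ against migrating to an upper machine demands $c(\ell) \le c(\ell + 2)$, i.e.\ $B \le \ell(\ell + 2) = (\ell + 1)^2 - 1$, which contradicts $(\ell + 1)^2 \le B$. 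With $\ell \ge l^*$ in hand, I would combine Claim~\ref{clm:balanced}, the load-ordering bound $\sum_{k < j} L_k(p) \le (j-1)\ell$, and $\sum_k L_k(p) = n$ to obtain an analog of the inequality with $\ell$ in place of $l_j$; a BR-dynamics observation that under the hypothesis no machine ever crosses $B/(L_j+1)$, so $L_j$ can only decrease from $l_j$ and thus $\ell \le l_j$, combined with the monotonicity of $x \mapsto (n-x)/(m-j) - B/(x+1)$ in the relevant range, yields the contradiction. I expect the main obstacle to be this last step, namely the careful bookkeeping that relates the initial load $l_j$ to the possibly different NE load $\ell$.
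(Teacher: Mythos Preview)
Your plan for $j=m$ and for the sufficient direction track the paper's, but step~(ii) has a gap: a BR move targets the \emph{best} response, not just a better one. Even when some upper machine $M_k$ has $L_k > B/(l_j+1)$, a job on $M_k$ may prefer a different upper machine $M_{k'}$ with $l_j \le L_{k'} < L_k$, so you cannot simply ``schedule a BR from $M_k$ to $M_j$''. The paper closes this by first emptying $M_1,\ldots,M_{j-1}$ and then repeatedly selecting the \emph{highest} suboptimal machine, which balances $M_{j+1},\ldots,M_m$ until $M_j$ becomes the unique highest low machine and hence the actual best response; you need this balancing step (or an equivalent argument) before invoking Observation~\ref{observation:HighStays}.

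For the necessary direction your route diverges from the paper's and has two concrete obstructions. First, the intermediate claim $\ell \ge l^*$ is false in general: with $B=13$ one has $l^*=4$, and a profile with $L_j=3$ and all higher machines at load $4$ is a NE (a job leaving $M_j$ would pay $c(5)>c(3)$; a job entering $M_j$ would pay $c(4)$, no strict improvement), so your derivation ``$(\ell+1)^2 \le B$'' fails exactly when $l^*=\lceil\sqrt{B}\rceil$. Second, your dynamics invariant ``no machine ever crosses $B/(L_j+1)$'' is false already at $p^0$: take $B=100$, $m=3$, $j=1$, $(l_1,l_2,l_3)=(1,1,98)$; then $(n-l_1)/(m-1)=49.5 \le 50=B/(l_1+1)$ yet $l_3=98>50$. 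The paper's argument is dynamics-first rather than NE-first: it shows that in \emph{every} BR sequence $M_j$ is never a best-response target, because the average load on $M_{j+1},\ldots,M_m$ stays at most $B/(l_j+1)$, so there is always an upper machine $M_{k'}$ with $L_{k'} \le B/(l_j+1)-1$ that is at least as attractive as $M_j$ and wins the higher-index tie-break; hence $L_j$ is nonincreasing and $M_j$ empties. Your acknowledged worry about ``the last step'' is well founded: the route through the NE load $\ell$ does not close without essentially re-proving this never-a-BR invariant.
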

\begin{proof}
Note that $M_m$ is the most loaded machine in $p^0$. If it is high in $p^0$ then by Observation \ref{observation:HighStays}, it will remain active. If it is low, then $p^0$ consists only of low machines, and by our relabeling assumption, at least $M_m$ will remain active.

We turn to analyze a machine $M_j$ for $j<m$.
We first show that if $M_j$ fulfills the condition, then there exists a BR sequence in which it is not emptied out. Let $\bar{l} \equiv \frac{n-l_{j}}{m - j} > \frac{B}{l_{j}+1}$. If $M_j$ is high then the claim follows from Observation \ref{observation:HighStays}. Assume that $M_j$ is low in $p^0$. Note that all the machines $M_{ lower } = \{ M_{ 1 } , ... , M_{ j - 1 } \}$ are suboptimal because they are low and also lower than $M_j$, thus, their jobs can benefit from migrating to $M_j$.
Consider a BR sequence that starts by migrations out of $M_{lower}$.
If a job chooses to migrate to $M_j$, then, by Observation \ref{observation:HighStays} we are done. Otherwise, after the machines in $M_{lower}$ are empty, the BR sequence proceeds by selecting higher machines. Note that along this sequence, there is always some higher machine $M_k$ for $k>j$ with load at least $\bar{l}$. Since $c( l_{ j } + 1 ) = c( \frac{ B }{ l_{ j } + 1 } ) < c( \bar{ l } )$, a migration from $M_k$ to $M_j$ or to a more attractive machine is beneficial for $M_k$. Thus, by choosing the highest suboptimal machine to perform BR, the machines higher than $M_j$ will become balanced, causing $M_j$ to become the highest low machine and a best-response. By Observation \ref{observation:HighStays}, once this happens, $M_j$ remains active in the resulting NE.

We turn to show that $M_j$ must be emptied out if $\frac{n-l_{j}}{m - j} \ge \frac{B}{l_{j}+1}$. Let $M_{ higher } = \{ M_{ j + 1 } , ... , M_{ m } \}$. The average load on machines in $M_{higher}$ is lower than $\frac{B}{l_{j}+1}$. While there are machines lower than $M_j$ ($M_{lower} \neq \emptyset$ ), since the average of loads of $M_{higher}$ is lower than $\frac{B}{l_{j}+1}$, there must be at least one machine in $M_{higher}$ with load $l^{ \prime }$ s.t $ l_j \le l^{ \prime } < \frac{ B }{ l_{ j } + 1 } - 1$ (the subtraction of $1$ is because of the smallest machine). That machine will be a better response than $M_{ j }$ since $c( l_j + 1 ) = c( \frac{ B }{ l_j + 1 } ) > c( l^{ \prime } + 1 )$ or, if $l_j= l^{ \prime }$, since it has a higher index. Afterwards, when $M_j$ is the lowest machine, if there is a higher low machine, it is more attractive than $M_j$, and if there are only high machines, then even if they are balanced, some high machine will have load less than $\frac{ B }{ l_{ j } + 1 }$ and therefore $M_{ j }$ is not a best-response, since $c( l_j ) = c( \frac{ B }{ l_j } ) > c( \frac{ B }{ l_{ j } + 1 } )$. After the machines in $M_{higher}$ are balanced, the jobs on $M_j$ will leave it and it will be emptied.
\end{proof}


We use the above characterization to devise an optimal local deviator rule.

\noindent {\bf An Optimal Deviator Rule:}
We present an optimal local deviator rule. A BR sequence applied with this rule will converge to a NE profile with a maximal possible number of active machines, and will therefore have the optimal social cost.
\begin{algorithm}[h]
\caption{$S_{opt}$ - an optimal local deviator rule}
 \label{alg:coco_opt}
  \begin{algorithmic}[1]
    \STATE Given a profile $p$, let $l_1 \le l_2 \le \cdot \cdot \cdot \le l_{m(p)}$ be its vector profile.
    \STATE {\bf If} $M_{m(p)}$ is a high machine (that is, $l_{ m(p) } \geq l^{*}$) and it is suboptimal {\bf then} return $M_{m(p)}$.
	\STATE {\bf Else if} $M_1$ is suboptimal return $M_{1}$
    \STATE \hspace{0.5in} {\bf Else} $p$ is a NE.
\end{algorithmic}
\end{algorithm}

\begin{theorem}
Algorithm $S_{opt}$ is optimal for every activation cost $B$ and every initial profile $p^0$ of a job scheduling game with conflicting congestion effects.
\end{theorem}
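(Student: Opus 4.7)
The plan is to establish optimality via a two-step reduction. By Claim~\ref{cl:max_m}, among NEs of the game the ones with the maximum number of active machines minimize the social cost, so it suffices to show that $S_{opt}$ terminates at an NE achieving $m^*(p^0) := \max_{q \in NE(p^0)} m(q)$. I will maintain the invariant $m^*(p^T) = m^*(p^0)$ along the trajectory $p^0, p^1, \ldots, p^{\text{end}}$ generated by $S_{opt}$. Since every BR-sequence from $p^{T+1}$ extends to one from $p^T$, we have $NE(p^{T+1}) \subseteq NE(p^T)$ and $m^*$ is monotone non-increasing, so the task reduces to ruling out a strict drop at each step.

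In Case A (top machine $M_{m(p^T)}$ is high and suboptimal), $S_{opt}$ removes one job from $M_{m(p^T)}$. By Observation~\ref{observation:HighStays}, $M_{m(p^T)}$ remains active after losing this job, and its BR destination either is already in $M(p^T)$ or becomes permanently active by the same observation, so no machine is emptied. The equality $m^*(p^{T+1}) = m^*(p^T)$ then follows from the interchangeability of unit jobs on $M_{m(p^T)}$ and a reordering of any max-$m$ BR-sequence from $p^T$: subsequent moves can be locally re-routed to absorb the single-job redistribution already performed by $S_{opt}$, yielding the same target NE.

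In Case B (top machine is either low, or high-and-optimal, and $M_1$ is suboptimal), I prove the stronger statement that $M_1 \notin M(q)$ for every $q \in NE(p^T)$, by showing that the lasting-formula inequality of Claim~\ref{clm:lastingFormula} fails at $j=1$ when $p^T$ is treated as the initial profile. First, a suboptimal lowest machine must be low, since a job on a high lowest machine cannot improve by migrating to any machine of weakly greater load (monotonicity from Observation~\ref{obs:optload}); thus $l_1 + 1 \le l^*$. In sub-case B1 (all machines low) every $l_k \le l^* - 1$, and the defining inequality $l^*(l^*-1) \le B$ of $l^*$ yields
\[
\frac{n - l_1}{m(p^T) - 1} \;\le\; l^* - 1 \;\le\; \frac{B}{l^*} \;\le\; \frac{B}{l_1 + 1}.
\]
In sub-case B2 ($M_{m(p^T)}$ high and optimal), the inequality $c(l_{m(p^T)}) \le c(l_1 + 1)$ together with $c(x) = c(B/x)$ and high-range monotonicity forces $l_{m(p^T)} \le B/(l_1 + 1)$; since $l_k \le l_{m(p^T)}$ for every $k$, the same average bound follows. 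Hence $M_1$ cannot be kept in any reachable NE, every max-$m$ $q \in NE(p^T)$ has $L_1(q) = 0$, and the $M_1$-reducing move of $S_{opt}$ is consistent with reaching such $q$, so $m^*$ is preserved.

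Chaining the two cases along the entire trajectory yields $m(p^{\text{end}}) = m^*(p^0)$, which by Claim~\ref{cl:max_m} gives minimum social cost among NEs reachable from $p^0$. The main obstacle is twofold: making precise the reordering argument in Case A, which relies on the interchangeability of unit jobs and the determinism of BR destinations from the load profile, and carrying out the Case B arithmetic rigorously — in particular deducing $l_{m(p^T)} \le B/(l_1 + 1)$ from the optimality of $M_{m(p^T)}$ via Observation~\ref{obs:optload} combined with the integer constraints on $l^*$.
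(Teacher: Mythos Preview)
Your approach is genuinely different from the paper's. Rather than maintaining the invariant $m^*(p^T)=m^*(p^0)$, the paper argues machine-by-machine: for each machine $M_i$ that survives in some optimal BR sequence from $p^0$, it shows that $S_{opt}$ never selects $M_i$ while $M_i$ is low. The point is that a low $M_i$ can only be the candidate in line~3 of $S_{opt}$ once all machines of smaller initial load have been emptied, and at that moment the lasting-formula condition of Claim~\ref{clm:lastingFormula} (evaluated at $p^0$, not at the current profile) forces the existence of a high suboptimal machine, so line~2 fires instead. This machine-tracking argument sidesteps your Case~A reordering difficulty entirely.

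Your Case~B contains a concrete error. The step ``$c(l_{m(p^T)})\le c(l_1+1)$ together with high-range monotonicity forces $l_{m(p^T)}\le B/(l_1+1)$'' fails when $l_{m(p^T)}=l_1+1$: from $a+B/a\le b+B/b$ with $a\ge b$ one obtains $ab\le B$ only when $a>b$, and the equality case gives no information. Concretely, take $B=7$ (so $l^*=3$) and $p^T$ with load vector $(2,2,3)$: then $M_3$ is high and optimal, while $M_1$ is suboptimal (a job migrating to $M_2$ gets $c(3)<c(2)$), so your sub-case~B2 is triggered; yet $l_m=3>7/3=B/(l_1+1)$, and the lasting-formula inequality at $j=1$ reads $\tfrac{5}{2}>\tfrac{7}{3}$, so it \emph{holds} rather than fails. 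Hence the route ``show the lasting formula fails at $j=1$, then invoke Claim~\ref{clm:lastingFormula}'' cannot work as written.

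Your Case~A is also only a sketch. If the BR destination $M_k$ from $M_{m(p^T)}$ is a low machine that is \emph{not} active in the target max-$m$ equilibrium $q\in NE(p^T)$, then by Observation~\ref{observation:HighStays} $M_k$ is committed to remain active from $p^{T+1}$ onward, so $q$ itself is no longer reachable; the interchangeability of unit jobs does not by itself produce an alternative $q'\in NE(p^{T+1})$ with $m(q')=m(q)$, and you have not indicated how to construct one.
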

\begin{proof}
We prove that a machine that stays active in an optimal BR sequence will remain active by $S_{opt}$. Let $M_{ i }$ be such a machine. If $L_{ M_ i } \geq l^{*}$ then by Observation \ref{observation:HighStays} it will remain active. Otherwise, when $M_{ i }$ is considered to be chosen by $S_{ opt }$, all the machines with lower loads have been emptied since the rule considers only the lowest machine. Since $M_{ i }$ remains active in an optimal sequence, by Claim \ref{clm:lastingFormula} $\frac{ n - l_{M_i} }{ m - i } > \frac{ B }{ l_{ M_i } + 1 }$ and therefore there is a high machine with load at least $\frac{ B }{ l_{ M_i } + 1 } + 1$ that will be preferred by $S_{ opt }$.
\end{proof}

One may wonder whether deviator rules with poor inefficiency exist in this game.
In Appendix \ref{app:poor} we show an instance for which a natural deviator rule has inefficiency $\Omega( B^{ \frac{ 1 }{ 3 } } )$ and therefore a constant inefficiency is not trivial to ensure.

\section{Conclusions and Open Problems}
\label{sec:conclusion}

A desired property of congestion games is that best-response dynamics always converges to a pure Nash equilibrium.
However, the order in which players are chosen to perform their best-response moves is crucial to the quality of the equilibrium reached.
Unlike previous work, our focus is not on the dynamic's termination or its convergence time, but on the quality of the achieved Nash equilibrium.
Starting from a given profile, two different deviation orders can possibly lead to two different equilibria that differ a lot in their quality.
We introduce a new (worst-case) measure for quantifying the inefficiency of a {\em deviator rule} --- a rule that selects a suboptimal player to perform BR move, and thus, induces the way BRD advance and converge.
In other words, we analyze the ability of a centralized authority that can control the order according to which players perform their best-responses, to lead the players to a good stable profile.

We define the inefficiency of a deviator rule $S$ as the largest ratio, among all initial profiles $p^0$ of a game, between the social cost of the worst NE reachable by $S$ and the social cost of the best NE reachable from $p^0$. We study deviator rules in network formation and job scheduling games.
Our main interest is in deviator rules that are easy to implement and are local in a sense that the information they use about the current profile is limited and based on a restricted set of parameters (referred to as "state vectors"), but we study general deviator rules as well.

We present both positive and negative results for local and global deviator rules in these games. Our positive results show that in some games, even if players act strategically,
controlling the order in which they play can have a significant effect on the quality of the outcome. We show that Min-Path is optimal for symmetric network formation games and present a deviator rule that is optimal for job scheduling game with conflicting congestion effects. Furthermore, we show a dynamic programming based global (non-local) deviator rule that is optimal for network formation games with an underlying SPP network. Our analysis suggest some negative results as well.

Our negative results imply that it would be hard for a central authority to lead the game to a good outcome. We show that in network formation games played on EP networks no local deviator rule can ensure outcome that is better than the PoA, as well as in job scheduling games with identical machines and only negative congestion effect. We also show that it is NP-hard to find optimal outcome in weighted symmetric network formation games, even in the simplest possible network topology (consisting of parallel edges between the source and target nodes).

Additional results shows tight bounds that lie strictly between $1$ and the PoA. In network formation games Min-Path ensures $\mathcal{O}( m )$ inefficiency in single-source instances for SPP networks and $\mathcal{O}( 2^{ m } )$ inefficiency for SPP networks with players corresponding to a proper intervals graph.

Our research can be naturally extended to consider additional state vectors in the games we consider. Furthermore, our measure can be studied in additional games of interest.
%

We suggest a few directions for future research:

First, our results show that taking worst case approach over all initial strategy profiles can lead to negative results that do not distinguish properly between different deviator rules. One way to address this is restricting the initial strategy profile or analysing the inefficiency for different subsets of initial strategy profiles. Such an analysis may suggest a range of deviator rules, each performing well on different initial profiles, and selected to be used based on some of the initial profile's properties.
Another way to address the problem is to consider the average case performance of a deviator rule over initial strategy profiles. In addition to having better results, it may distinguish better between different deviator rules.

Second, we examined only deterministic deviator rules. A {\em mixed deviator rule} returns a distribution over the players rather than choosing one deterministically. The definition of the inefficiency can be naturally extended to inefficiency of a mixed deviator rule using the expected value of the social cost. A \textit{mixed local deviator rule} computes that distribution as a function of local information and satisfies an adjusted IIP condition. One possible definition is the following:
\begin{itemize}
\item Players with the same state vectors must have the same probability to be chosen.
\item Let $p( v )$ denote the \textit{accumulated probability} of a state vector $v$, i.e., let $I$ be the set of players having state vector $v$, then $p( v ) \equiv \sum_{j \in I} p_j$.
A mixed local rule satisfies IIP if in {\em all} profile vectors containing state vectors $v_1, v_2$ it holds that either $p( v_1 ) > p( v_2 )$ or $p( v_1 ) \leq p( v_2 )$. The idea is that a mixed local deviator rule consistently sets a distribution over state vectors rather than players.
\end{itemize}

Third, while we studied only best response dynamics, \textit{better} response dynamics can lead to a wider outcome space. In better response dynamics the central authority has to choose the deviating player along with a new strategy, making sure that the new strategy strictly improves the deviating player's cost. Best response sequence is a special case of better response sequence with possibly a much limited search space. For example, in SPP network formation games presented in subsection \ref{sec:spp}, better response dynamics give the deviator rule the power to solve every segment separately as a pure symmetric game because a deviation of a player in one segment can be done independently and without changing her strategy in all other segments. In potential games, better response dynamics always converges to a Nash equilibrium, but the convergence rate can be worse than in best response dynamics. Consider a symmetric network formation game on a single segment. The Min-Path deviator rule is an optimal local deviator rule for best response dynamics and it must converge within at most $n$ deviations. However, the analysis of better response dynamics is more challenging.
Let us demonstrate this with the following simple example. Consider single segment SPP network with three parallel edges having costs 1, 2 and 3. In the initial profile, these links are used respectively by 0, 2 and 3 players. One can verify that BRD will converge to the middle edge whose cost is 2 - by deviation of the three players on the expensive edge. On the other hand, a better response dynamics may lead to convergence to the cheapest path by first migrating one player from the middle to the expensive edge, and then moving all players (starting from the one on the middle edge) to the cheapest edge.
The social cost is lower but the number of deviations exceeds $n$. Studying the power of better response dynamics as well as the tradeoff between the number of deviations and the quality of the solution is an interesting direction for future work.

Fourth, we presented best response dynamics in which a single player deviates in every iteration. \textit{Coordinated deviator rules} enable a set of players {\em (coalition)} to deviate simultaneously with the restriction that each of the coalition members strictly improves her cost. A Nash equilibrium reached with no beneficial coordinated deviation is denoted a {\em strong Nash equilibrium}. It is well known that not every potential game admits a strong PNE. For example, \cite{HM14} showed that in a network formation game with single-source and multiple-targets played on a network that is not an SPP network, there is no strong PNE. \cite{E+07} showed that in a game on an SPP network there always exist a strong PNE.
Moreover, \cite{FF15} showed that in network formation games, even if a strong PNE exists, not every beneficial coordinated deviations sequence converges to one. They introduced a restricted class of deviations, denoted {\em dominance based beneficial coalition deviation }, which ensures convergence.
There are many open directions in the study of coordinated deviator rules, referring to their inefficiency, as well as to the wide range of possible deviations of the chosen coalition.

\appendix
\section{Network Topologies}
\label{sec:nettopo}
Consider a directed graph $G=(V,E)$ with two designated nodes, a source $s\in V$ and a target $t\in V$.
Assume that every edge is on some $s-t$ path in $G$.
The following operations may be applied on $G$:
\begin{itemize}
\item Identification: the identification of two nodes $v_1,v_2 \in V$ yields a new graph $G'=(V',E')$, where $V'=(V \cup \{v\}) \backslash \{v_1,v_2\}$ and $E'$ includes all the edges of $E$, where each edge that was connected to $v_{1}$ or $v_{2}$ is now connected to $v$ instead. Figuratively, the identification operation is the collapse of two nodes into one.
\item Series composition: Given two networks, $G_1=(V_1,E_1)$ with $s_1,t_1 \in V_1$ and $G_2=(V_2,E_2)$ with $s_2,t_2 \in V_2$, the series composition $G=G_1 \rightarrow G_2$ is the network formed by identifying $t_1$ and $s_2$ in the union network $G'=(V_1 \cup V_2,E_1 \cup E_2)$. In the composed network $G$, the new source is $s_1$ and the new sink is $t_2$.
\item Parallel composition: Given two symmetric networks, $G_1=(V_1,E_1)$ with $s_1,t_1 \in V_1$ and $G_2=(V_2,E_2)$ with $s_2,t_2 \in V_2$, the parallel composition $G=G_1 \parallel G_2$ is the network formed by identifying the nodes $s_1$ and $s_2$ (forming a new source $s$) and the nodes $t_1$ and $t_2$ (forming a new sink $t$) in the union network $G'=(V_1 \cup V_2,E_1 \cup E_2)$.
\end{itemize}

The following classes of network topologies are of special interest:
\begin{itemize}
\item A network $G=(V,E)$ is an extension-parallel (EP) network if one of the following applies:
\begin{itemize}
\item $G$ consists of a single edge.
\item There are two EP networks $G_1,G_2$ such that $G = G_1 \parallel G_2$.
\item There is an EP network $G_1$ and an edge $e$ such that $G = G_1 \rightarrow e$ or $G = e \rightarrow G_1$.
\end{itemize}
\item A network $G=(V,E)$ is a series of parallel-paths (SPP) network if it consists of some edges in parallel composition $G = e_1 \parallel e_2 \parallel ... \parallel e_k$, or if there are two SPP networks $G_1 , G_2$ such that $G = G_1 \rightarrow G_2$.
\end{itemize}

We note that the classes of EP and SPP networks are not comparable, that is, an SPP network is not necessarily an EP one, and vise versa.

\section{Inefficient deviator rule in a job scheduling game with conflicting congestion effects}
\label{app:poor}
\begin{theorem}
The inefficiency of a deviator rule in a job scheduling game with conflicting congestions effects can be $\Omega( B^{ \frac{ 1 }{ 3 } } )$.
\end{theorem}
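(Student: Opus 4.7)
The plan is to exploit the characterization of reachable Nash equilibria in Claim~\ref{clm:lastingFormula} to build an instance whose initial profile admits both a Nash equilibrium with $\Theta(B^{1/3})$ active machines (the ``good'' NE) and a Nash equilibrium where essentially all jobs are absorbed onto a single machine (the ``bad'' NE), and then to define a deviator rule whose BR sequence necessarily reaches the bad NE. Concretely, I would pick $B$ large enough that $l^{*}=\lfloor\sqrt{B}\rfloor$ and $B^{1/3}$ are large integers, set $k=\Theta(B^{1/3})$ and $n=k\cdot l^{*}-1=\Theta(B^{5/6})$ unit jobs on $m=k$ initially active machines (with additional empty machines allowed). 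The initial profile $p^{0}$ is asymmetric: $M_{1}$ has load $l^{*}-1$ while $M_{2},\ldots,M_{k}=M_{m}$ each have load $l^{*}$. This small asymmetry breaks the trivial equilibrium and sets the dynamics in motion.

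For the good NE, a direct calculation with these loads shows that every $M_{j}$ with $j\ge 2$ satisfies the activity condition $(n-l_{j})/(m-j)>B/(l_{j}+1)$, while $M_{1}$ does not. Claim~\ref{clm:lastingFormula} then yields a reachable NE in which $M_{2},\ldots,M_{m}$ are all active; Claim~\ref{clm:balanced} pins the loads in this NE to values in $\{l^{*},l^{*}+1\}$, so the makespan is $c(l^{*}+1)=\Theta(\sqrt{B})$.

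For the bad NE, I observe that placing all $n$ jobs on $M_{m}$ is stable, since $n=O(B^{5/6})\ll B$ implies $c(n)<c(1)$, so no job prefers an empty machine, and I argue that the concentrated profile is reachable from $p^{0}$ by the following trajectory. Starting from the single suboptimal job on $M_{1}$, I drive a sequence of best-response moves that pushes excess load into $M_{m}$ while emptying the intermediate $M_{2},\ldots,M_{k-1}$ one-by-one. The deviator rule $S$ is defined so that, at every profile arising on this trajectory, it selects the specific suboptimal job whose best response is the next absorption step into $M_{m}$; the inequality $c(L_{M_{m}}+1)<c(L+1)$ for any competing active machine's load $L$ is maintained throughout by monotonicity of $c$ on $[l^{*},B]$ together with the uniform bound $L_{M_{m}}\le n\ll B$, which makes each of these best responses strict. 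Off the trajectory, $S$ may be extended arbitrarily. The resulting inefficiency ratio is $c(n)/c(l^{*}+1)=\Theta(B^{5/6})/\Theta(\sqrt{B})=\Theta(B^{1/3})$, as claimed.

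The main obstacle is realizing the bad trajectory as an honest BR sequence: while Claim~\ref{clm:lastingFormula} guarantees the existence of a NE in which only $M_{m}$ is active, engineering a BR path from $p^{0}$ to it requires ordering the emptying of $M_{1},\ldots,M_{k-1}$ so that, at every step, some suboptimal job has a strict best response that empties its intermediate machine into $M_{m}$'s chain rather than shuffling mass laterally among surviving intermediate machines (which would keep extra machines active and hence avoid the bad NE). Verifying the cost inequalities $c(L_{M_{m}}+1)<c(L+1)$ over the range of intermediate loads $L\in\{l^{*}-1,l^{*},l^{*}+1,\ldots\}$ encountered along the trajectory, together with a coherent tie-breaking scheme that keeps these best responses strict, is the technical heart of the argument.
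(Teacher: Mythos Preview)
Your construction has a fatal gap: the bad Nash equilibrium you aim for is not reachable from your initial profile. In $p^{0}$ you set $L_{M_{2}}(p^{0})=\cdots=L_{M_{k}}(p^{0})=l^{*}$, so every one of $M_{2},\ldots,M_{k}$ is a \emph{high} machine. Observation~\ref{observation:HighStays} then guarantees that none of them is ever emptied in any BR sequence starting from $p^{0}$. Consequently every $p\in NE(p^{0})$ has at least $k-1=\Theta(B^{1/3})$ active machines, and by Claim~\ref{clm:balanced} the loads in any such $p$ are within one of each other, hence of order $n/(k-1)=\Theta(l^{*})$. Thus \emph{every} reachable NE has makespan $\Theta(\sqrt{B})$, and the inefficiency of every deviator rule on your instance is $O(1)$, not $\Omega(B^{1/3})$.

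The specific mechanism you propose for the bad trajectory also fails on its own terms. You claim that $c(L_{M_{m}}+1)<c(L+1)$ for competing loads $L$ ``by monotonicity of $c$ on $[l^{*},B]$'', but the monotonicity goes the other way: $c$ is \emph{increasing} on $[l^{*},\infty)$, so once $M_{m}$ carries more load than some other high machine it becomes \emph{less} attractive, not more. After the very first migration into $M_{m}$ (bringing its load to $l^{*}+1$ while $M_{2},\ldots,M_{k-1}$ sit at $l^{*}$), the best response of any remaining suboptimal job is one of $M_{2},\ldots,M_{k-1}$, not $M_{m}$. The cascade you describe cannot be realized by best responses.

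The paper's instance sidesteps both obstacles by placing the many machines at a \emph{low} load $B^{1/3}\ll l^{*}$ and adding a single heavy machine of load $B$. Because $c$ is decreasing on $(0,l^{*})$, dumping jobs from $M_{1}$ into the low machine $M_{B^{1/3}}$ makes the latter \emph{more} attractive, so subsequent best responses continue to funnel into it, emptying $M_{1},M_{2},\ldots$ in turn. The good NE keeps all $B^{1/3}+1$ machines active (verified via Claim~\ref{clm:lastingFormula}), the bad deviator rule collapses down to two machines, and the ratio of makespans is $\Theta(B^{1/3})$. The key structural idea you are missing is that the collapsible machines must start \emph{below} $l^{*}$.
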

\begin{proof}
Let $B$ be an activation cost such that $B^{ \frac{ 1 }{ 3 } }$ is an integer. The initial profile consists of $B^{ \frac{ 1 }{ 3 } } + 1$ machines, having initial loads $l_1 = l_2 = ... = l_{ B^{ \frac{ 1 }{ 3 } } }= B^{ \frac{ 1 }{ 3 } }$, and $l_{ B^{ \frac{ 1 }{ 3 } } + 1 } = B$. Notice that the condition specified in Lemma \ref{clm:lastingFormula} holds for every $M_j,~~1 \le j \le B^{ \frac{ 1 }{ 3 } }$. Specifically,
\begin{displaymath}
\frac{ n - l_j }{ m - j } =
\frac{ B + B^{ \frac{ 1 }{ 3 } } ( B^{ \frac{ 1 }{ 3 } } - 1 ) }{ B^{ \frac{ 1 }{ 3 } } -j+1 } >
\frac{ B }{ B^{ \frac{ 1 }{ 3 } } + 1 } =
\frac{ B }{ l_j + 1 }.
\end{displaymath}
In addition, by Observation \ref{observation:HighStays}, the loaded machine will clearly remain active in any NE, since it is high in $p^0$.
We conclude that there exists a BR-sequence in which all $B^{ \frac{ 1 }{ 3 } } + 1$ machines remain active. In particular, a deviator rule that always selects a high machine will lead to a NE in which the jobs are balanced among all $m$ machines. Another possible BR-sequence lets the jobs migrate from $M_1$ to $M_{ B^{ \frac{ 1 }{ 3 } } }$, then from $M_2$, etc. Note that after the first migration $M_{ B^{ \frac{ 1 }{ 3 } } }$ is the only best response for all of the jobs. One can verify that when $M_{ B^{ \frac{ 1 }{ 3 } } - 2 }$ is emptied the jobs on $M_{ B^{ \frac{ 1 }{ 3 } } - 1 }$ can still benefit from migration to $M_{ B^{ \frac{ 1 }{ 3 } } }$ since $c( B^{ \frac{ 1 }{ 3 } } ) = c( \frac{ B }{ B^{ \frac{ 1 }{ 3 } } } ) = c( B^{ \frac{ 2 }{ 3 } } ) > c( B^{ \frac{ 1 }{ 3 } }( B^{ \frac{ 1 }{ 3 } } - 1 ) + 1 )$. Therefore, the only active machines in the NE reached are $M_{ B^{ \frac{ 1 }{ 3 } } }, M_{ B^{ \frac{ 1 }{ 3 } } + 1 }$, i.e., only two active machines. The inefficiency of this sequence is:
\begin{displaymath}
\frac{ c( \frac{ n }{ 2 } ) }
{ c( \frac{ n }{ B^{ \frac{ 1 }{ 3 } } + 1 } ) } =
\frac{ c( \frac{ B + B^{ \frac{ 1 }{ 3 } } \cdot B^{ \frac{ 1 }{ 3 } } }{ 2 } ) }{ c( \frac{ B + B^{ \frac{ 1 }{ 3 } } \cdot B^{ \frac{ 1 }{ 3 } } }{ B^{ \frac{ 1 }{ 3 } } + 1 } ) } =
\frac{ c( B^{ \frac{ 2 }{ 3 } } \frac{ B^{ \frac{ 1 }{ 3 } } + 1 }{ 2 } ) }{ c( B^{ \frac{ 2 }{ 3 } } ) } =
\frac{ B^{ \frac{ 2 }{ 3 } } \frac{ B^{ \frac{ 1 }{ 3 } } + 1 }{ 2 } + \frac{ 2 B^{ \frac{ 1 }{ 3 } } }{ B^{ \frac{ 1 }{ 3 } } + 1 } }{ B^{ \frac{ 2 }{ 3 } } + B^{ \frac{ 1 }{ 3 } } } \geq
\frac{ B^{ \frac{ 2 }{ 3 } } \frac{ B^{ \frac{ 1 }{ 3 } } + 1 }{ 2 } }{ B^{ \frac{ 1 }{ 3 } } ( B^{ \frac{ 1 }{ 3 } } + 1 ) } =
\frac{ B^{ \frac{ 1 }{ 3 } } }{ 2 }
\end{displaymath}
\end{proof}


\begin{thebibliography}{100}

\bibitem{ARV09}
H.~Ackermann, H.~R\"{o}glin, B.~V\"{o}cking. Pure Nash Equilibria in Player-Specific and Weighted
Congestion Games. {\em Theor. Comput. Sci.}, 410(17):1552--1563, 2009.

\bibitem{AD+08}
E.~Anshelevich, A.~Dasgupta, J.~Kleinberg, E.~Tardos, T.~Wexler, T.~Roughgarden.
\newblock The Price of Stability for Network Design with Fair Cost Allocation.
\newblock {\em SIAM J. Comput.}, 38(4):1602--1623, 2008.


\bibitem{AEEMR06}
S.~Albers, S.~Elits, E.~Even-Dar, Y.~Mansour, and L.~Roditty.
\newblock {On Nash Equilibria for a Network Creation Game}.
\newblock In {\em Proc. 17th SODA}, pages 89-98, 2006.

\bibitem{A+10}
N.~Alon, E.D.~Demaine, M.~Hajiaghayi, T.~Leighton.
\newblock Basic Network Creation Games.
\newblock {\em Proc. of the 22nd ACM Symposium on Parallelism in Algorithms and Architectures}, pages 106-–113, 2010.

\bibitem{AKT14}
G.~Avni, O.~Kupferman, and T.~Tamir.
\newblock Network-Formation Games with Regular Objectives.
\newblock {\em J. of Information and Computation}, 251:165--178, 2016.

\bibitem{AT16}
G.~Avni and T.~Tamir, Cost-Sharing Scheduling Games on Restricted Unrelated Machines.
{\em Theoretical Computer Science}, 646:26–-39, 2016.

\bibitem{BF+11}
N.~Berger, M.~Feldman, O.~Neiman, M.~Rosenthal
\newblock Dynamic Inefficiency: Anarchy without Stability
\newblock In {\em Proc. of SAGT}, 2011.

\bibitem{B+15}
V.~Bil\'{o}, M.Flammini, G.~Monaco, L.~Moscardelli
\newblock Computing Approximate Nash Equilibria in Network Congestion Games with Polynomially Decreasing Cost Functions
\newblock {\em Web and Internet Economics}, pages 118--131, 2015.


\bibitem{CF+11}
I.~Caragiannis, A.~Fanelli, N.~Gravin, A.~Skopalik.
\newblock Efficient Computation of Approximate Pure Nash Equilibria in Congestion Games.
\newblock {\em In Proc. of FOCS}, pp. 532–-541, 2011.


\bibitem{CG11}
B.~Chen and S.~G{\"u}rel.
\newblock Efficiency Analysis of Load Balancing Games with and without Activation Costs.
\newblock {\em Journal of Scheduling}, 15(2), pp. 157--164, 2011.

\bibitem{CS11}
S.~Chien and A.~Sinclair.
\newblock Convergence to Approximate Nash Equilibria in Congestion Games.
\newblock {\em Games and Economic Behavior} 71(2): 315--327, 2011.


\bibitem{CR09}
H.~Chen and T.~Roughgarden. Network Design with Weighted Players,
\emph{Theory of Computing Systems}, 45(2), 302--324, 2009.

\bibitem{CV02}
A.~Czumaj and B.~V{\"o}cking.
\newblock Tight Bounds for Worst-Case Equilibria.
\newblock In {\em Proc. of SODA},
  pp. 413--420, 2002.

\bibitem{SB16}
S.~Durand and B.~Gaujal.
\newblock Complexity and Optimality of the Best Response Algorithm in Random Potential Games.
\newblock \newblock In {\em Proc. of SAGT}, 2016.



\bibitem{E+07}
A.~Epstein, M.~Feldman, Y.~Mansour.
\newblock Strong Equilibrium in Cost Sharing Connection Games.  \newblock In {\em Proc. of the 8th ACM conference on Electronic commerce}, 2007.


\bibitem{EA+03}
E.~Even-Dar, A.~Kesselman, Y.~Mansour.
\newblock Convergence Time to Nash Equilibria.
\newblock In {\em Proc of ICALP}, pp. 502--513, 2003.

\bibitem{EM05}
E.~Even-Dar and Y.~Mansour.
\newblock Fast Convergence of Selfish Rerouting.
\newblock In {\em Proc. of SODA}, pp. 772--781, 2005.

\bibitem{FLMPS03}
A.~Fabrikant, A.~Luthra, E.~Maneva, C.~Papadimitriou, and S.~Shenker.
\newblock On a Network Creation Game.
\newblock In {\em Proc. of PODC}, pages 347-351, 2003.

\bibitem{FPT04}
A.~Fabrikant, C.~Papadimitriou, K.~Talwar.
\newblock The Complexity of Pure Nash Equilibria.
\newblock In {\em Proc. of STOC}, pp. 604--612, 2004.

\bibitem{FF15}
M.~Feldman, O.~Friedler.
\newblock Convergence to Strong Equilibrium in Network Design Games.
\newblock {\em ACM SIGMETRICS Performance Evaluation Review}, 43(3): 71--71, 2015.


\bibitem{FG15}
M.~Feldman and O.~Geri.
Do Capacity Constraints Constrain Coalitions?
In {\em Proc. of AAAI}, pp. 879--885, 2015.

\bibitem{FT15}
 M.~Feldman and T.~Tamir.
\newblock Convergence of Best-Response Dynamics in Games with Conflicting Congestion Effects.
{\em Information Processing Letters} 115(2):112--118, 2015.


\bibitem{FT12}
M.~Feldman and T.~Tamir.
Conflicting Congestion Effects in Resource Allocation Games.
{\em Journal of Operation Research}. 60(3):529--540, 2012.

\bibitem{FH79}
G.~Finn and E.~Horowitz.
\newblock {A Linear Time Approximation Algorithm for Multiprocessor Scheduling.}
\newblock In {\em BIT}, 19(3):312--320, 1979.

\bibitem{F10}
D.~Fotakis.
\newblock Congestion Games with Linearly Independent Paths: Convergence Time and Price of Anarchy.
\newblock {\em Theory Comput. Syst.}, 47(1):113--136, 2010.


\bibitem{HK12}
T.~Harks and M.Klimm.
On the Existence of Pure Nash Equilibria in Weighted Congestion Games
{\em Mathematics of Operations Research}, 37(3):419--436, 2012.

\bibitem{HM14}
R.~Holzman, D.~Monderer.
\newblock Strong Equilibrium in Network Congestion Games: Increasing Versus Decreasing Costs.
\newblock {\em International Journal of Game Theory}, pages 65--67, 2014.

\bibitem{IM+05}
S.~Ieong, R.Mcgrew, E.~Nudelman, Y.~Shoham, Q.~Sun,
\newblock Fast and Compact: A Simple Class of Congestion Games.
\newblock In {\em Proc. of AAAI}, pp. 489--494, 2005.

\bibitem{JM+16}
J.~de Jong, M.~Klimm, M.~Uetz.
\newblock Efficiency of Equilibria in Uniform Matroid Congestion Games.
\newblock In {\em Proc. of SAGT}, 2016.

\bibitem{KL13}
B.~Kawald and P.~Lenzner.
\newblock On Dynamics in Selfish Network Creation.
\newblock {\em In Proc. of SPAA}, pp. 83–-92, 2013.


\bibitem{KP99}
E.~Koutsoupias and C.~Papadimitriou.
\newblock Worst-case Equilibria.
\newblock \emph{Computer Science Review}, 3(2): 65-69, 1999.


\bibitem{MS96}
D.~Monderer and L.~S.~Shapley.
\newblock {Potential Games}.
\newblock \emph{Games and Economic Behavior}, 14: 124--143, 1996.

\bibitem{Mil96}
I.~Milchtaich. Congestion Games with Player Specific Payoff Functions.
{\em Games and Economic Behavior}, 13:111-124, 1996.

\bibitem{Pap01}
C.~H.~Papadimitriou.
\newblock Algorithms, Games, and the Internet.
\newblock In {\em Proc.\ 33rd STOC}, pp. 749--753, 2001.

\bibitem{Ros73}
R.W. Rosenthal.
\newblock A Class of Games Possessing Pure-Strategy {Nash} Equilibria.
\newblock {\em International Journal of Game Theory}, 2:65--67, 1973.

\bibitem{Syr10}
V.~Syrgkanis.
\newblock The Complexity of Equilibria in Cost Sharing Games.
\newblock {\em In Proc. 6th WINE}, pp. 366--377. 2010.

\bibitem{TW07}
E.~Tardos and T.~Wexler. Chapter 19: Network Formation Games and the Potential Function Method,
\newblock In \emph{Algorithmic Game Theory},
\newblock Cambridge University Press, 2007.

\bibitem{Voc07}
B.~V{\"o}cking. Chapter 20: Selfish load balancing.
\newblock In \emph{Algorithmic Game Theory},
\newblock Cambridge University Press, 2007.





\end{thebibliography}
\end{document}